\patchcmd{\maketitle}{\@copyrightspace}{}{}{}
\definecolor{mygreen}{RGB}{0,128,0}
\newcommand{\cmark}{\text{\color{mygreen}\ding{51}}}
\newcommand{\xmark}{\text{\color{red}\ding{55}}}
\definecolor{cerulean}{RGB}{0,123,167}
\newcommand{\textttb}[1]{\texttt{\color{cerulean}#1}}
\newcommand{\blue}[1]{{\color{cerulean}#1}}
\newcommand{\red}[1]{{\color{red}#1}}
\newcommand{\textttr}[1]{\texttt{\color{red}#1}}
\newcommand{\toolname}{{\sc Atlas}\xspace}
\newcommand{\New}{\text{New}}
\newcommand{\Assign}{\text{Assign}}
\newcommand{\Store}{\text{Store}}
\newcommand{\Load}{\text{Load}}
\newcommand{\FlowsTo}{\text{FlowsTo}}
\newcommand{\Alias}{\text{Alias}}
\newcommand{\Transfer}{\text{Transfer}}
\newcommand{\sss}{\hspace{0.1in}}
\newcommand*{\da@rightarrow}{\mathchar"0\hexnumber@\symAMSa 4B }
\newcommand*{\da@leftarrow}{\mathchar"0\hexnumber@\symAMSa 4C }
\newcommand*{\xdashrightarrow}[2][]{%
  \mathrel{%
    \mathpalette{\da@xarrow{#1}{#2}{}\da@rightarrow{\,}{}}{}%
  }%
}
\newcommand{\xdashleftarrow}[2][]{%
  \mathrel{%
    \mathpalette{\da@xarrow{#1}{#2}\da@leftarrow{}{}{\,}}{}%
  }%
}
\newcommand*{\da@xarrow}[7]{%
  \sbox0{$\ifx#7\scriptstyle\scriptscriptstyle\else\scriptstyle\fi#5#1#6\m@th$}%
  \sbox2{$\ifx#7\scriptstyle\scriptscriptstyle\else\scriptstyle\fi#5#2#6\m@th$}%
  \sbox4{$#7\dabar@\m@th$}%
  \dimen@=\wd0 %
  \ifdim\wd2 >\dimen@
    \dimen@=\wd2 %
  \fi
  \count@=2 %
  \def\da@bars{\dabar@\dabar@}%
  \@whiledim\count@\wd4<\dimen@\do{%
    \advance\count@\@ne
    \expandafter\def\expandafter\da@bars\expandafter{%
      \da@bars
      \dabar@ 
    }%
  }%
  \mathrel{#3}%
  \mathrel{%
    \mathop{\da@bars}\limits
    \ifx\\#1\\%
    \else
      _{\copy0}%
    \fi
    \ifx\\#2\\%
    \else
      ^{\copy2}%
    \fi
  }%
  \mathrel{#4}%
}
\begin{document}

\title{Active Learning of Points-To Specifications}


\author{Osbert Bastani}
\affiliation{
  \institution{Stanford University}            
  \country{USA}
}
\email{obastani@cs.stanford.edu}          

\author{Rahul Sharma}
\affiliation{
  \institution{Microsoft Research}            
  \country{India}
}
\email{rahsha@microsoft.com}          

\author{Alex Aiken}
\affiliation{
  \institution{Stanford University}            
  \country{USA}
}
\email{aiken@cs.stanford.edu}          

\author{Percy Liang}
\affiliation{
  \institution{Stanford University}            
  \country{USA}
}
\email{pliang@cs.stanford.edu}          



\begin{abstract}
When analyzing programs, large libraries pose significant challenges to static points-to analysis. A popular solution is to have a human analyst provide \emph{points-to specifications} that summarize relevant behaviors of library code, which can substantially improve precision and handle missing code such as native code. We propose \toolname, a tool that automatically infers points-to specifications. \toolname synthesizes unit tests that exercise the library code, and then infers points-to specifications based on observations from these executions. \toolname automatically infers specifications for the Java standard library, and produces better results for a client static information flow analysis on a benchmark of 46 Android apps compared to using existing handwritten specifications.
\end{abstract}

\begin{CCSXML}
<ccs2012>
<concept>
<concept_id>10003752.10010124.10010138.10010143</concept_id>
<concept_desc>Theory of computation~Program analysis</concept_desc>
<concept_significance>300</concept_significance>
</concept>
</ccs2012>
\end{CCSXML}

\ccsdesc[300]{Theory of computation~Program analysis}

\keywords{specification inference, static points-to analysis}  

\maketitle

\section{Introduction}

When analyzing programs, dependencies on large libraries can significantly reduce the effectiveness of static analysis, since libraries frequently contain (i) native code that cannot be analyzed, (ii) use of challenging language features such as reflection, and (iii) deep call hierarchies that reduce precision. For example, the implementation of the \texttt{Vector} class in OpenJDK 1.7 uses multiple levels of call indirection and calls the native function \texttt{System.arrayCopy}.

A standard workaround is to use \emph{specifications} that summarize the relevant behaviors of library functions~\cite{zhu2013automated,facebook2017}. For a one-time cost of writing library specifications, the precision and soundness of the static analysis can improve dramatically when analyzing any client code. However, writing specifications for the entire library can be expensive~\cite{zhu2013automated,bastani2015specification}---many libraries contain a large number of functions, manually written specifications are often error prone~\cite{heule2016stratified}, and specifications must be updated whenever the library is updated.

To address these issues, approaches have been proposed for automatically \emph{inferring} specifications for library code, both based on dynamic analysis~\cite{nimmer2002automatic,alur2005synthesis,sharma2012interpolants,sharma2014invariant,bastani2015interactively} and on static analysis~\cite{ammons2002mining,kremenek2006uncertainty,ramanathan2007static,shoham2008static,livshits2009merlin,beckman2011probabilistic}. In particular, tools have been designed to infer properties of missing code, including taint flow properties~\cite{clapp2015modelgen}, function models~\cite{heule2015mimic,heule2016stratified}, and callback control flow~\cite{jeon2016synthesizing}. While these approaches are incomplete, and may not infer sound specifications, current static analyses used in production already rely on user-provided specifications~\cite{facebook2017}, and as we will show, tools that automatically infer specifications can outperform human analysts.

We propose an algorithm based on dynamic analysis that infers library specifications summarizing points-to effects relevant to a flow-insensitive points-to analysis. Our algorithm works by iteratively guessing candidate specifications and then checking whether each one is ``correct''. Intuitively, a specification is correct if it must be included in any sound set of specifications, i.e., it is precise. This property ensures that the specification does not cause any false positives. There are two constraints that make our problem substantially more challenging than previously studied settings:
\begin{itemize}
\item Points-to effects cannot be summarized for a library function in isolation, e.g., in Figure~\ref{fig:example}, \texttt{set}, \texttt{get}, and \texttt{clone} all refer to the same field \texttt{f}. Thus, a guessed candidate must simultaneously summarize the points-to effects of \texttt{set}, \texttt{get}, and \texttt{clone}. Furthermore, the inference algorithm may not know the library field \texttt{f} exists, and must invent a \emph{ghost field} to represent it.
\item It is often difficult to instrument library code (e.g., native code); we assume only blackbox access.
\end{itemize}

To address these challenges, we introduce \emph{path specifications} to describe points-to effects of library code. Each path specification summarizes a single points-to effect of a combination of functions. An example is:
\begin{displayquote}
For two calls \texttt{box.set(x)} and \texttt{box.get()}, the return value of \texttt{get} may alias \texttt{x}.
\end{displayquote}
Path specifications have two desirable properties: (i) we can check if a candidate path specification is correct using input-output examples, and (ii) correctness of path specifications is independent, i.e., a set of path specifications is correct if each specification in the set is correct. These two properties imply that we can infer path specifications incrementally. In particular, we propose an active learning algorithm that infers specifications by actively identifying promising candidate specifications, and then automatically checking the correctness of each candidate independently.

\begin{figure*}
\begin{minipage}{0.45\textwidth}
\footnotesize
\begin{verbatim}
class Box { // library
  Object f;
  void set(Object ob) { f = ob; }
  Object get() { return f; }
  Box clone() {
    Box b = new Box(); // o_clone
    b.f = f;
    return b; } }
\end{verbatim}
\end{minipage}
\begin{minipage}{0.45\textwidth}
\footnotesize
\begin{verbatim}
boolean test() { // program
  Object in = new Object(); // o_in
  Box box = new Box(); // o_box
  box.set(in);
  Object out = box.get();
  return in == out; }
\end{verbatim}
\end{minipage}
\caption{Implementation of the library methods \texttt{set}, \texttt{get}, and \texttt{clone} in the \texttt{Box} class (left), and an example of a program using these functions (right).}
\label{fig:example}
\end{figure*}

We implement our algorithm in a tool called \toolname,\footnote{\toolname stands for AcTive Learning of Alias Specifications, and is available at \url{https://github.com/obastani/atlas}.} which infers path specifications for functions in Java libraries. In particular, we evaluate \toolname by using it to infer specifications for the Java standard library, including the Collections API, which contains many functions that exhibit complex points-to effects. We evaluate \toolname using a standard static explicit information flow client~\cite{fuchs2009scandroid,feng2014apposcopy,arzt2014flowdroid,bastani2015interactively} on a benchmark of 46 Android apps. Our client uses handwritten points-to specifications for the Java standard library~\cite{zhu2013automated,bastani2015specification}. Over the course of two years, we have handwritten a number of points-to specifications tailored to finding information flows in apps we analyzed, including those in our benchmark. However, these specifications remain incomplete.

We use \toolname to infer specifications for classes in six commonly used packages in the Java standard library. \toolname inferred more than 5$\times$ as many specifications as the existing, handwritten ones. We show that using the specifications inferred by \toolname instead of the handwritten ones improves the recall of our information flow client by 52\%. Moreover, we manually wrote ground truth specifications for the most frequently used subset of classes in the Collections API, totalling 1,731 lines of code, and show that \toolname inferred the correct specifications (i.e., identical to handwritten ground truth specifications) for 97\% of these functions. Finally, we show that on average, using specifications more than halves the number of false positive points-to edges compared to directly analyzing the library implementation---in particular, the library implementation contains deep call hierarchies, which our points-to analysis is insufficiently context-sensitive to handle. 
Our contributions are:
\begin{itemize}
\item We propose path specifications and prove they are sufficiently expressive to precisely model library functions for a standard flow-insensitive points-to analysis.
\item We formulate the problem of inferring path specifications as a language inference problem and  design a novel specification inference algorithm. 
\item We implement our approach in \toolname, and use it to infer a large number of specifications for the Java standard library. We use these inferred specifications to automatically replicate and even improve the results of an existing static information flow client.
\end{itemize}

\section{Overview}
\label{sec:overview}

Consider the program \texttt{test} shown in Figure~\ref{fig:example}. We assume that the static analysis resolves aliasing using Andersen's analysis~\cite{andersen1994program}, a context- and flow-insensitive points-to analysis; our approach also applies to cloning-based context- and object-sensitive extensions~\cite{whaley2004cloning}. To determine that variables \texttt{in} and \texttt{out} may be aliased, the points-to analysis has to reason about the heap effects of the \texttt{set} and \texttt{get} methods in the \texttt{Box} class.
The analyst can write \emph{specifications} for library functions that summarize their heap effects with respect to the semantics of the points-to analysis. For example, consider the \texttt{Stack} class in the Java library: its implementation extends \texttt{Vector}, which is challenging to analyze due to deep call hierarchies and native code. The following are the specifications for the \texttt{Stack} class, implemented as code fragments that overapproximate the heap effects of the methods:
\begin{small}
\begin{verbatim}
class Stack { // specification
  Object f; // ghost field
  void push(Object ob) { f = ob; }
  Object pop() { return f; } }
\end{verbatim}
\end{small}
In contrast to the implementation, the specifications for \texttt{Stack} are simple and easy to analyze. In fact, when used with our static points-to analysis, these specifications are not only sound but also precise, since our points-to analysis is flow-insensitive and collapses arrays into a single field.

A typical approach is to write specifications lazily~\cite{zhu2013automated,bastani2015specification}---the analyst examines the program being analyzed, identifies the library functions most relevant to the static analysis, and writes specifications for them. The effort invested in writing specifications helps reduce the labor required to discharge false positives.\footnote{In contrast to the recurring cost of debugging false positives, the cost of implementing specifications is amortized over many programs.} The analyst can trade off between manual effort and soundness by expending more effort as needed to increase the completeness of the library specifications.

\toolname helps bootstrap the specification writing process by automatically inferring points-to specifications. In accordance with the goal of minimizing false positives, \toolname employs a two-phase approach that prioritizes the precision of the specifications it infers (where precision is defined with respect to our points-to analysis). In the first phase, \toolname infers specifications guaranteed to be precise. In particular, each inferred specification $s$ comes with a \emph{witness}, which intuitively is a unit test that proves the precision of $s$ by exhibiting the heap effects specified by $s$. In the second phase, \toolname inductively generalizes the specifications inferred in the first phase, using a large number of checks to minimize the possibility of imprecision.

\paragraph{{\bf\em Specification search space.}}

The main challenge in inferring points-to specifications is to formulate the search space of candidate specifications. Na\"{i}vely searching over the space of code fragments is ineffective because specifications are highly interdependent, i.e., the specification of a library function is dependent on the specifications of other functions. For example, the specifications for each of the methods in the \texttt{Box} class all refer to the shared field \texttt{f}. Thus, to infer precise specifications, the na\"{i}ve algorithm would have to guess a code fragment for every function in the library (or at least in a class) before it can check for precision; the likelihood that it could make a correct guess is tiny.

Our key insight is that while specifications cannot be broken at function boundaries, we can decompose them in a different way. In particular, we propose \emph{path specifications}, which are independent units of points-to effects. Intuitively, a path specification is a dataflow path an object might take through library code. An example of a path specification is
\begin{align}
\label{eqn:expathspec}
s_{\text{box}}=\texttt{ob}\dashrightarrow\texttt{this}_{\texttt{set}}\rightarrow\texttt{this}_{\texttt{get}}\dashrightarrow r_{\texttt{get}}.
\end{align}
This path specification roughly has the following semantics:
\begin{itemize}
\item Suppose that an object $o$ enters the library code as the parameter \texttt{ob} of \texttt{set}; then, it is associated with the receiver $\texttt{this}_{\texttt{set}}$ of the \texttt{set} method (the edge $\texttt{ob}\dashrightarrow\texttt{this}_{\texttt{set}}$).
\item Suppose that in the client program, $\texttt{this}_{\texttt{set}}$ is aliased with $\texttt{this}_{\texttt{get}}$ (the edge $\texttt{this}_{\texttt{set}}\rightarrow\texttt{this}_{\texttt{get}}$); then, $o$ enters the \texttt{get} method.
\item Then, $o$ exits the library as the return value $r_{\texttt{get}}$ of \texttt{get}.
\end{itemize}
In particular, the dashed edges $\texttt{ob}\dashrightarrow\texttt{this}_{\texttt{set}}$ and $\texttt{this}_{\texttt{get}}\dashrightarrow r_{\texttt{get}}$ represent the effects of library code, and the solid edge $\texttt{this}_{\texttt{set}}\rightarrow\texttt{this}_{\texttt{get}}$ represents an assumption about the effects of client code. Then, the path specification says that, if the points-to analysis determines that $\texttt{this}_{\texttt{get}}$ and $\texttt{this}_{\texttt{set}}$ are aliased, then \texttt{ob} is \emph{transfered} to $r_{\texttt{get}}$, which intuitively means that \texttt{ob} is indirectly assigned to $r_{\texttt{get}}$. More precisely, the semantics of $s_{\text{box}}$ is the following logical formula:
\begin{align}
\label{eqn:exlogicspec}
(\texttt{this}_{\texttt{set}}\xrightarrow{\Alias}\texttt{this}_{\texttt{get}})\Rightarrow(\texttt{ob}\xrightarrow{\Transfer}r_{\texttt{get}}).
\end{align}
Here, an edge $x\xrightarrow{A}y$ says that program variables $x$ and $y$ satisfy a binary relation $A$. We describe path specifications in detail in Section~\ref{sec:path}.

\paragraph{{\bf\em Testing specifications.}}

Path specifications satisfy two key requirements. The first requirement is that we can check the precision of a single path specification. For example, for the specification $s_{\text{box}}$, consider the program \texttt{test} shown in Figure~\ref{fig:example}. This unit test satisfies three properties:
\begin{itemize}
\item It satisfies the antecedent of $s_{\text{box}}$, since $\texttt{this}_{\texttt{set}}$ and $\texttt{this}_{\texttt{get}}$ are aliased.
\item It does not induce any other relations between the variables at the interface of the \texttt{Box} class (i.e., \texttt{ob}, $\texttt{this}_{\texttt{set}}$, \texttt{i}, $\texttt{this}_{\texttt{get}}$, and $r_{\texttt{get}}$).
\item It returns the consequent of $s_{\text{box}}$, i.e., whether \texttt{ob} and $r_{\texttt{get}}$ point to the same object.
\end{itemize}
Upon executing \texttt{test}, it returns that the consequent of $s_{\text{box}}$ is true; therefore, any sound set of specifications must include $s_{\text{box}}$, so $s_{\text{box}}$ is precise (we formalize soundness and precision of path specifications in Section~\ref{sec:path}). We refer to such a unit test as a \emph{witness} for the path specification. In summary, as long as we can find a witness for a candidate path specification $s$, then we can guarantee that $s$ is precise. One caveat is that even if \texttt{test} returns false, $s_{\text{box}}$ may still be precise.

Therefore, to check if a candidate path specification $s$ is precise, our algorithm can synthesize a unit test $P$ similar to \texttt{test} and execute it. If $P$ returns true, then our algorithm concludes that $s$ is precise; otherwise, it (conservatively) concludes that $s$ is imprecise. Note that $s$ may be precise even if $P$ returns false; this possibility is unavoidable since executions are underapproximations, i.e., $P$ may not exercise all paths of the relevant library functions. We use heuristics to ensure that our algorithm rarely concludes that precise specifications are imprecise. We describe our unit test synthesis algorithm in detail in Section~\ref{sec:testsynthesis}.

The second requirement of path specifications is that they are independent, i.e., given a set of path specifications for which each specification has a witness, then the set as a whole is precise in the sense that it is a subset of any sound set of path specifications. In other words, we can use a potential witness to check the precision of a path specification in isolation. Thus, our specification inference algorithm can focus on inferring individual path specifications.

\paragraph{{\bf\em Specification inference.}}

Our specification inference algorithm uses two inputs:
\begin{itemize}
\item {\bf Library interface:} The type signature of each function in the library.
\item {\bf Blackbox access:} The ability to execute sequences of library functions on chosen inputs and obtain the corresponding outputs.
\end{itemize}
In the first phase (described in Section~\ref{sec:sample}), \toolname randomly guesses a candidate path specification $s$, synthesizes a potential witness for $s$, and retains $s$ if this unit test returns true. This process is repeated a large number of times to obtain a large but finite set of precise path specifications.

To soundly model the library, an infinite number of path specifications may be required, e.g., the path specifications required to soundly model \texttt{set}, \texttt{get}, and \texttt{clone} are
\begin{align}
\label{eqn:exinfinite}
\texttt{ob}\dashrightarrow\texttt{this}_{\texttt{set}}\blue{(}&\to\texttt{this}_{\texttt{clone}}\dashrightarrow r_{\texttt{clone}}\blue{)^*} \\
&\to\texttt{this}_{\texttt{get}}\dashrightarrow r_{\texttt{get}}. \nonumber
\end{align}
These specifications say that if we call \texttt{set}, then call \texttt{clone} $n$ times in sequence, and finally call \texttt{get} (all with the specified aliasing between receivers and return values), then the parameter \texttt{ob} of \texttt{set} is transfered to the return value of \texttt{get}.

\begin{figure}
\centering
\footnotesize
\[\text{(assign)}~\dfrac{y\gets x}{x\xrightarrow{\Assign}y}\hspace{0.2in}\text{(allocation)}~\dfrac{o=(x\gets X())}{o\xrightarrow{\New}o}\]
\[\text{(store)}~\dfrac{y.f\gets x}{x\xrightarrow{\Store[f]}y}\hspace{0.2in}\text{(load)}~\dfrac{y\gets x.f}{x\xrightarrow{\Load[f]}y}\hspace{0.2in}\text{(backwards)}~\dfrac{x\xrightarrow{\sigma}y}{y\xrightarrow{\overline{\sigma}}x}\] \\
\[\text{(call parameter)}~\dfrac{y\gets m(x)}{x\xrightarrow{\Assign}p_m}\hspace{0.2in}\text{(call return)}~\dfrac{y\gets m(x)}{r_m\xrightarrow{\Assign}y}\]
\caption{Rules for constructing a graph $G$ encoding the relevant semantics of program statements.}
\label{fig:constructgraph}
\end{figure}

Thus, in the second phase (described in Section~\ref{sec:rpni}), we inductively generalize the set $S$ of path specifications in the first phase to a possibly infinite set. We leverage the idea that a path specification can be represented as a sequence of variables $s\in\mathcal{V}_{\text{path}}^*$, where $\mathcal{V}_{\text{path}}$ are the variables in the library interface---for example, $s_{\text{box}}$ corresponds to
\begin{align*}
  \texttt{ob}~\texttt{this}_{\texttt{set}}~\texttt{this}_{\texttt{get}}~r_{\texttt{get}}\in\mathcal{V}_{\text{path}}^*.
\end{align*}
Thus, a set of path specifications is a formal language over the alphabet $\mathcal{V}_{\text{path}}$. As a consequence, we can frame the inductive generalization problem as a language inference problem: given (i) the finite set of positive examples from phase one, and (ii) an oracle we can query to determine whether a given path specification $s$ is precise (though this oracle is \emph{noisy}, i.e., it may return false even if $s$ is precise), the goal is to infer a (possibly infinite) language $S\subseteq\mathcal{V}_{\text{path}}^*$.

We devise an active language learning algorithm to solve this problem. Our algorithm  proposes candidate inductive generalizations of $S$, and then checks the precision of these candidates using a large number of synthesized unit tests. Unlike phase one, we may introduce imprecision even if all the unit tests pass; we show empirically that precision is maintained. Finally, our algorithm infers a regular set of path specifications. In theory, no regular set may suffice to model the library code and more expressive language inference techniques might be required~\cite{bastani2017synthesizing}. In practice, we find that regular sets are sufficient.

In our example, suppose that phase one infers
\begin{align*}
  \texttt{ob}\dashrightarrow\texttt{this}_{\texttt{set}}&\to\texttt{this}_{\texttt{clone}}\dashrightarrow r_{\texttt{clone}} \\
  &\to\texttt{this}_{\texttt{clone}}\dashrightarrow r_{\texttt{clone}}\to\texttt{this}_{\texttt{get}}\dashrightarrow r_{\texttt{get}}.
\end{align*}
Then, phase two would inductively generalize it to (\ref{eqn:exinfinite}). Finally, our tool automatically converts these path specifications to the equivalent code fragment specifications shown in Figure~\ref{fig:example}. These code fragment specifications can be used in place of the (possibly unavailable) library implementation when analyzing client code.

\section{Background on Points-To Analysis}
\label{sec:background}

We consider programs with assignments $y\gets x$ (where $x,y\in\mathcal{V}$ are variables), allocations $x\gets X()$ (where $X\in\mathcal{C}$ is a type), stores $y.f\gets x$ and loads $y\gets x.f$ (where $f\in\mathcal{F}$ is a field), and calls to library functions $y\gets m(x)$ (where $m\in\mathcal{M}$ is a library function). For simplicity, we assume that each library function $m$ has a single parameter $p_m$ and a return value $r_m$.

An \emph{abstract object} $o\in\mathcal{O}$ is an allocation statement $o=(x\gets X())$. A \emph{points-to edge} is a pair $x\hookrightarrow o\in\mathcal{V}\times\mathcal{O}$. A static points-to analysis computes points-to edges $\Pi\subseteq\mathcal{V}\times\mathcal{O}$. Our results are for Andersen's analysis, a flow-insensitive points-to analysis~\cite{andersen1994program}, but generalize to object- and context-sensitive extensions based on cloning~\cite{whaley2004cloning}. We describe the formulation of Andersen's analysis as a context-free language reachability problem~\cite{reps1998program,kodumal2004set,kodumal2005banshee,sridharan2005demand,sridharan2006refinement}.

\begin{figure}
\footnotesize
\begin{align*}
\Transfer&\to\varepsilon\mid\Transfer~\Assign\mid\Transfer~\Store[f]~\Alias~\Load[f] \\
\overline{\Transfer}&\to\varepsilon\mid\overline{\Assign}~\overline{\Transfer}\mid\overline{\Load[f]}~\Alias~\overline{\Store[f]}~\overline{\Transfer} \\
\Alias&\to\overline{\Transfer}~\overline{\New}~\New~\Transfer \\
\FlowsTo&\to\New~\Transfer
\end{align*}
\caption{Productions for the context-free grammar $C_{\text{pt}}$. The start symbol of $C_{\text{pt}}$ is $\FlowsTo$.}
\label{fig:pointstogrammar}
\end{figure}

\begin{figure*}
\footnotesize
\begin{minipage}{0.45\textwidth}
\[
\begin{tikzcd}[column sep=2.0em,row sep=2.0em]
o_{\text{in}} \arrow{r}[xshift=1.2ex]{\New} \arrow[densely dotted,bend left=20]{rrr}{\FlowsTo}
& \texttt{in} \arrow{d}[xshift=-6.4ex]{\Assign} \arrow[densely dotted]{rr}{\Transfer}
&
& \texttt{out}
\\
& \textttr{ob} \arrow[red]{d}[xshift=-7.2ex]{\red{\Store[\texttt{f}]}} \arrow[densely dotted,bend left=20]{rr}{\Transfer}
& o_{\text{box}} \arrow[d,"{\New}"]
& \red{r_{\texttt{get}}} \arrow{u}[xshift=6.4ex]{\Assign}
\\
& \textttr{this}_{\textttr{set}} \arrow[densely dotted,bend right=20]{rr}{\Alias}
& \texttt{box} \arrow{l}[yshift=2.5ex]{\Assign} \arrow[r,"{\Assign}"]
\arrow[densely dotted,bend right=40]{l}[yshift=2.2ex]{\Transfer} \arrow[densely dotted,bend left=40]{r}[xshift=1.5ex]{\Transfer}
& \textttr{this}_{\textttr{get}} \arrow[red]{u}[xshift=7.2ex]{\red{\Load[\texttt{f}]}} \\
\end{tikzcd}
\]
\end{minipage}
\begin{minipage}{0.45\textwidth}
\[
\begin{tikzcd}[column sep=2.0em,row sep=2.0em]
o_{\text{in}} \arrow{r}[xshift=1.2ex]{\New} \arrow[densely dotted,bend left=20]{rrr}{\FlowsTo}
& \texttt{in} \arrow{d}[xshift=-6.4ex]{\Assign} \arrow[densely dotted]{rr}{\Transfer}
&
& \texttt{out}
\\
& \textttr{ob} \arrow[densely dotted,bend left=20]{rr}{\Transfer}
& o_{\text{box}} \arrow[d,"{\New}"]
& \red{r_{\texttt{get}}} \arrow{u}[xshift=6.4ex]{\Assign}
\\
& \textttr{this}_{\textttr{set}} \arrow[densely dotted,bend right=20]{rr}{\Alias}
& \texttt{box} \arrow{l}[yshift=2.5ex]{\Assign} \arrow[r,"{\Assign}"]
\arrow[densely dotted,bend right=40]{l}[yshift=2.2ex]{\Transfer} \arrow[densely dotted,bend left=40]{r}[xshift=1.5ex]{\Transfer}
& \textttr{this}_{\textttr{get}} \\
\end{tikzcd}
\]
\[(\red{\texttt{this}_{\texttt{set}}}\xrightarrow{\Alias}\red{\texttt{this}_{\texttt{get}}})\Rightarrow(\red{\texttt{ob}}\xrightarrow{\Transfer}\red{r_{\texttt{get}}})\]
\end{minipage}
\caption{The left-hand side shows the graph $\widetilde{G}$ computed by the static analysis with the library code, and the right-hand side shows $\widetilde{G}$ computed with path specifications (the relevant path specification is shown below the graph). The solid edges are the graph $G$ extracted from the program \texttt{test} shown in Figure~\ref{fig:example}. In addition, the dashed edges are a few of the edges in $\widetilde{G}$ when computing the transitive closure. We omit backward edges (i.e., with labels $\overline{A}$) for clarity. Vertices and edges corresponding to library code are highlighted in red.}
\label{fig:graph}
\end{figure*}

\paragraph{{\bf\em Graph representation.}}

First, our static analysis constructs a labeled graph $G$ representing the program semantics. The vertices of $G$ are $\mathcal{V}\cup\mathcal{O}$. The edge labels
\begin{align*}
\Sigma_{\text{pt}}=\{\Assign,\New,\Store,\Load,\overline{\Assign},\overline{\New},\overline{\Load},\overline{\Store}\}
\end{align*}
encode the semantics of program statements. The rules for constructing $G$ are in Figure~\ref{fig:constructgraph}. For example, the edges extracted for the program \texttt{test} in Figure~\ref{fig:example} are the solid edges in Figure~\ref{fig:graph}.

\paragraph{{\bf\em Transitive closure.}}

Second, our static analysis computes the \emph{transitive closure} $\widetilde{G}$ of $G$ according to the context-free grammar $C_{\text{pt}}$ in Figure~\ref{fig:pointstogrammar}. A \emph{path} $y\xdashrightarrow{\alpha}x$ in $G$ is a sequence
\begin{align*}
x\xrightarrow{\sigma_1}v_1\xrightarrow{\sigma_2}...\xrightarrow{\sigma_k}y
\end{align*}
of edges in $G$ such that $\alpha=\sigma_1...\sigma_k\in\Sigma_{\text{pt}}^*$. Then, $\widetilde{G}$ contains (i) each edge $x\xrightarrow{\sigma}y$ in the original graph $G$, and (ii) if there is a path $x\xdashrightarrow{\alpha}y$ in $G$ such that $A\xRightarrow{*}\alpha$ (where $A$ is a nonterminal of $C_{\text{pt}}$), the edge $x\xrightarrow{A}y$. Our static analysis computes $\widetilde{G}$ using dynamic programming; e.g., see~\cite{melski2000interconvertibility}.

The first production in Figure~\ref{fig:pointstogrammar} constructs the \emph{transfer} relation $x\xrightarrow{\Transfer}y$, which says that $x$ may be ``indirectly assigned'' to $y$. The second production constructs the ``backwards'' transfer relation. The third production constructs the \emph{alias} relation $x\xrightarrow{\Alias}y$, which says that $x$ may alias $y$. The fourth production computes the points-to relation, i.e., $x\hookrightarrow o$ whenever $o\xrightarrow{\FlowsTo}x\in\widetilde{G}$.

\section{Path Specifications}
\label{sec:path}

At a high level, a path specification encodes when edges in $G$ would have been connected by (missing) edges from the library implementation. In particular, suppose that our static analysis could analyze the library implementation, and that while computing the transitive closure $\widetilde{G}$, it contains a path
\begin{align}
\label{eqn:ppath}
z_1\xdashrightarrow{\beta_1}w_1\xrightarrow{A_1}z_2\xdashrightarrow{\beta_2}...\xrightarrow{A_{k-1}}z_k\xdashrightarrow{\beta_k}w_k.
\end{align}
Here, $z_1,w_1,...,z_k,w_k\in\mathcal{V}_{\text{path}}$ where $\mathcal{V}_{\text{path}}$ is the set of variables in the library interface (i.e., parameters and return values of library functions), $\beta_1,...,\beta_k\in\Sigma_{\text{pt}}^*$ are labels for paths corresponding to library code, and $A_1,...,A_{k-1}$ are nonterminals in $C_{\text{pt}}$ labeling relations computed so far. If
\begin{align*}
A\xRightarrow{*}\beta_1A_1...\beta_{k-1}A_{k-1}\beta_k,
\end{align*}
then our analysis adds $z_1\xrightarrow{A}w_k$ to $\widetilde{G}$. Path specifications ensure that our analysis adds such edges to $\widetilde{G}$ when the library code is unavailable (so the paths $z_i\xdashrightarrow{\beta_i}w_i$ are missing from $\widetilde{G}$); e.g., a path specification for (\ref{eqn:ppath}) says that if
\begin{align*}
w_1\xrightarrow{A_1}z_2,~...,~w_{k-1}\xrightarrow{A_{k-1}}z_k\in\widetilde{G},
\end{align*}
then the static analysis should add $z_1\xrightarrow{A}w_k$ to $\widetilde{G}$.

For example, while analyzing \texttt{test} in Figure~\ref{fig:example} with the library code on the right available, the analysis sees the path
\begin{align*}
\texttt{ob}\xrightarrow{\Store[\texttt{f}]}\texttt{this}_{\texttt{set}}\xrightarrow{\Alias}\texttt{this}_{\texttt{get}}\xrightarrow{\Load[\texttt{f}]}r_{\texttt{get}}.
\end{align*}
Since $\Transfer\xRightarrow{*}\Store[\texttt{f}]~\Transfer~\Load[\texttt{f}]$, the analysis adds edge $\texttt{ob}\xrightarrow{\Transfer}r_{\texttt{get}}$. As we describe below, the path specification $s_{\text{box}}$ shown in (\ref{eqn:expathspec}) ensures that this edge is added to $\widetilde{G}$ when the library code is unavailable.

\paragraph{{\bf\em Syntax.}}

Let $\mathcal{V}_{\text{prog}}$ be the variables in the program (i.e., excluding variables in the library), let $\mathcal{V}_m=\{p_m,r_m\}$ be the parameter and return value of library function $m\in\mathcal{M}$, and let $\mathcal{V}_{\text{path}}=\bigcup_{m\in\mathcal{M}}\mathcal{V}_m$ be the \emph{visible variables} (i.e., variables at the library interface). A \emph{path specification} is a sequence
\begin{align*}
z_1w_1z_2w_2...z_kw_k\in\mathcal{V}_{\text{path}}^*,
\end{align*}
where $z_i,w_i\in\mathcal{V}_{m_i}$ for library function $m_i\in\mathcal{M}$. We require that $w_i$ and $z_{i+1}$ are not both return values, and that $w_k$ is a return value. For clarity, we typically use the syntax
\begin{align}
\label{eqn:pathspec}
z_1\dashrightarrow w_1\to z_2\dashrightarrow...\dashrightarrow w_{k-1}\to z_k\dashrightarrow w_k.
\end{align}

\paragraph{{\bf\em Semantics.}}

\begin{figure*}
\scriptsize
\centering
\begin{tabular}{llll}
\hline
\\
\multicolumn{1}{c}{{\bf Candidate Code Fragment Specification}} & \multicolumn{1}{c}{{\bf Candidate Path Specification(s)}} & \multicolumn{2}{c}{{\bf Synthesized Unit Test}} \\\\
\hline
\\
\hspace{0.025in}
\begin{minipage}{2.0in}
\begin{verbatim}
class Box { // candidate specification
  Object f; // ghost field
  void set(Object ob) { f = ob; }
  Object get() { return f; } }
\end{verbatim}
\end{minipage}
&
\begin{minipage}{1.9in}
$\begin{array}{l}
\texttt{ob}\dashrightarrow\texttt{this}_{\texttt{set}}\to\texttt{this}_{\texttt{get}}\dashrightarrow r_{\texttt{get}}
\end{array}$
\end{minipage}
&
\begin{minipage}{1.7in}
\begin{verbatim}
boolean test() {
  Object in = new Object(); // o_in
  Box box = new Box(); // o_box
  box.set(in);
  Object out = box.get();
  return in == out; }
\end{verbatim}
\end{minipage}
&
\large\cmark
\hspace{0.025in}
\\\\
\hline
\\
\hspace{0.025in}
\begin{minipage}{2.0in}
\begin{verbatim}
class Box { // candidate specification
  Object f; // ghost field
  void set(Object ob) { f = ob; }
  Object clone() { return f; } }
\end{verbatim}
\end{minipage}
&
\begin{minipage}{1.9in}
$\begin{array}{l}
\texttt{ob}\dashrightarrow\texttt{this}_{\texttt{set}}\to\texttt{this}_{\texttt{clone}}\dashrightarrow r_{\texttt{clone}}
\end{array}$
\end{minipage}
&
\begin{minipage}{1.7in}
\begin{verbatim}
boolean test() {
  Object in = new Object(); // o_in
  Box box = new Box(); // o_box
  box.set(in);
  Object out = box.clone();
  return in == out; }
\end{verbatim}
\end{minipage}
&
\large\xmark
\hspace{0.025in}
\\\\
\hline
\\
\hspace{0.025in}
\begin{minipage}{2.0in}
\begin{verbatim}
class Box { // candidate specification
  Object f; // ghost field
  void set(Object ob) { f = ob; }
  Object get() { return f; }
  Box clone() {
    Box b = new Box(); // ~o_clone
    b.f = f;
    return b; } }
\end{verbatim}
\end{minipage}
&
\begin{minipage}{1.9in}
$\begin{array}{ll}
\texttt{ob}\dashrightarrow\texttt{this}_{\texttt{set}}&\hspace{-0.1in}\blue{(}\to\texttt{this}_{\texttt{clone}}\dashrightarrow r_{\texttt{clone}}\blue{)^*} \\
&\hspace{-0.1in}\to\texttt{this}_{\texttt{get}}\dashrightarrow r_{\texttt{get}}
\end{array}$
\end{minipage}
&
\begin{minipage}{1.7in}
\begin{verbatim}
boolean test0() {
  Object in = new Object(); // o_in
  Box box0 = new Box(); // o_box
  box0.set(in);
  Object out = box0.get();
  return in == out; }
\end{verbatim}
\begin{verbatim}
boolean test1() {
  Object in = new Object(); // o_in
  Box box0 = new Box(); // o_box
  box0.set(in);
  Box box1 = box0.clone();
  Object out = box1.get();
  return in == out; }
\end{verbatim}
\begin{verbatim}
boolean test2() {
  Object in = new Object(); // o_in
  Box box0 = new Box(); // o_box
  box0.set(in);
  Box box1 = box0.clone();
  Box box2 = box1.clone();
  Object out = box2.get();
  return in == out; }
\end{verbatim}
\begin{verbatim}
...
\end{verbatim}
\end{minipage}
&
\large\cmark
\hspace{0.025in}
\\\\
\hline
\end{tabular}
\caption{
Examples of hypothesized library implementations (left column), an equivalent set of path specifications (middle column), and the synthesized unit tests to check the precision of these specifications (right column), with a check mark $\cmark$ (indicating that the unit tests pass) or a cross mark $\xmark$ (indicating that the unit tests fail).
}
\label{fig:examples}
\end{figure*}

Given path specification (\ref{eqn:pathspec}), for each $i\in[k]$, define the nonterminal $A_i$ in the grammar $C_{\text{pt}}$ to be
\begin{align*}
A_i=
\begin{cases}
\Transfer&\text{if }w_i=r_{m_i}\text{ and }z_{i+1}=p_{m_{i+1}} \\
\Alias&\text{if }w_i=p_{m_i}\text{ and }z_{i+1}=p_{m_{i+1}} \\
\overline{\Transfer}&\text{if }w_i=p_{m_i}\text{ and }z_{i+1}=r_{m_{i+1}}.
\end{cases}
\end{align*}
Also, define the nonterminal $A$ by
\begin{align*}
A=\begin{cases}
\Transfer&\text{ if }z_1=p_{m_1} \\
\Alias&\text{ if }z_1=r_{m_1}.
\end{cases}
\end{align*}
Then, the path specification corresponds to adding a rule
\begin{align*}
\left(\bigwedge_{i=1}^{k-1}w_i\xrightarrow{A_i}z_{i+1}\in\widetilde{G}\right)\Rightarrow(z_1\xrightarrow{A}w_k\in\widetilde{G})
\end{align*}
to the static points-to analysis. The rule also adds the backwards edge $w_k\xrightarrow{\overline{A}}z_1$ to $\widetilde{G}$, but we omit it for clarity. We refer to the antecedent of this rule as the \emph{premise} of the path specification, and the consequent of this rule as the \emph{conclusion} of the path specification. Continuing our example, the path specification $s_{\text{box}}$ shown in (\ref{eqn:expathspec}) has semantics
\begin{align*}
(\texttt{this}_{\texttt{set}}\xrightarrow{\Alias}\texttt{this}_{\texttt{get}})\Rightarrow(\texttt{ob}\xrightarrow{\Transfer}r_{\texttt{get}}).
\end{align*}
This rule says that if the static analysis computes the edge $\texttt{this}_{\texttt{set}}\xrightarrow{\Alias}\texttt{this}_{\texttt{get}}\in\widetilde{G}$, then it must add $\texttt{ob}\xrightarrow{\Transfer}r_{\texttt{get}}$ to $\widetilde{G}$. For example, this rule is applied in Figure~\ref{fig:graph} (right) to compute $\texttt{ob}\xrightarrow{\Transfer}r_{\texttt{get}}$.

The middle column of Figure~\ref{fig:examples} shows examples of path specifications, and the first column shows equivalent code fragment specifications (the last column is described below). The specifications on the first and last rows are precise, whereas the specification on the second row is imprecise.

\paragraph{{\bf\em Soundness and precision.}}

Let $\widetilde{G}_*(P)$ denote the true set of relations for a program $P$ (i.e., relations that hold dynamically for some execution of $P$); note that because the library code is omitted from analysis, we only include relations between program variables in $\widetilde{G}_*(P)$. Then, given path specifications $S$, let $\widetilde{G}(P,S)$ denote the points-to edges computed using $S$ for $P$, let $\widetilde{G}_+(P,S)=\widetilde{G}(P,S)\setminus\widetilde{G}_*(P)$ be the false positives, and let $\widetilde{G}_-(P,S)=\widetilde{G}_*(P)\setminus\widetilde{G}(P,S)$ be the false negatives.

Specification set $S$ is \emph{sound} if $\widetilde{G}_-(P,S)=\varnothing$. We say $S$ and $S'$ are \emph{equivalent} (written $S\equiv S'$) if for every program $P$, $\widetilde{G}(P,S)=\widetilde{G}(P,S')$. Finally, $S$ is \emph{precise} if for every sound $S'$, $S'\cup S\equiv S'$. In other words, using $S$ computes no false positive points-to edges compared to any sound set $S'$. We say a single path specification $s$ is precise if $\{s\}$ is precise. We have the following result, which follows by induction:
\begin{theorem}
\label{thm:logic}
\rm
For any set $S$ of path specifications, if each $s\in S$ is precise, then $S$ is precise.
\end{theorem}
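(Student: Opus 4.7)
The plan is to induct on the cardinality of $S$ in the finite case, and then lift to the infinite case by appealing to the finitary nature of the fixed-point computation defining $\widetilde{G}(P, \cdot)$.

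For the base case $S = \varnothing$, any sound $S'$ trivially satisfies $S' \cup \varnothing = S'$. For the inductive step with $|S| = n+1$, pick any $s \in S$ and let $T = S \setminus \{s\}$ (so $|T| = n$, and each element of $T$ remains precise). By the inductive hypothesis, for every sound $S'$ we have $S' \cup T \equiv S'$. Equivalence preserves soundness: since $\widetilde{G}(P, S' \cup T) = \widetilde{G}(P, S') \supseteq \widetilde{G}_*(P)$ for every $P$, the set $S' \cup T$ is itself sound. Now apply the hypothesis that $s$ is precise to this sound set $S' \cup T$: by definition, $(S' \cup T) \cup \{s\} \equiv S' \cup T$. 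Chaining, $S' \cup S = (S' \cup T) \cup \{s\} \equiv S' \cup T \equiv S'$, completing the induction.

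To extend to arbitrary (possibly infinite) $S$, I would argue that every edge in $\widetilde{G}(P, S' \cup S)$ is produced by a finite derivation in the underlying fixed-point procedure: each derivation step either applies one production of $C_{\text{pt}}$ or instantiates one path specification in $S' \cup S$. Hence for every edge $e \in \widetilde{G}(P, S' \cup S)$ there is a finite $S_0 \subseteq S$ with $e \in \widetilde{G}(P, S' \cup S_0)$. Every $s \in S_0$ is precise (as an element of $S$), so the finite case gives $\widetilde{G}(P, S' \cup S_0) = \widetilde{G}(P, S')$, whence $e \in \widetilde{G}(P, S')$. The reverse inclusion $\widetilde{G}(P, S') \subseteq \widetilde{G}(P, S' \cup S)$ is immediate from monotonicity of the fixed-point computation in the set of applicable rules. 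Thus $S' \cup S \equiv S'$, i.e., $S$ is precise.

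The main obstacle is the infinite case: the definition of precision is purely semantic and does not automatically commute with infinite unions, so one must justify why adding arbitrarily many precise specifications cannot cascade into a genuinely new edge. The justification rests on the fact (inherited from Section~\ref{sec:background}) that $\widetilde{G}$ is the least fixed point of a monotone operator whose derivations are finite, so any ``new'' edge appearing only under the infinite $S$ would already appear under some finite subset---and the finite inductive argument rules that out. The finite inductive step itself is routine once one observes that equivalence transports soundness, which is the small but crucial bookkeeping point.
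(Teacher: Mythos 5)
Your proof is correct and takes the same route as the paper, which offers no details beyond the remark that the result ``follows by induction'': your finite induction on $|S|$, hinging on the observation that equivalence transports soundness so that $S'\cup T$ is itself a sound set to which the precision of $s$ can be applied, is exactly the intended argument. Your additional finite-derivation (compactness) step for infinite $S$ is a worthwhile piece of care that the paper's one-line remark glosses over, since the paper genuinely relies on infinite regular sets of path specifications.
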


\paragraph{{\bf\em Witnesses.}}

Our algorithm needs to synthesize unit tests that check whether a candidate path specification $s$ is precise. In particular, a program $P$ is a \emph{potential witness} for a candidate path specification $s$ if $P$ returns true only if $s$ is precise. If $P$ is a potential witness for $s$, and upon execution, $P$ in fact returns true, then we say $P$ is a \emph{witness} for $s$. The last column of Figure~\ref{fig:examples} shows potential witnesses for the candidate path specification in the middle column; a green check indicates the potential witness returns true (we say $P$ \emph{passes}), and a red check indicates it returns false or raises an exception (we say $P$ \emph{fails}). In particular, the synthesized unit test correctly rejects the imprecise specification on the second row.

Note that $P$ may return false even if $s$ is precise. This property is inevitable since executions are underapproximations; we show empirically that if $s$ is precise, then typically the potential witness synthesized by our algorithm passes.

\paragraph{{\bf\em Soundly and precisely modeling library code.}}

It is not obvious that path specifications are sufficiently expressive to precisely model library code. In this section, we show that path specifications are in fact sufficiently expressive to do so in the case of Andersen's analysis (and its cloning-based context- and object-sensitive extensions). More precisely, for any implementation of the library, there exists a (possibly infinite) set of path specifications such that the points-to sets computed using path specifications are both sound and at least as precise as analyzing the library implementation:
\begin{theorem}
\label{THM:EQUIV}
\rm
Let $\widetilde{G}(P)$ be the set of points-to edges computed for program $P$ assuming the library code is available. Then, there exists a set $S$ of path specifications such that for every program $P$, $\widetilde{G}(P,S)$ is sound and $\widetilde{G}(P,S)\subseteq\widetilde{G}(P)$.
\end{theorem}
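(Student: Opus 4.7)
The plan is to exhibit a set $S$ of path specifications explicitly and then verify the two containments (precision and soundness) by induction over grammar derivations in $C_{\text{pt}}$. Let $G_{\text{full}}(P)$ denote the graph built by the rules of Figure~\ref{fig:constructgraph} when the library code is available. Call an edge of $G_{\text{full}}(P)$ a \emph{library edge} if it arises from a library statement. A path in $G_{\text{full}}(P)$ is \emph{library-internal} if all of its edges are library edges and its only vertices in $\mathcal{V}_{\text{path}}$ are its endpoints. Define $S$ to contain the path specification $z_1 \dashrightarrow w_1 \to z_2 \dashrightarrow \cdots \dashrightarrow w_k$ exactly when there exist labels $\beta_1,\dots,\beta_k \in \Sigma_{\text{pt}}^*$ such that each $\beta_i$ labels a library-internal path from $z_i$ to $w_i$ in some $G_{\text{full}}(P)$, and
\[
A \xRightarrow{\ *\ } \beta_1\, A_1\, \beta_2\, A_2\, \cdots\, A_{k-1}\, \beta_k
\]
in $C_{\text{pt}}$, with $A, A_1, \dots, A_{k-1}$ determined from the specification as in Section~\ref{sec:path}. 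Since the library graph depends only on the library, this $S$ is well defined and does not depend on $P$.

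For the precision direction $\widetilde{G}(P,S) \subseteq \widetilde{G}(P)$, I would induct on the size of the derivation of an edge $e$ in $\widetilde{G}(P,S)$. The base case is that $e$ is a client edge of $G$, which is also present in $G_{\text{full}}(P)$, hence in $\widetilde{G}(P)$. In the inductive step, $e = (z_1 \xrightarrow{A} w_k)$ is produced by applying a rule from a specification $s \in S$ whose premises $w_i \xrightarrow{A_i} z_{i+1}$ lie in $\widetilde{G}(P,S)$ and, by the inductive hypothesis, in $\widetilde{G}(P)$. Splice the library-internal paths $\beta_i$ guaranteed by the construction of $s$ into these premises to obtain a path in $G_{\text{full}}(P)$ with a label derivable from $A$, which shows $e \in \widetilde{G}(P)$.

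For soundness, it suffices to show that $\widetilde{G}(P) \subseteq \widetilde{G}(P,S)$ on program vertices, because $\widetilde{G}(P)$ is already a sound over-approximation of $\widetilde{G}_*(P)$. Take an edge $x \xrightarrow{A} y$ of $\widetilde{G}(P)$ together with a witnessing path $\pi$ in $G_{\text{full}}(P)$ whose label $\gamma$ is derived from $A$. Mark the maximal library-internal subpaths of $\pi$: each such subpath connects two visible variables $z_i,w_i$ with an all-library label $\beta_i$, and between consecutive library blocks $\pi$ passes through a client-visible segment whose contracted label is a single nonterminal $A_i \in \{\Transfer,\Alias,\overline{\Transfer}\}$ (already computed in $\widetilde{G}(P,S)$ by induction on shorter derivations). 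Collecting the $z_i,w_i$ in order yields a sequence matching a path specification $s \in S$, and applying $s$ gives $x \xrightarrow{A} y \in \widetilde{G}(P,S)$. A symmetric argument handles the backward-edge case, and a base-case induction covers edges whose witnessing paths are entirely client-side.

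The main obstacle is the grammar-splitting step inside the soundness argument: $C_{\text{pt}}$ is not linear, so a derivation tree for $A \xRightarrow{*} \gamma$ does not automatically factor at positions of $\gamma$ corresponding to chosen visible-variable vertices of $\pi$. I would address this with an auxiliary normalization lemma showing that any derivation of $A \xRightarrow{*} \gamma$ can be rewritten so that the nonterminals labeling any prescribed cut points of $\gamma$ are precisely $\Transfer$, $\Alias$, or $\overline{\Transfer}$ in the pattern required by a path specification. The proof of this lemma is a case analysis on the four productions of Figure~\ref{fig:pointstogrammar}, using the fact that each cut lies either inside a $\Transfer$ recursion (yielding an inner $\Transfer$ or $\Alias$), inside the matching $\Alias$ production (yielding an inner $\Transfer/\overline{\Transfer}$), or at the boundary of the $\FlowsTo$ production. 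Once the normalization lemma is in place, the precision and soundness inductions are routine, and the fact that $S$ may be infinite causes no difficulty since the theorem asks only for existence.
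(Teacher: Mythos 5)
Your construction of $S$ and your treatment of the precision direction match the paper's (Appendix~\ref{sec:mainproof}): the paper likewise builds $S$ from library-internal paths $\beta_1,\dots,\beta_k$ with $A\xRightarrow{*}\beta_1\tilde{\alpha}_1\cdots\beta_k$ and dispatches $\widetilde{G}(P,S)\subseteq\widetilde{G}(P)$ by splicing the library paths back in. The gap is in the soundness direction, specifically in the normalization lemma you propose. You claim that between consecutive maximal library blocks the witnessing path traverses a client segment whose contracted label is a single nonterminal in $\{\Transfer,\Alias,\overline{\Transfer}\}$, and that any derivation can be rearranged to realize this at the prescribed cut points. This is false: a matched pair of \emph{program} field accesses $(\Store[g],\Load[g])$ can straddle one or more library blocks. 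For example, if the client stores an object into a program field $g$ of some container, passes that container through \texttt{add}/\texttt{get} of a library collection, and then loads $g$ from the retrieved reference, the $\Store[g]$ lies in the client segment preceding the first library block and the matching $\Load[g]$ lies in the segment following the last one. Those segments are not balanced strings, hence are not derivable from any single nonterminal of $C_{\text{pt}}$, and no rearrangement of the derivation tree can fix this---the obstruction is in the yield, not in the shape of the tree. Consequently such an edge cannot be accounted for by a single path-specification application glued to client-side nonterminal edges.

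The paper closes exactly this hole with a two-stage argument. It first proves the theorem for witnessing paths whose labels lie in $(\Sigma_{\text{free}}\cup\Sigma_{\text{lib}})^*$ (Theorem~\ref{THM:EQUIVfree}), using an algebraic characterization of the equivalence classes $[\alpha]$ of client segments (Lemmas~\ref{lem:assign} and~\ref{lem:multtable}) in place of your production-by-production case analysis; that part of your plan is workable. It then reduces the general case by repeatedly locating an innermost matched program pair $(\sigma,\tau)\in\Delta_{\text{prog}}$ whose enclosed label $\beta$ is program-field-free (Lemma~\ref{lem:min}), showing $[\sigma][\beta][\tau]=[\Assign]$ and $[\beta]=[\overline{\New}~\New]$ so that the enclosed subpath is an $\Alias$ edge already computable from $S$ by the restricted theorem (Lemmas~\ref{lem:match} and~\ref{lem:field}), contracting $\sigma\beta\tau$ to $\Assign$, and recursing; this step leans on Assumption~\ref{assump:disjoint} that library and program fields are disjoint. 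Your sketch needs this second, recursive stage (or an equivalent induction over the derivation tree that recurses into the $\Alias$ subtree under each program-field $\Store/\Load$ production); without it the soundness argument does not go through.
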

We give a proof in Appendix~\ref{sec:mainproof}. Note that the set $S$ of path specifications may be infinite. This infinite blowup is unavoidable since we want the ability to test the precision of an individual path specification. In particular, the library implementation may exhibit effects that require infinitely many unit tests to check precision (e.g., the path specifications shown on the third row of Figure~\ref{fig:examples}).

\paragraph{{\bf\em Regular sets of path specifications.}}

Since the library implementation may correspond to an infinite set of path specifications, we need a mechanism for describing such sets. In particular, since a path specification is a sequence $s\in\mathcal{V}_{\text{path}}^*$, we can think of a set $S$ of path specifications as a formal language $S\subseteq\mathcal{V}_{\text{path}}^*$ over the alphabet $\mathcal{V}_{\text{path}}$. Then, we can express an infinite set of path specifications using standard representations such as regular expressions or context-free grammars.

We make the empirical observation that the library implementation is equivalent to a regular set $S_*$ of path specifications (i.e., $S_*$ is sound and precise). There is no particular reason that this fact should be true, but it holds for all the Java library functions we have examined so far. For example, consider the code fragment shown in the first column of the third row of Figure~\ref{fig:examples}. This specification corresonds to the set of path specifications shown as a regular expression in the middle column of the same line (tokens in the regular expression are highlighted in blue for clarity).

\paragraph{{\bf\em Static points-to analysis with path specifications.}}

To perform static points-to analysis with regular sets of path specifications, we convert the path specifications into equivalent code fragments and then analyze the client along with the code fragments; see Appendix~\ref{sec:pathanalysis} for details.

\section{Specification Inference Algorithm}
\label{sec:inference}

\begin{figure}
\footnotesize
\[
\begin{tikzcd}[column sep=1.0em,row sep=1.0em]
\begin{array}{c}\text{Library}\\\text{Interface}~\mathcal{M}\\(\text{Input})\end{array} \arrow[r] \arrow[rd]
& \begin{array}{c}\text{Unit Test}\\\text{Synthesis (\S\ref{sec:testsynthesis}) for}\\\text{Noisy Oracle}~\mathcal{O}~(\S\ref{sec:formulation})\end{array} \arrow[d] \arrow[r]
& \begin{array}{c}\text{Learned}\\\text{Automaton}\\\hat{M}~(\S\ref{sec:rpni})\end{array} \arrow[d]
\\
& \begin{array}{c}\text{Sampled Positive}\\\text{Examples}~S_0~(\S\ref{sec:sample})\end{array} \arrow[ru]
& \begin{array}{c}\text{Generated Code}\\\text{Fragments}~C~(\S\ref{sec:pathanalysis})\end{array}
\end{tikzcd}
\]
\caption{An overview of our specification inference system. The section describing each component is in parentheses.}
\label{fig:system}
\end{figure}

We describe our algorithm for inferring path specifications. Our system is summarized in Figure~\ref{fig:system}, which also shows the section where each component is described in detail.

\subsection{Overview}
\label{sec:formulation}

Let the \emph{target language} $S_*\subseteq\mathcal{V}_{\text{path}}^*$ be the set of all path specifications that are precise. By Theorem~\ref{thm:logic}, $S_*$ is precise. The goal of our algorithm is to infer a set of path specifications that approximates $S_*$ as closely as possible.

\paragraph{{\bf\em Inputs.}}

Recall that our algorithm is given two inputs: (i) the library interface, and (ii) blackbox access to the library functions. We use these two inputs to construct the noisy oracle and positive examples as describe below.

\paragraph{{\bf\em Noisy oracle.}}

Given a path specification $s$, the \emph{noisy oracle} $\mathcal{O}:\mathcal{V}_{\text{path}}^*\to\{0,1\}$ (i) always returns $0$ if $s$ is imprecise, and (ii) ideally returns $1$ if $s$ is precise (but may return $0$).
\footnote{While our implementation of the oracle is deterministic, our specification inference algorithm can also make use of a stochastic oracle (as long as it satisfies these two properties).}
This oracle is implemented by synthesizing a potential witness $P$ for $s$ and returning the result of executing $P$. We describe how we synthesize a witness for $s$ in Section~\ref{sec:testsynthesis}.

\paragraph{{\bf\em Positive examples.}}

Phase one of our algorithm constructs a set of positive examples: our algorithm randomly samples candidate path specifications $s\sim\mathcal{V}_{\text{path}}^*$, and then uses $\mathcal{O}$ to determine whether each $s$ is precise. More precisely, given a set $S=\{s\sim\mathcal{V}_{\text{path}}^*\}$ of random samples, it constructs positive examples $S_0=\{s\in S\mid\mathcal{O}(s)=1\}$. We describe how we sample $s\sim\mathcal{V}_{\text{path}}^*$ in Section~\ref{sec:sample}.

\paragraph{{\bf\em Language inference problem.}}

Phase two of our algorithm inductively generalizes $S_0$ to a regular set of path specifications. We formulate this problem as follows:
\begin{definition}
\rm
The \emph{language inference problem} is to, given the noisy oracle $\mathcal{O}$ and the positive examples $S_0\subseteq S_*$, infer a language $\hat{S}$ that approximates $S_*$ as closely as possible.
\end{definition}
The approximation quality of $\hat{S}$ compared to $S_*$ must take into account both the false positive rate and the false negative rate. Intuitively, we prioritize minimizing false positives over minimizing false negatives---i.e., we aim to maximize the size of $\hat{S}$ subject to $\hat{S}\subseteq S_*$; however, $\hat{S}$ and $S_*$ may be infinitely large. In our evaluation, we use a heuristic to measure approximation quality; see Section~\ref{sec:eval} for details.

In Section~\ref{sec:rpni}, we describe our algorithm for solving the language inference problem. It outputs a regular language $\hat{S}=\mathcal{L}(\hat{M})$, where $\hat{M}$ is a finite state automaton---e.g., given
\begin{align*}
S_0=\{\texttt{ob}~\texttt{this}_{\texttt{set}}~\texttt{this}_{\texttt{clone}}~r_{\texttt{clone}}~\texttt{this}_{\texttt{get}}~r_{\texttt{get}}\},
\end{align*}
our language inference algorithm returns an automaton encoding the regular language
\begin{align*}
\texttt{ob}~\texttt{this}_{\texttt{set}}~\blue{(}\texttt{this}_{\texttt{clone}}~r_{\texttt{clone}}\blue{)^*}~\texttt{this}_{\texttt{get}}~r_{\texttt{get}}.
\end{align*}

\subsection{Sampling Positive Examples}
\label{sec:sample}

We sample a path specification $s\in\mathcal{V}_{\text{path}}^*$ by building it one variable at a time, starting from $s=\varepsilon$ (where $\varepsilon$ denotes the empty string). At each step, we ensure that $s$ satisfies the path specification constraints, i.e., (i) $z_i$ and $w_i$ are parameters or return values of the same library function, (ii) $w_i$ and $z_{i+1}$ are not both return values, and (iii) the last variable $w_k$ is a return value. In particular, given current sequence $s$, the set $\mathcal{T}(s)\subseteq\mathcal{V}_{\text{path}}\cup\{\phi\}$ of choices for the next variable (where $\phi$ indicates to terminate and return $s$) is:
\begin{itemize}
\item If $s=z_1w_1z_2...z_i$, then the choices for $w_i$ are $\mathcal{T}(s)=\{p_m,r_m\}$, where $z_i\in\{p_m,r_m\}$.
\item If $s=z_1w_1z_2...z_iw_i$, and $w_i$ is a parameter, then the choices for $z_{i+1}$ are $\mathcal{T}(s)=\mathcal{V}_{\text{path}}$.
\item If $s=z_1w_1z_2...z_iw_i$, and $w_i$ is a return value, then the choices for $z_{i+1}$ are
\begin{align*}
\mathcal{T}(s)=\{z\in\mathcal{V}_{\text{path}}\mid z\text{ is a parameter}\}\cup\{\phi\}.
\end{align*}
\end{itemize}
At each step, our algorithm samples $x\sim\mathcal{T}(s)$, and either constructs $s'=sx$ and continues if $x\neq\phi$ or returns $s$ if $x=\phi$. We consider two sampling strategies.

\paragraph{{\bf\em Random sampling.}}

We  choose $x\sim\mathcal{T}(s)$ uniformly at random at every step.

\paragraph{{\bf\em Monte Carlo tree search.}}

We can exploit the fact that certain choices $x\in\mathcal{T}(s)$ are much more likely to yield a precise path specification than others. To do so, note that our search space is structured as a tree. Each edge in this tree is labeled with a symbol $x\in\mathcal{V}_{\text{path}}\cup\{\phi\}$; then, we can associate each node $N$ in the tree with the sequence $s_N\in(\mathcal{V}_{\text{path}}\cup\{\phi\})^*$ obtained by traversing the tree from the root to $N$ and collecting the labels along the edges (if $N$ is the root node, then $s_N=\varepsilon$). Given an internal node $N$ with corresponding sequence $s_N$, its children are determined by $\mathcal{T}$ as follows:
\begin{align*}
  \{N\xrightarrow{x}N'\mid x\in\mathcal{T}(s_N)\}.
\end{align*}
Therefore, a leaf node $L$ corresponds to a sequence of the form $s_L=x_1...x_k\phi$, which in turn corresponds to a candidate path specification $s=x_1...x_n$.
Thus, we can sample $x\sim\mathcal{T}(s)$ using Monte Carlo tree search (MCTS)~\cite{kocsis2006bandit}, a search algorithm that learns over time which choices are more likely to succeed. In particular, MCTS keeps track of a score $Q(N,x)$ for every visited node $N$ and every $x\in\mathcal{T}(s_N)$. Then, the choices are sampled according to the distribution
\begin{align*}
\text{Pr}[x\mid N]&=\frac{1}{Z}e^{Q(N,x)}\sss\text{where}\sss Z=\sum_{x'\in\mathcal{T}(s_N)}e^{Q(N,x')}.
\end{align*}
Whenever a candidate $s=x_1...x_k$ is found, we increase the score $Q(x_1...x_i,x_{i+1})$ (for each $0\le i<k$) if $s$ is a positive example ($\mathcal{O}(s)=1$) and decrease it otherwise ($\mathcal{O}(s)=0$):
\begin{align*}
Q(x_1...x_i,x_{i+1})\gets(1-\alpha)Q(x_1...x_i,x_{i+1})+\alpha\mathcal{O}(s).
\end{align*}
We choose the \emph{learning rate} $\alpha$ to be $\alpha=1/2$.

\subsection{Language Inference Algorithm}
\label{sec:rpni}

Our language inference algorithm is based on RPNI~\cite{oncina1992identifying}.
\footnote{We also considered the $L^*$ algorithm~\cite{angluin1987learning}; however, the $L^*$ algorithm depends on an \emph{equivalence oracle} that reports whether a candidate language is correct, which is unavailable in our setting. It is possible to approximate the equivalence oracle using sampling, but in our experience, this approximation is very poor and can introduce substantial imprecision.}
In particular, we modify RPNI to leverage access to the noisy oracle---whereas RPNI takes as input a set of negative examples, we use the oracle to generate them on-the-fly. Our algorithm learns a regular language $\hat{S}=\mathcal{L}(\hat{M})$ represented by the (nondeterministic) finite state automaton (FSA) $\hat{M}=(Q,\mathcal{V}_{\text{path}},\delta,q_{\text{init}},Q_{\text{fin}})$, where $Q$ is the set of states, $\delta:Q\times\mathcal{V}_{\text{path}}\to2^Q$ is the transition function, $q_{\text{init}}\in Q$ is the start state, and $Q_{\text{fin}}\subseteq Q$ are the accept states. If there is a single accept state, we denote it by $q_{\text{fin}}$. We denote transitions $q\in\delta(p,\sigma)$ by $p\xrightarrow{\sigma}q$.

Our algorithm initializes $\hat{M}$ to be the FSA representing the finite language $S_0$. In particular, it initializes $\hat{M}$ to be the prefix tree acceptor~\cite{oncina1992identifying}, which is the FSA where the underlying transition graph is the prefix tree of $S_0$, the start state is the root of this prefix tree, and the accept states are the leaves of this prefix tree.

Then, our algorithm iteratively considers \emph{merging} pairs of states of $\hat{M}$. More precisely, given two states $p,q\in Q$ (without loss of generality, assume $q\not=q_{\text{init}}$), $\textsf{Merge}(\hat{M},q,p)$ is the FSA obtained by (i) replacing transitions
\begin{align*}
(r\xrightarrow{\sigma}q)\text{ becomes }(r\xrightarrow{\sigma}p),\hspace{0.2in}
(q\xrightarrow{\sigma}r)\text{ becomes }(p\xrightarrow{\sigma}r),
\end{align*}
(ii) adding $p$ to $Q_{\text{fin}}$ if $q\in Q_{\text{fin}}$, and (iii) removing $q$ from $Q$.

Our algorithm iterates once over all the states $Q$; we describe how a single iteration proceeds. Let $q$ be the state being processed in the current step, let $Q_0$ be the states that have been processed so far but not removed from $Q$, and let $\hat{M}$ be the current FSA. For each $p\in Q_0$, our algorithm checks whether merging $q$ and $p$ adds imprecise path specifications to the language $\mathcal{L}(\hat{M})$; if not, it greedily performs the merge. More precisely, for each $p\in Q_0$, our algorithm constructs
\begin{align*}
M_{\text{diff}}=\textsf{Merge}(\hat{M},q,p)\setminus\hat{M},
\end{align*}
which represents the set of strings that are added to $\mathcal{L}(\hat{M})$ if $q$ and $p$ are merged. Then, for each $s\in M_{\text{diff}}$ up to some maximum length $N$ (we take $N=8$), our algorithm queries $\mathcal{O}(s)$. If all queries pass (i.e., $\mathcal{O}(s)=1$), then our algorithm greedily accepts the merge, i.e., $\hat{M}\gets\textsf{Merge}(\hat{M},q,p)$ and continues to the next $q\in Q$. Otherwise, it considers merging $q$ with the next $p\in Q_0$. Finally, if $q$ is not merged with any state $p\in Q_0$, then our algorithm does not modify $\hat{M}$ and adds $q$ to $Q_0$. Once it has completed a pass over all states in $Q$, our algorithm returns $\hat{M}$. For example, suppose our language learning algorithm is given a single positive example
\begin{align*}
\texttt{ob}~\texttt{this}_{\texttt{set}}~\texttt{this}_{\texttt{clone}}~r_{\texttt{clone}}~\texttt{this}_{\texttt{get}}~r_{\texttt{get}}.
\end{align*}
Then, our algorithm constructs the finite state automaton
\[
\begin{footnotesize}
q_{\text{init}}\xrightarrow{\texttt{ob}}q_1\xrightarrow{\texttt{this}_{\texttt{set}}}q_2
\xrightarrow{\texttt{this}_{\texttt{clone}}}q_3\xrightarrow{r_{\texttt{clone}}}q_4
\xrightarrow{\texttt{this}_{\texttt{get}}}q_5\xrightarrow{r_{\texttt{get}}}q_{\text{fin}}.
\end{footnotesize}
\]
Our algorithm fails to merge $q_{\text{init}}$, $q_1$, $q_2$, or $q_3$ with any previous states. It then tries to merge $q_4$ with each state $\{q_{\text{init}},q_1,q_2,q_3\}$; the first two merges fail, but merging $q_4$ with $q_2$ produces
\[
\begin{tikzcd}[column sep=3.5em,row sep=2em]
q_{\text{init}} \arrow[r,"{\texttt{ob}}"] 
& q_1 \arrow[r,"{\texttt{this}_{\texttt{set}}}"]
& q_2 \arrow[r,"{\texttt{this}_{\texttt{get}}}"] \arrow[bend right]{d}[xshift=-8ex, yshift=-2ex]{\texttt{this}_{\texttt{clone}}}
& q_4 \arrow[r,"{r_{\texttt{get}}}"]
& q_{\text{fin}}. \\
&& q_3 \arrow[bend right]{u}[xshift = 5ex, yshift=-2ex]{r_{\texttt{clone}}}
\end{tikzcd}
\]
Then, the specifications of length at most $N$ in $M_{\text{diff}}$ are
\begin{align*}
&\texttt{ob}~\texttt{this}_{\texttt{set}}~\blue{(}\texttt{this}_{\texttt{clone}}~r_{\texttt{clone}}\blue{)^0}~\texttt{this}_{\texttt{get}}~r_{\texttt{get}} \\
&\texttt{ob}~\texttt{this}_{\texttt{set}}~\blue{(}\texttt{this}_{\texttt{clone}}~r_{\texttt{clone}}\blue{)^1}~\texttt{this}_{\texttt{get}}~r_{\texttt{get}} \\
& ... \\
&\texttt{ob}~\texttt{this}_{\texttt{set}}~\blue{(}\texttt{this}_{\texttt{clone}}~r_{\texttt{clone}}\blue{)^N}~\texttt{this}_{\texttt{get}}~r_{\texttt{get}},
\end{align*}
all of which are accepted by $\mathcal{O}$. Therefore, our algorithm greedily accepts this merge and continues. The remaining merges fail and our algorithm returns this automaton.

\subsection{Unit Test Synthesis}
\label{sec:testsynthesis}

\begin{figure}
\small
\centering
\begin{tabular}{ll}
\hline \vspace{-0.1in} \\
$\begin{array}{l}\text{input}\end{array}$ &
$\begin{array}{rl}
\texttt{ob}\dashrightarrow\texttt{this}_{\texttt{set}}\hspace{-0.1in}&\to\texttt{this}_{\texttt{clone}}\to r_{\texttt{clone}} \\
&\to\texttt{this}_{\texttt{get}}\to r_{\texttt{get}}
\end{array}$
\vspace{0.03in} \\
\hline \vspace{-0.1in} \\
$\begin{array}{l}\text{skeleton}\end{array}$ &
$\begin{array}{l}
\textttb{??.set(??);}\\
\textttb{?? = ??.clone();}\\
\textttb{?? = ??.get();}\\
\end{array}$
\vspace{0.03in} \\
\hline \vspace{-0.1in} \\
$\begin{array}{l}\text{fill holes}\end{array}$ &
$\begin{array}{l}
\textttb{box}\texttt{.set(}\textttb{in}\texttt{);}\\
\textttb{Box boxClone}\texttt{ = }\textttb{box}\texttt{.clone();}\\
\textttb{Object out}\texttt{ = }\textttb{boxClone}\texttt{.get();}
\end{array}$
\vspace{0.03in} \\
\hline \vspace{-0.1in} \\
$\begin{array}{l}\text{initialization}\\\text{\& scheduling}\end{array}$ &
$\begin{array}{l}
\textttb{Object in = new Object();}\\
\textttb{Box box = new Box()}\\
\texttt{box.set(in);}\\
\texttt{Box boxClone = box.clone();}\\
\texttt{Object out = boxClone.get();}\\
\textttb{return in == out};\\
\end{array}$
\vspace{0.03in} \\
\hline
\end{tabular}
\caption{Steps in the witness synthesis algorithm for a candidate path specification for \texttt{Box}. Code added at each step is highlighted in blue. Scheduling is shown in the same line as initialization.}
\label{fig:testsynthesis}
\end{figure}

We describe how we synthesize a unit test that is a potential witness for a given specification
\begin{align*}
s=(z_1\dashrightarrow w_1\to...\to z_k\dashrightarrow w_k),
\end{align*}
relegating details to Appendix~\ref{sec:testsynthesisappendix}. Figure~\ref{fig:testsynthesis} shows how the unit test synthesis algorithm synthesizes a unit test for the candidate specification $s_{\text{box}}$.

Recall that the semantics of $s$ are $\left(\bigwedge_{i=1}^{k-1}w_i\xrightarrow{A_i}z_{i+1}\in\widetilde{G}\right)\Rightarrow(z_1\xrightarrow{A}w_k\in\widetilde{G})$. Then, consider a program $P$ that satisfies the following properties:
\begin{itemize}
\item The conclusion of $s$ does not hold statically for $P$ with empty specifications, i.e., $z_1\xrightarrow{A}w_k\not\in\widetilde{G}(P,\varnothing)$.
\item The premise of $s$ holds for $P$, i.e., $w_i\xrightarrow{A_i}z_{i+1}\in\widetilde{G}(P,\{s\})$ for each $i\in[k-1]$.
\item For every set $S$ of path specifications, if $z_1\xrightarrow{A}w_k\in\widetilde{G}(P,S)$, then $S\cup\{s\}$ is equivalent to $S$.
\end{itemize}
Intuitively, if program $P$ is a potential witness for path specification $s$ with premise $\psi$ and conclusion $\phi=(e\in\widetilde{G})$, then $s$ is the only path specification that can be used by the static analysis to compute relation $e$ for $P$. Thus, if $P$ witnesses $s$, then $s$ is guaranteed to be precise. In particular, we have the following important guarantee for the unit test synthesis algorithm (see Appendix~\ref{sec:witnessproof} for a proof):
\begin{theorem}
\label{THM:TESTSYNTHESIS}
\rm
The unit test $P$ synthesized for path specification $s$ is a potential witness for $s$.
\end{theorem}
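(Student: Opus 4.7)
My plan is to show that the synthesized $P$ satisfies the three bulleted properties listed immediately before the theorem, and then argue that these properties together imply $P$ is a potential witness in the sense of the definition (that $P$ returning true only happens when $s$ is precise). I would split the argument into three lemmas, one per property, following the structure of the synthesis pipeline in Figure~\ref{fig:testsynthesis}: skeleton construction from the method sequence $z_1 w_1 \cdots z_k w_k$, hole-filling with program variables chosen so that two slots share a variable exactly when the specification demands an $\Alias$ or $\Transfer$ edge between them, and initialization/scheduling plus the final equality return.

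For the first property ($z_1 \xrightarrow{A} w_k \notin \widetilde{G}(P, \varnothing)$), I would argue syntactically from Figure~\ref{fig:constructgraph} and Figure~\ref{fig:pointstogrammar}: with no specifications, the graph $G$ of $P$ contains only the $\Assign$, $\New$, and call edges induced by $P$'s own statements and lacks the library-internal $\Store[f]$/$\Load[f]$ pairs that a $\Transfer$ derivation requires through the library. Because every internal variable in $P$ is fresh, no alternative $C_{\text{pt}}$-derivation connects $z_1$ to $w_k$ without passing through the library. For the second property (premise holds under $\{s\}$), I would verify the hole-filling rules case-by-case on the nonterminal $A_i$: when $A_i = \Alias$, the synthesized $P$ assigns the receivers of calls $i$ and $i{+}1$ from the same fresh variable, producing the $\Alias$ edge via the third $C_{\text{pt}}$ production; when $A_i = \Transfer$, $P$ passes the return of call $i$ as the argument of call $i{+}1$, and the $\Assign$ edges from Figure~\ref{fig:constructgraph} immediately yield $w_i \xrightarrow{\Transfer} z_{i+1}$. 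The $\overline{\Transfer}$ case is symmetric.

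The hardest step will be the third property, which is the genuine content of the theorem: if some $S$ produces $z_1 \xrightarrow{A} w_k \in \widetilde{G}(P, S)$, then $S \cup \{s\} \equiv S$. The point is that the synthesized $P$ is minimal in the sense that the only way to derive the conclusion edge is to use $s$ itself. I would prove this by induction on a derivation of $z_1 \xrightarrow{A} w_k$ in the grammar $C_{\text{pt}}$ extended by $S$, tracking the sequence of library-interface variables the derivation visits. Because every program variable introduced by the synthesis either (i) appears at exactly one visible-variable slot or (ii) is the shared witness chosen for a single premise $w_i \xrightarrow{A_i} z_{i+1}$, any such derivation must visit the visible variables in precisely the order $z_1, w_1, z_2, \dots, w_k$ and must invoke specifications whose effect subsumes that of $s$. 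Formalising this requires a careful accounting of which nonterminals can reach which hole-filled variables, and it is the step where the details in Appendix~\ref{sec:witnessproof} do the real work.

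Finally, I would combine the three properties to conclude: suppose $P$ returns true, so the dynamic test witnesses the conclusion relation and hence $z_1 \xrightarrow{A} w_k \in \widetilde{G}_*(P)$; then for any sound $S'$ the conclusion lies in $\widetilde{G}(P, S')$, so by property three $S' \cup \{s\} \equiv S'$. Taking this over all sound $S'$ yields precision of $\{s\}$, which by the definition makes $P$ a potential witness for $s$, proving the theorem.
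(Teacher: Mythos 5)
Your overall scaffolding---establish the three bulleted properties and then combine them via ``$P$ returns true $\Rightarrow$ the conclusion is in $\widetilde{G}_*(P)$ $\Rightarrow$ it is in $\widetilde{G}(P,S')$ for every sound $S'$ $\Rightarrow$ $\{s\}$ is precise''---matches the paper, and your treatment of the first two properties (case analysis on $A_i$ over the hole-filling rules) is essentially the paper's first step. The divergence, and the gap, is in the third property. The paper does not induct on derivations at all: it isolates a sufficient condition (Proposition~\ref{prop:witness}) stating that $P$ is a potential witness whenever the premise edges of $s$ \emph{exactly} coincide with the set of interface edges $w\xrightarrow{A}z\in\widetilde{G}(P,\varnothing)$ with $A\in\{\Transfer,\overline{\Transfer},\Alias\}$, and proves that condition by a monotonicity argument quantified over arbitrary client programs $P'$: whenever $s$ would fire in some $P'$, the premise holds there, the minimal graph of $P$ embeds into that of $P'$, and so $S$ already derives the conclusion in $P'$ without $s$. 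The bulk of the paper's proof of the theorem is then spent verifying exactness, whose one nontrivial wrinkle is that $\widetilde{G}(P,\varnothing)$ unavoidably contains \emph{more} interface edges than the premise---the self-loops $z_i\xrightarrow{\Transfer}z_i$ (from the production $\Transfer\to\varepsilon$) and the backward edges $z_{i+1}\xrightarrow{\overline{A_i}}w_i$---which the paper dispatches by observing that these hold in every program under every specification set, so they can be added to the premise of $s$ without changing its semantics.

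Your plan for the third property has two concrete problems. First, the claim that any derivation of $z_1\xrightarrow{A}w_k$ in $P$ ``must visit the visible variables in precisely the order $z_1,w_1,\dots,w_k$'' is false as stated, precisely because of the extra self-loops and backward edges above; any accounting of which nonterminals reach which hole-filled variables has to neutralize these, and your sketch does not mention them. Second, and more fundamentally, an induction on derivations \emph{in $P$} only tells you what $S$ computes for $P$, whereas the conclusion you need, $S\cup\{s\}\equiv S$, quantifies over \emph{every} program. Your bridge is the claim that the derivation ``must invoke specifications whose effect subsumes that of $s$,'' which, if proven, would close the gap---but that is a substantially stronger syntactic claim than anything the paper establishes, and it is not obvious (for instance, $S$ could derive the conclusion in $P$ by chaining several shorter specifications, none of which individually subsumes $s$). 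The paper's monotonicity argument sidesteps both issues; I would adopt it: verify the exactness condition for the synthesized $P$, handle the self-loops and backward edges explicitly, and let Proposition~\ref{prop:witness} carry the quantification over client programs.
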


\paragraph{{\bf\em Skeleton construction.}}

Our algorithm first constructs the \emph{skeleton} of the unit test. In particular, a witness for $s$ must include a call to each function $m_1,...,m_k$, where the variables $z_i,w_i\in\mathcal{V}_{m_i}$ are parameters or return values of $m_i$, since the graph $G$ extracted from the unit test must by definition contain edges connecting the $w_i$ to $z_{i+1}$. For each function call $y\gets m(x)$, the argument $x$ and the left-hand side variable $y$ are left as holes \texttt{??} to be filled in subsequent steps.

\paragraph{{\bf\em Fill holes.}}

Second, our algorithm fills the holes in the skeleton corresponding to reference variables. In particular, for each pair of function calls $\texttt{??}_{y,i}\gets m_i(\texttt{??}_{x,i})$ and $\texttt{??}_{y,i+1}\gets m_{i+1}(\texttt{??}_{x,i+1})$, it fills the holes $\texttt{??}_{y,i}$ and $\texttt{??}_{x,i+1}$ depending on the edge $w_i\xrightarrow{A_i}z_{i+1}$:
\begin{itemize}
\item {\bf Case $A_i=\Transfer$:} In this case, $w_i$ is a return value and $z_{i+1}$ is a parameter. Thus, the algorithm fills $\texttt{??}_{y,i}$ and $\texttt{??}_{x,i+1}$ with the same fresh variable $x$.
\item {\bf Case $A_i=\overline{\Transfer}$:} This case is analogous to the case $A_i=\Transfer$.
\item {\bf Case $A_i=\Alias$:} In this case, $w_i$ and $z_{i+1}$ are both parameters. Thus, the algorithm fills $\texttt{??}_{y,i}$ and $\texttt{??}_{x,i+1}$ with the same fresh variable $x$, and additionally adds to the test an allocation statement $x\gets X()$.
\end{itemize}
As shown in the proof of Theorem~\ref{THM:TESTSYNTHESIS}, the added statements ensure that $P$ is a potential witness for $s$.

\paragraph{{\bf\em Initialization.}}

Third, our algorithm initializes the remaining reference and primitive variables in the unit test. In particular, function calls $y\gets m_i(x)$ may have additional parameters that need to be filled.
For Theorem~\ref{THM:TESTSYNTHESIS} to hold, the remaining reference variables must be initialized to \texttt{null}.

However, this approach is likely to synthesize unit tests that fail (by raising exceptions) even when $s$ is precise. Therefore, we alternatively use a heuristic where we allocate a fresh variable for each reference variable. For allocating reference variables that are passed as arguments to constructors, we have to be careful to avoid infinite recursion; for example, a constructor \texttt{Integer(Integer i)} should be avoided. Our algorithm uses a shortest-path algorithm to generate the smallest possible initialization statements; see Appendix~\ref{sec:initializationappendix} for details. With this approach, we can no longer guarantee that $P$ is a witness and our oracle may be susceptible to false positives. In our evaluation, we show that  this heuristic substantially improves recall with no reduction in precision.

Primitive variables can be initialized arbitrarily for Theorem~\ref{THM:TESTSYNTHESIS} to hold, but this choice affects whether $P$ is a witness when $s$ is precise. We initialize primitive variables using default values (\texttt{0} for numeric variables and \texttt{true} for boolean variables) that work well in practice.

\paragraph{{\bf\em Scheduling.}}

Fourth, our algorithm determines the ordering of the statements in the unit test. There are many possible choices of statement ordering, which affect whether $P$ is a witness when $s$ is precise. There are \emph{hard constraints} on the ordering (in particular, a variable must be defined before it is used) and \emph{soft constraints} (in particular, statements corresponding to edges $w_i\to z_{i+1}$ for smaller $i$ should occur earlier in $P$). Our scheduling algorithm produces an ordering that satisfies the hard constraints while trying to satisfy as many soft constraints as possible. It uses a greedy strategy, i.e., it orders the statements sequentially from first to last, choosing at each step the statement that satisfies all the hard constraints and the most soft constraints; see Appendix~\ref{sec:scheduling}.

\begin{figure}
\includegraphics[width=0.4\textwidth]{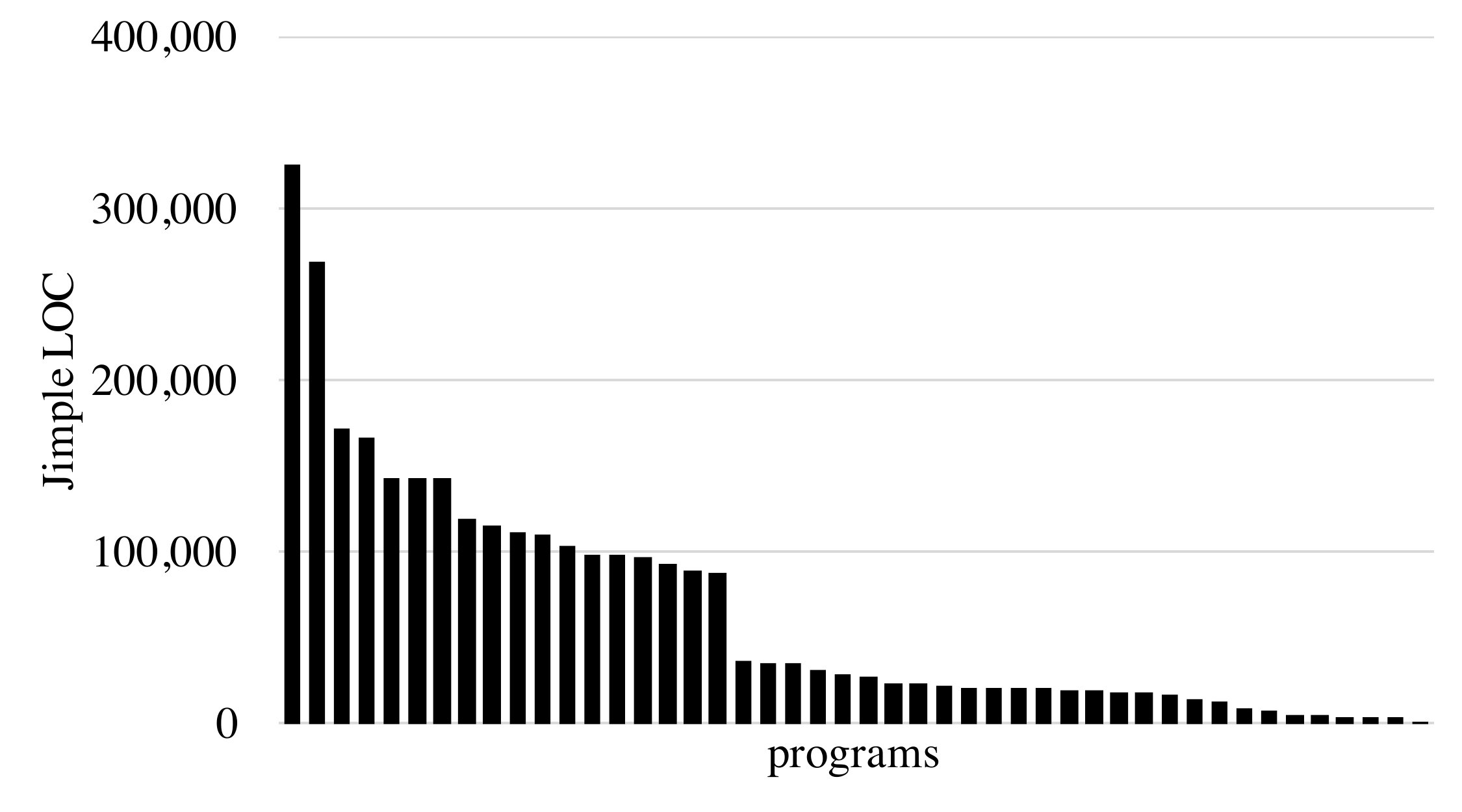}
\caption{Jimple lines of code of the apps in our benchmark.}
\label{fig:sizes}
\end{figure}


\begin{figure*}
\begin{tabular}{ccc}
\includegraphics[width=0.3\textwidth]{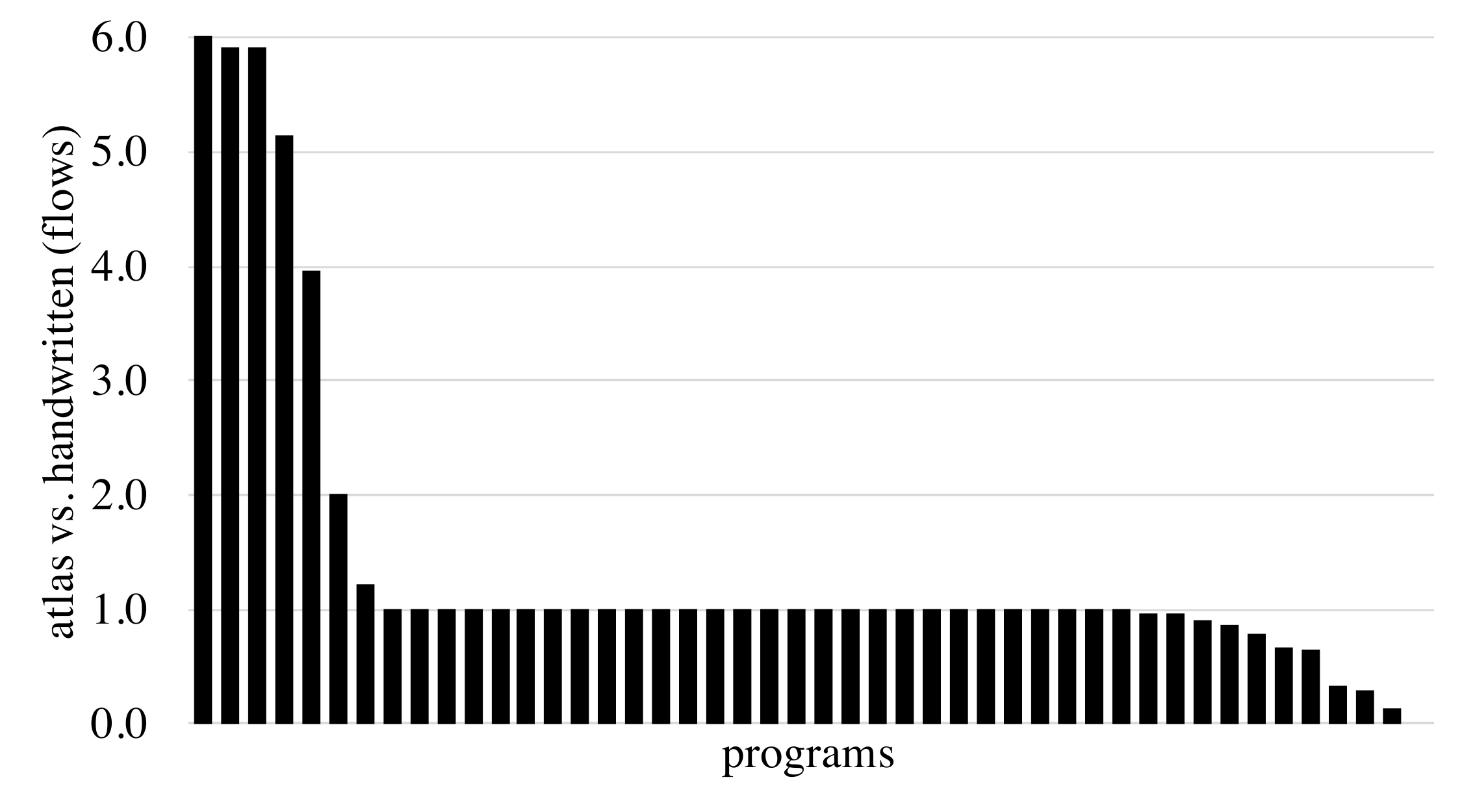}
&\includegraphics[width=0.3\textwidth]{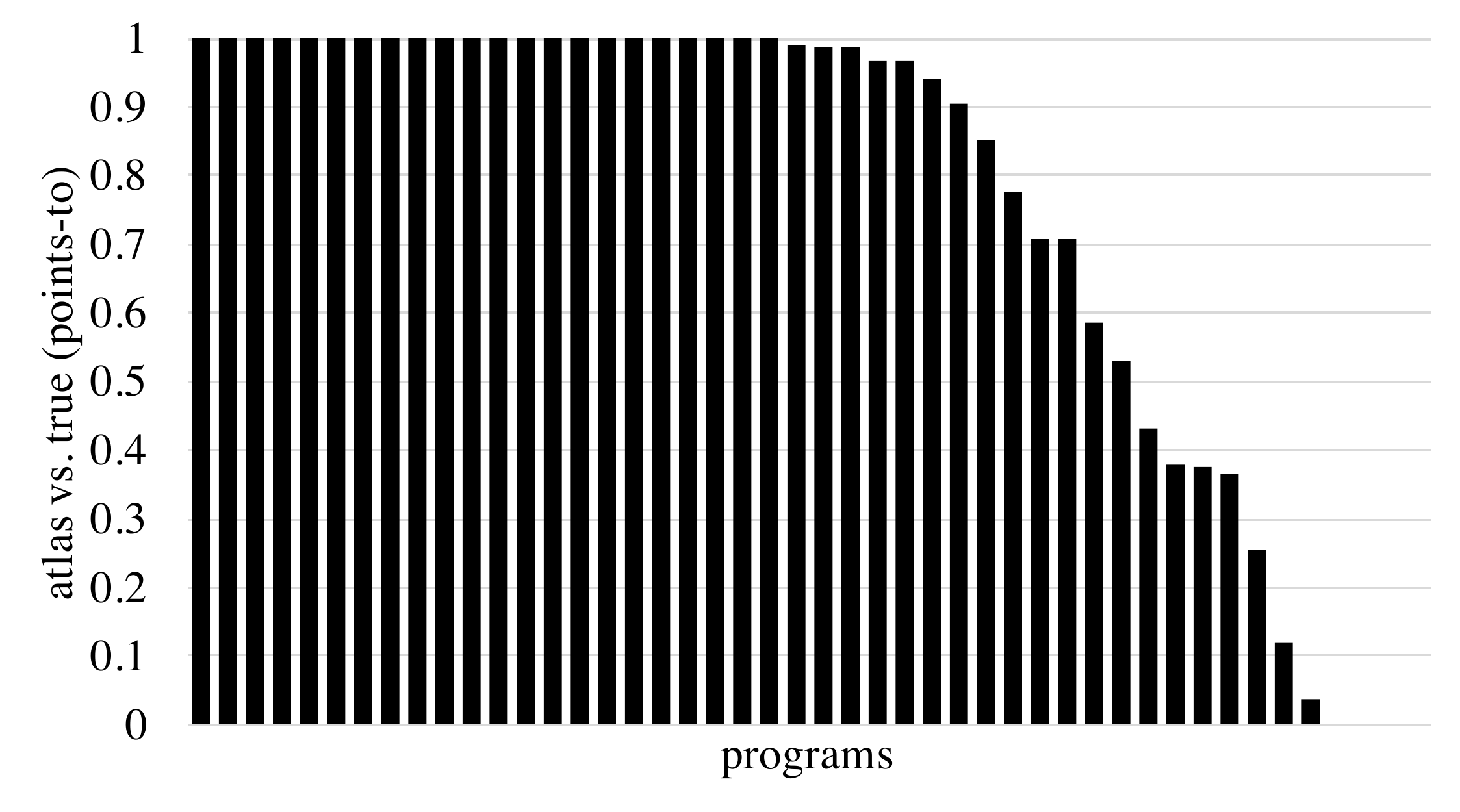}
&\includegraphics[width=0.3\textwidth]{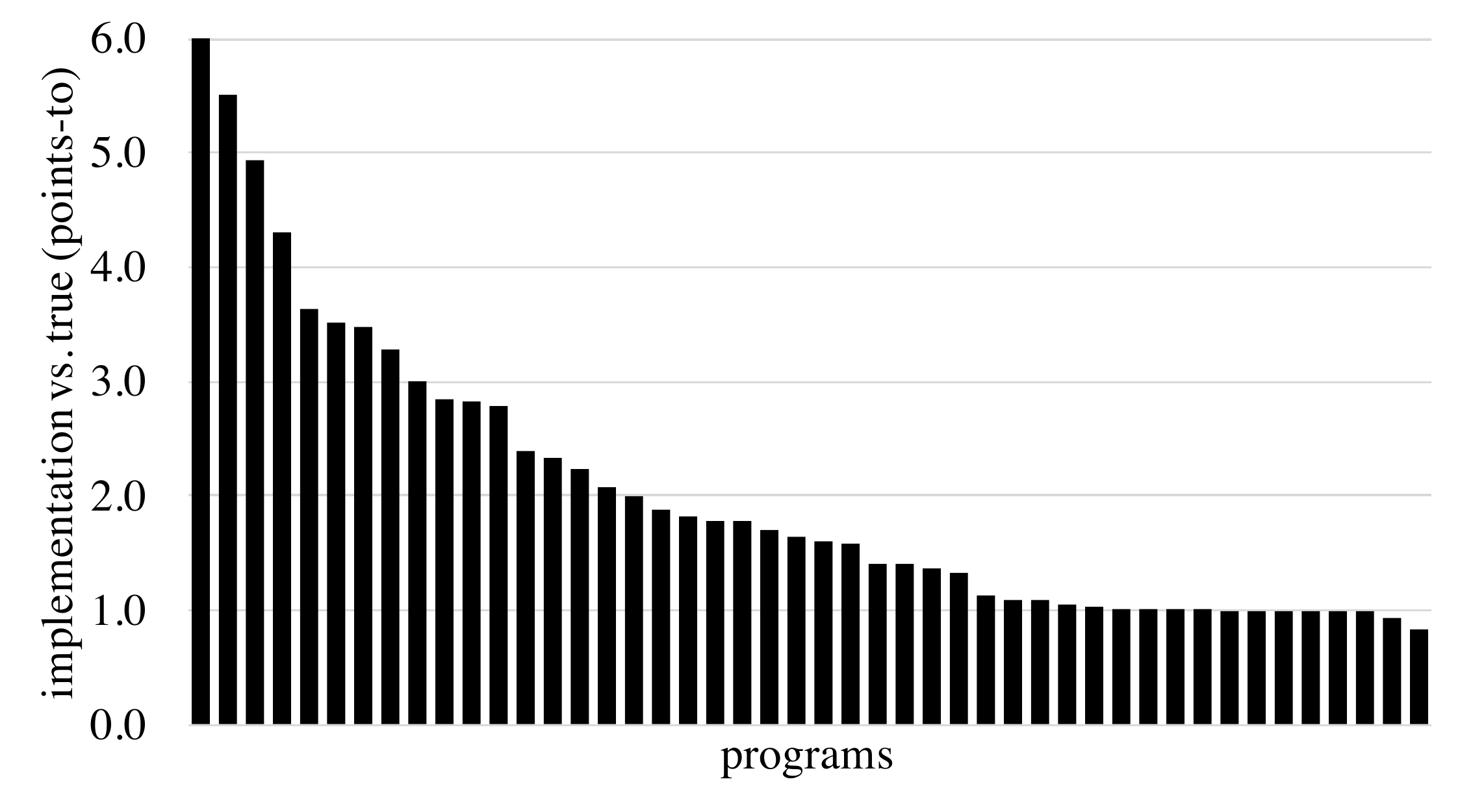} \\
(a) & (b) & (c)
\end{tabular}
\caption{In (a), we show the ratio of nontrivial information flows discovered using \toolname versus existing specifications. We show the ratio of nontrivial points-to edges discovered using (b) \toolname versus ground truth and (c) ground truth versus implementation. The ratios are sorted from highest to lowest for the 46 benchmark programs with nontrivial points-to edges. Note that some values exceeded the graph scale; in (a), the largest value is 9.3, and in (c), the largest value is 11.7.}
\label{fig:evalpt}
\end{figure*}

\section{Evaluation}
\label{sec:eval}

We implemented our specification inference algorithm in a tool called \toolname and evaluated its ability to infer points-to specifications for the Java standard library. First, we demonstrate the usefulness of the inferred specifications for a static information flow analysis, comparing to the existing, handwritten specifications, even though many of these specifications were written specifically for apps in our benchmark. Second, we evaluate the precision and recall of the inferred specifications by comparing to ground truth specifications; furthermore, we demonstrate the effectiveness of using specifications by showing that using ground truth specifications significantly decreases false positives compared to analyzing the actual library implementation. Finally, we analyze some of the choices we made when designing our algorithm.

\paragraph{{\bf\em Benchmark.}}

We base our evaluation on a benchmark of 46 Android apps, including a benchmark of 26 malicious and benign apps given to us by a major security company. The remaining apps were obtained as part of a DARPA program on detecting Android malware. Overall, our benchmark contains a mix of utility apps (e.g., flashlights, note taking apps, battery monitors, wallpaper apps, etc.) and Android games. We show the sizes of these apps in Jimple lines of code (Jimple is the intermediate representation used by Soot) in Figure~\ref{fig:sizes}. The malware in this benchmark consist primarily of apps that leak sensitive user information, including location, contacts, phone number, and SMS messages.

\paragraph{{\bf\em Information flow client.}}

We use \toolname to infer specifications for a client static information flow analysis for Android apps~\cite{fuchs2009scandroid,feng2014apposcopy,arzt2014flowdroid,bastani2015interactively}; in particular, it is based closely on~\cite{bastani2015specification}. This information flow client is specifically designed to find Android malware exhibiting malicious behaviors such as the ones in our benchmark. The client is built on Chord~\cite{naik2006effective} modified to use Soot~\cite{vallee1999soot} as a backend. It computes an Andersen-style, $1$-object-sensitive, flow- and path-insensitive points-to analysis. It then computes an explicit information flow analysis~\cite{sabelfeld2003language} between manually annotated sources and sinks, using the computed points-to sets to resolve flows through the heap. Sources include location, contacts, and device information (e.g., the return value of \texttt{getLocation}), and sinks include the Internet and SMS messages (e.g., the \texttt{text} parameter of \texttt{sendTextMessage}).

\paragraph{{\bf\em Existing specifications.}}

Our tool omits analyzing the Java standard library (version 7), and instead analyzes client code and user-provided code fragment specifications. Over the course of two years, we have handwritten several hundred code fragment specifications for our static information flow client for 90 classes in the Java standard library, including many written specifically for our benchmark of Android apps. In particular, these specifications were written as needed for the apps that we analyzed. The most time consuming aspect of developing the existing specifications was not writing them (they total a few thousand lines of code), but identifying which functions required specifications. Thus, they cover many fewer functions than the handwritten ground truth specifications described in Section~\ref{sec:groundtruth}, but are tailored to finding information flows for apps in our benchmark.

\paragraph{{\bf\em Evaluating inferred specifications.}}

To measure precision and recall of our inferred specifications, we begin by converting inferred specifications $\hat{S}$ to code fragment specifications using the algorithm described in Appendix~\ref{sec:pathanalysis}. Next, recall our observation in Section~\ref{sec:path} that the library implementation can be soundly and precisely represented by a regular set of path specifications $S_*$; we can similarly convert $S_*$ to a set of code fragment specifications. Then, we count a code fragment specification in $\hat{S}$ as a false positive if it does not appear in $S_*$, and similarly count a code fragment specification in $S_*$ as a false negative if it does not appear in $\hat{S}$. For code fragment specifications with multiple statements, we count each statement fractionally. For example, consider the sound and precise specification
\begin{small}
\begin{verbatim}
class Box { // ground truth specification
  Object f; // ghost field
  Object set(Object ob) {
    f = ob;
    return ob;
    return f; }
\end{verbatim}
\end{small}
for the \texttt{set} method in the \texttt{Box} class (note that the statement \texttt{return ob} is redundant, but it is generated by the algorithm described in Appendix~\ref{sec:pathanalysis}). Then, the specification
\begin{small}
\begin{verbatim}
Object set(Object ob) { // inferred specification
  return ob; }
\end{verbatim}
\end{small}
is missing two statements, so we count it as $2/3$ of a false negative. Each edge $z_i\dashrightarrow w_i$ in a path specification roughly corresponds to a single statement in the generated code fragments, so this heuristic intuitively counts false negative and false positive path specifications weighted by their length.

\paragraph{{\bf\em Evaluating computed relations.}}

To compare how using two different sets of specifications impacts the static analysis, we examine the ratio of the sizes of the computed relations. We preprocess these sizes in two ways: (i) we only consider relations between program variables, and (ii) we ignore \emph{trivial} relations that can be computed even when using empty specifications (i.e., all library functions are treated as no-ops with respect to heap effects). For example, for points-to edges, we use the metric $R_{\text{pt}}(S,S')=\frac{|\Pi(S)\setminus\Pi(\varnothing)|}{|\Pi(S')\setminus\Pi(\varnothing)|}$, where $\Pi(S)\subseteq\mathcal{V}\times\mathcal{O}$ are the points-to edges computed using specifications $S$, and $\Pi(\varnothing)\subseteq\mathcal{V}\times\mathcal{O}$ are the trivial edges.


\subsection{Comparison to Our Existing Specifications}

We compare the quality of the inferred specifications to our existing, handwritten specifications, in particular, aiming to improve our information flow client. We focus on inferring specifications for the commonly used packages \texttt{java.lang}, \texttt{java.util}, \texttt{java.io}, \texttt{java.nio}, \texttt{java.net}, and \texttt{java.math}; all the specifications we have manually written for the Java standard library are for classes in these packages. We note that the handwritten specifications for these packages are high-quality---we manually examined all of these handwritten specifications, and found that they were precise (i.e., there were no false positives compared to $S_*$).

\paragraph{{\bf\em Inferred specifications.}}

We used a total of 12 million random samples for phase one, which ran in 44.9 minutes. Phase two of our algorithm ran in 31.0 minutes; the initial FSA had 10,969 states, and the final FSA had 6,855 states. We compare our inferred specifications to the handwritten ones, i.e., measuring precision and recall compared to the handwritten specifications rather than to $S_*$. Most strikingly, \toolname infers 5$\times$ as many specifications as the existing, handwritten ones---\toolname infers specifications for 878 library functions, whereas handwritten specifications cover only 159 library functions. Furthermore, \toolname infers 89\% of the handwritten specifications.
We manually examined the 5 false negatives (i.e., handwritten specifications that \toolname fails to infer) for the Collections API. Each one is due to a false negative in the unit test synthesis. For example, the function \texttt{subList(int,int)} in the \texttt{List} class requires a call of the form \texttt{subList(0, 1)} to retrieve the first object in the list. Similarly, the function \texttt{set(int,Object)} in the \texttt{List} class requires an object to already be in the list or it raises an index out-of-bounds exception. The potential witnesses synthesized by \toolname fail to exercise the relevant behaviors in these instances. Finally, we manually examined more than 200 of the inferred specifications that were new; all of them were precise, which is strong evidence that our tool has very few if any false positives despite the heuristics we employ.

\paragraph{{\bf\em Information flows.}}

To show how \toolname can improve upon our existing handwritten specifications, we study the ratio $R_{\text{flow}}(S_{\text{atlas}},S_{\text{hand}})$ of information flows computed using \toolname versus using the existing, handwritten specifications. A higher ratio ($R_{\text{flow}}>1$) says that \toolname has higher recall, and a lower ratio ($R_{\text{flow}}<1$) says that handwritten specifications have higher recall. Figure~\ref{fig:evalpt} (a) shows $R(S_{\text{atlas}},S_{\text{hand}})$. Overall, \toolname finds 52\% more information flows compared to the handwritten specifications. The size of this gap is noteworthy because we have already analyzed these apps over the past few years---many of the existing specifications were written specifically for this benchmark.

Finally, for the subset of apps in our benchmark given to us by a major security company (27 of the 46) and for a subset of information sources and sinks, the security company provided us with ground truth information flows that they considered to be malicious. In particular, they consider 86.5\% of the information flows newly identified using the inferred specifications to be actual malicious behaviors.

\subsection{Comparison to Ground Truth}
\label{sec:groundtruth}

Since the existing handwritten specifications are incomplete, we additionally compare to the ground truth specifications $S_*$ to evaluate the precision and recall of our inferred specifications. Because of the manual effort required to write ground truth specifications, we do so only for the 12 classes in the Java Collections API that are most frequently used by our benchmark (98.5\% of calls to the Collections API target these 12 classes). We focus on the Java Collections API (i.e., classes that implement the \texttt{Collection} or \texttt{Map} interfaces), because it requires by far the most complex points-to specifications.

\paragraph{{\bf\em Inferred specifications.}}

We examine the top 50 most frequently called functions in our benchmark (in total, accounting for 95\% of the function calls).
The recall of our algorithm is 97\% (i.e., we inferred the ground truth specification for 97\% of the 50 functions) and the precision is 100\% (i.e., each specification is as precise as the ground truth specification). The false negatives occured for the same reasons as the false negatives discussed in the previous section.

\paragraph{{\bf\em Points-to sets.}}

To show the quality of the specifications inferred by \toolname, we study the ratio $R_{\text{pt}}(S_{\text{atlas}},S_*)$ of using specifications inferred by \toolname to using ground truth specifications. We found that using \toolname does not compute a single false positive points-to edge compared to using ground truth specifications, i.e., the precision of \toolname is 100\%. Thus, $1-R_{\text{pt}}(S_{\text{atlas}},S_*)$ is the rate of false negative points-to edges when using \toolname. Figure~\ref{fig:evalpt} (b) shows $R_{\text{pt}}(S_{\text{atlas}},S_*)$ for each app in our benchmark, sorted by magnitude. This ratio is 1.0 for almost half of the programs, i.e., for almost half the programs, there are no false negatives. The median recall is 99.0\%, and the average recall is 75.8\%.

\paragraph{{\bf\em Benefits of using specifications.}}

We show that using specifications can greatly improve the precision and soundness of a static analysis. In particular, we compare the ground truth specifications to the library implementation, i.e., the class files comprising the actual implementation of the Collections API (developed by Oracle). We compute the ratio $R_{\text{pt}}(S_{\text{impl}},S_*)$ of analyzing the library implementation $S_{\text{impl}}$ to analyzing the ground truth specifications $S_*$. This ratio measures the number of false positives due to analyzing the library implementation instead of using ground truth specifications, since every points-to edge computed using the implementation but not the ground truth specifications is a false positive. Figure~\ref{fig:evalpt} (c) shows this ratio $R_{\text{pt}}(S_{\text{impl}},S_*)$. For a third of programs, the false positive rate is more than 100\% (i.e., when $R_{\text{pt}}\ge2$), and for four programs, the false positive rate is more than 300\% (i.e., $R_{\text{pt}}\ge4$). The average false positive rate is 115.2\%, and the median is 62.1\%. Furthermore, for two of the programs, there are false negatives (i.e., $R_{\text{pt}}<1$) due to unanalyzable calls to native code.

\subsection{Design Choices}
\label{sec:evalspec}

\begin{figure*}
\begin{minipage}{0.45\textwidth}
\footnotesize
\begin{verbatim}
class StrangeBox { // library
  Object f;
  void set(Object ob) {
    f = ob;
    f = null; }
  Object get() { return f; } }
\end{verbatim}
\end{minipage}
\begin{minipage}{0.45\textwidth}
\footnotesize
\begin{verbatim}
boolean test() { // program
  Object in = new Object(); // o_in
  Box box = new Box(); // o_box
  Object out;
  new Thread(() -> box.set(in)).execute();
  new Thread(() -> out = box.get()).execute();
  return in == out; }
\end{verbatim}
\end{minipage}
\caption{Implementation of the library methods \texttt{set}, \texttt{get}, and \texttt{clone} in the \texttt{StrangeBox} class (left), and an example of a concurrent program using these functions (right).}
\label{fig:strangeexample}
\end{figure*}

Finally, we compare the performance of different design choices for our specification inference algorithm; in particular, we infer specifications for 733 library functions in the Java Collections API using different design choices.

\paragraph{{\bf\em Positive examples: random sampling vs. MCTS.}}

We sampled 2 million candidate path specifications using each algorithm. Random sampling found 3,124 positive examples, whereas MCTS found 10,153.

\paragraph{{\bf\em Object initialization: null vs. instantiation.}}

Each of the 11,613 positive examples passed the unit test constructed using instantiation, but only 7,721 passed when using null initialization, i.e., instantiation finds 50\% more specifications. As discussed above, even with this heuristic, we estimate that the false positive rate of our tool is zero.

\subsection{Discussion}

As we have shown, using ground truth substantially improves precision and soundness compared to analyzing the library implementation. For the one-time cost of writing specifications, we can eliminate imprecision due to deep call hierarchies and unsoundness due to native code, reflection, etc. Such an approach is already used in production static analysis systems to handle hard-to-analyze code~\cite{facebook2017}. However, manually writing specifications that are complete or close to complete is impractical---ground truth specifications for just 12 classes took more than a week to write and contain 1,731 lines of code, but there are more than 4,000 classes in the Java standard library. Typically, manual effort is focused on writing specifications for the most commonly used functions, but this approach leaves a long tail of missing specifications~\cite{bastani2015specification}. Moreover, handwritten specifications can be error prone~\cite{heule2016stratified}, and libraries grow over time---the Java 9 standard library contains more than 2,000 new classes---so specifications must be maintained over time.

We have shown that \toolname automatically covers an order of magnitude more of the Java Collections API compared to the existing, handwritten specifications, and is furthermore very close to ground truth. In addition, we have shown that \toolname can substantially improve recall on an information flow client compared to the handwritten specifications, despite the fact that we have already written specifications specifically for apps in this benchmark. Thus, \toolname substantially improves the practicality of using specifications to model hard-to-analyze code in static analysis. We believe that the remaining gap can be bridged by the human analyst, e.g., by manually inspecting important specifications or by using interactive specification inference~\cite{zhu2013automated,bastani2015specification}.

\section{Discussion}
\label{sec:discussion}

\paragraph{{\bf\em Sources of imprecision.}}

We briefly summarize the two potential sources of imprecision in our specification inference algorithm. First, in the second phase of our algorithm, imprecision can be introduced when we merge FSA states. In particular, such a merge can add infinitely many new path specifications to our set of inferred specifications; we only check precision for specifications up to a bounded length. Thus, newly added specifications that are longer than this bound may be imprecise. In our evaluation, we observe that all the added specifications are precise. Second, in both phases of our analysis, the unit test synthesis algorithm uses a heuristic where it initializes all parameters to non-null values. Thus, the unit test synthesized for a given specification may pass even if the specification is imprecise. In our evaluation, we show that this approximation helps us find many more specifications without introducing any imprecision.

\paragraph{{\bf\em Sources of unsoundness.}}

We briefly summarize the potential sources of unsoundness in our specification inference algorithm, i.e., why it may miss correct specifications. At a high level, there are three sources: (i) the restriction to regular languages, (ii) incompleteness in the search (both in the random samples in phase one, and in the language inference algorithm), and (iii) shortcomings in the unit test synthesis algorithm, i.e., the unit test synthesized for a given specification fails even though the specification is precise.

In our evaluation, we found that unsoundness was entirely due to reason (iii). We believe there are three reasons why a synthesized unit test may fail to pass: (a) heuristics used for scheduling, (b) heuristics used to initialize variables, and (c) concurrency. First, in the scheduling step, there are multiple possible schedules of the statements in the unit test, and our synthesis algorithm uses heuristics to choose a good one (i.e., one that is likely to be a witness). Second, in the initialization step, there are multiple possible ways to initialize variables in the unit test; again, our algorithm uses heuristics to choose a good one. As described in Section~\ref{sec:eval}, in our evaluation, all incorrectly rejected specifications were due to this second reason (e.g., this reason explains our algorithm failed to infer the specifications for \texttt{subList} and \texttt{set}).

The third possible reason is due to concurrency. So far, we have implicitly assumed that code is executed sequentially. In principle, because we are using a flow-insensitive points-to analysis, path specifications are sound with respect to concurrency. However, it may be possible that a unit test must include concurrent code to avoid incorrectly rejecting a specification. For example, consider the specifications for the class \texttt{StrangeBox} and the test program using this class shown in Figure~\ref{fig:strangeexample}. Because our static analysis is flow insensitive, it soundly determines that the argument of \texttt{set} may be aliased with the return value of \texttt{get}. However, for any sequential client code, the statement \texttt{f = ob} in the \texttt{set} method has no effects; thus, such code can never observe that \texttt{ob} and $r_{\texttt{get}}$ may be aliased. In particular, the unit test synthesized by our algorithm for the candidate path specification
\begin{align*}
  s_{\text{strange}}=\texttt{ob}\dashrightarrow\texttt{this}_{\texttt{set}}\rightarrow\texttt{this}_{\texttt{get}}\dashrightarrow r_{\texttt{get}}.
\end{align*}
would fail (thereby rejecting $s_{\text{strange}}$), even though $s_{\text{strange}}$ is precise. Still, a witness exists, i.e., the unit test shown in Figure~\ref{fig:strangeexample}, which executes \texttt{set} and \texttt{get} concurrently.

\paragraph{{\bf\em Potential clients.}}

We believe that the points-to summaries inferred by our algorithm may be useful for a number of clients beyond information flow analysis. In particular, our approach is applicable to any static analysis where soundness is highly desirable, but where state-of-the-art tools nevertheless sacrifice soundness in favor of reducing false positives. This criterion is true for nearly all static analyses designed to find bugs and security vulnerabilities, which frequently depend heavily on points-to analysis, information flow analysis, and taint analysis. Even for the security-critical task of finding malware, existing static analyses are typically unsound with respect to features such as reflection, dynamically loaded code, and native code~\cite{arzt2014flowdroid,feng2014apposcopy}. Using specifications not only improves precision, but also reduces unsoundness due to the use of these features in large libraries. If necessary, the human security analyst can always add specifications to improve the static analysis---for this use case, ATLAS substantially reduces human workload, e.g., in our evaluation, ATLAS inferred 92\% of handwritten specifications for the Java Collections API. On the other hand, our approach is not suitable for static analyses where soundness is crucial, such as compiler optimizations.

\section{Related Work}
\label{sec:related}

\paragraph{{\bf\em Inferring specifications for library code.}}

Techniques have been proposed for mining specifications for library code from executions, e.g., taint specifications (i.e., whether taint flows from the argument to the return value)~\cite{clapp2015modelgen}, functional specifications of library functions~\cite{heule2015mimic}, specifications for x86 instructions~\cite{heule2016stratified}, and specifications for callback control flow~\cite{jeon2016synthesizing}. In contrast, points-to specifications that span multiple functions are more complex properties.
One approach is to infer points-to specifications using data gathered from deployed instrumented apps~\cite{bastani2017eventually}. In contrast, our algorithm actively synthesizes unit tests that exercise the library code and requires no instrumentation of deployed apps.
Another approach is to interact with a human analyst to infer specifications~\cite{zhu2013automated,bastani2015specification,albarghouthi2016maximal}. These approach guarantee soundness, but still require substantial human effort, e.g., in~\cite{bastani2015specification}, the analyst may need to write more than a dozen points-to specifications to analyze a single app. Finally,~\cite{ali2013averroes} uses a static approach to infer callgraph specifications for library code.

\paragraph{{\bf\em Inferring program properties.}}

There has been work inferring program invariants from executions~\cite{nimmer2002automatic}, including approaches using machine learning~\cite{sharma2012interpolants,sharma2013data,sharma2014invariant}. The most closely related work is~\cite{bastani2017synthesizing}, which uses an active learning strategy to infer program input grammars for blackbox code. In contrast, our goal is to infer points-to specifications for library code. There has also been work on specifications encoding desired properties of client programs (rather than encoding behaviors of the library code), both using dynamic analysis~\cite{kremenek2006uncertainty,ramanathan2007static,shoham2008static,livshits2009merlin,beckman2011probabilistic,bastani2015interactively} and using static analysis ~\cite{ammons2002mining,yang2006perracotta}.

\paragraph{{\bf\em Static points-to analysis.}}

There is a large literature on static points-to analysis~\cite{shivers1991control,andersen1994program,wilson1995efficient,fahndrich1998partial,milanova2002parameterized}, including formulations based on set-constraints and context-free language reachability~\cite{reps1998program,kodumal2004set,kodumal2005banshee,sridharan2005demand}. Recent work has focused on improving context-sensitivity~\cite{whaley2004cloning,sridharan2006refinement,liang2011scaling,zhang2014abstraction,smaragdakis2014introspective}. Using specifications in conjunction with these analyses can improve precision, scalability, and even soundness.
One alternative is to use demand driven static analyses to avoid analyzing the entire library code~\cite{sridharan2005demand}; however, these approaches are not designed to work with missing code, and furthermore do not provide much benefit for demanding clients that require analyzing a substantial fraction of the library code.

\section{Conclusion}

Specifications summarizing the points-to effects of library code can be used to increase precision, recall, and scalability of running a static points-to analysis on any client code. By automatically inferring such specifications, \toolname fully automatically achieves all of these benefits without the typical time-consuming and error-prone process of writing specifications by hand. We believe that \toolname is an important step towards improving the usability of static analysis.

\begin{acks}

{\small This material is based on research sponsored by DARPA under agreement number FA84750-14-2-0006. The U.S. Government is authorized to reproduce and distribute reprints for Governmental purposes notwithstanding any copyright notation thereon. The views and conclusions herein are those of the authors and should not be interpreted as necessarily representing the official policies or endorsements either expressed or implied of DARPA or the U.S. Government. This work was also supported by NSF grant CCF-1160904 and a Google Fellowship.}

\end{acks}


\bibliography{paper}


\begin{thebibliography}{48}


\ifx \showCODEN    \undefined \def \showCODEN     #1{\unskip}     \fi
\ifx \showDOI      \undefined \def \showDOI       #1{#1}\fi
\ifx \showISBNx    \undefined \def \showISBNx     #1{\unskip}     \fi
\ifx \showISBNxiii \undefined \def \showISBNxiii  #1{\unskip}     \fi
\ifx \showISSN     \undefined \def \showISSN      #1{\unskip}     \fi
\ifx \showLCCN     \undefined \def \showLCCN      #1{\unskip}     \fi
\ifx \shownote     \undefined \def \shownote      #1{#1}          \fi
\ifx \showarticletitle \undefined \def \showarticletitle #1{#1}   \fi
\ifx \showURL      \undefined \def \showURL       {\relax}        \fi
\providecommand\bibfield[2]{#2}
\providecommand\bibinfo[2]{#2}
\providecommand\natexlab[1]{#1}
\providecommand\showeprint[2][]{arXiv:#2}

\bibitem[\protect\citeauthoryear{Albarghouthi, Dillig, and
  Gurfinkel}{Albarghouthi et~al\mbox{.}}{2016}]%
        {albarghouthi2016maximal}
\bibfield{author}{\bibinfo{person}{Aws Albarghouthi}, \bibinfo{person}{Isil
  Dillig}, {and} \bibinfo{person}{Arie Gurfinkel}.}
  \bibinfo{year}{2016}\natexlab{}.
\newblock \showarticletitle{Maximal specification synthesis}. In
  \bibinfo{booktitle}{{\em POPL}}.
\newblock


\bibitem[\protect\citeauthoryear{Ali and Lhot{\'a}k}{Ali and
  Lhot{\'a}k}{2013}]%
        {ali2013averroes}
\bibfield{author}{\bibinfo{person}{Karim Ali} {and}
  \bibinfo{person}{Ond{\v{r}}ej Lhot{\'a}k}.} \bibinfo{year}{2013}\natexlab{}.
\newblock \showarticletitle{Averroes: Whole-program analysis without the whole
  program}. In \bibinfo{booktitle}{{\em ECOOP}}.
\newblock


\bibitem[\protect\citeauthoryear{Alur, {\v{C}}ern{\`y}, Madhusudan, and
  Nam}{Alur et~al\mbox{.}}{2005}]%
        {alur2005synthesis}
\bibfield{author}{\bibinfo{person}{Rajeev Alur}, \bibinfo{person}{Pavol
  {\v{C}}ern{\`y}}, \bibinfo{person}{Parthasarathy Madhusudan}, {and}
  \bibinfo{person}{Wonhong Nam}.} \bibinfo{year}{2005}\natexlab{}.
\newblock \showarticletitle{Synthesis of interface specifications for Java
  classes}. In \bibinfo{booktitle}{{\em POPL}}.
\newblock


\bibitem[\protect\citeauthoryear{Ammons, Bod{\'\i}k, and Larus}{Ammons
  et~al\mbox{.}}{2002}]%
        {ammons2002mining}
\bibfield{author}{\bibinfo{person}{Glenn Ammons}, \bibinfo{person}{Rastislav
  Bod{\'\i}k}, {and} \bibinfo{person}{James~R Larus}.}
  \bibinfo{year}{2002}\natexlab{}.
\newblock \showarticletitle{Mining specifications}. In \bibinfo{booktitle}{{\em
  POPL}}.
\newblock


\bibitem[\protect\citeauthoryear{Andersen}{Andersen}{1994}]%
        {andersen1994program}
\bibfield{author}{\bibinfo{person}{Lars~Ole Andersen}.}
  \bibinfo{year}{1994}\natexlab{}.
\newblock {\em \bibinfo{title}{Program analysis and specialization for the C
  programming language}}.
\newblock \bibinfo{thesistype}{Ph.D. Dissertation}. \bibinfo{school}{University
  of Cophenhagen}.
\newblock


\bibitem[\protect\citeauthoryear{Angluin}{Angluin}{1987}]%
        {angluin1987learning}
\bibfield{author}{\bibinfo{person}{Dana Angluin}.}
  \bibinfo{year}{1987}\natexlab{}.
\newblock \showarticletitle{Learning regular sets from queries and
  counterexamples}.
\newblock \bibinfo{journal}{{\em Information and computation\/}}
  (\bibinfo{year}{1987}).
\newblock


\bibitem[\protect\citeauthoryear{Arzt, Rasthofer, Fritz, Bodden, Bartel, Klein,
  Le~Traon, Octeau, and McDaniel}{Arzt et~al\mbox{.}}{2014}]%
        {arzt2014flowdroid}
\bibfield{author}{\bibinfo{person}{Steven Arzt}, \bibinfo{person}{Siegfried
  Rasthofer}, \bibinfo{person}{Christian Fritz}, \bibinfo{person}{Eric Bodden},
  \bibinfo{person}{Alexandre Bartel}, \bibinfo{person}{Jacques Klein},
  \bibinfo{person}{Yves Le~Traon}, \bibinfo{person}{Damien Octeau}, {and}
  \bibinfo{person}{Patrick McDaniel}.} \bibinfo{year}{2014}\natexlab{}.
\newblock \showarticletitle{Flowdroid: Precise context, flow, field,
  object-sensitive and lifecycle-aware taint analysis for android apps}. In
  \bibinfo{booktitle}{{\em PLDI}}.
\newblock


\bibitem[\protect\citeauthoryear{Bastani, Anand, and Aiken}{Bastani
  et~al\mbox{.}}{2015a}]%
        {bastani2015interactively}
\bibfield{author}{\bibinfo{person}{Osbert Bastani}, \bibinfo{person}{Saswat
  Anand}, {and} \bibinfo{person}{Alex Aiken}.}
  \bibinfo{year}{2015}\natexlab{a}.
\newblock \showarticletitle{Interactively verifying absence of explicit
  information flows in Android apps}. In \bibinfo{booktitle}{{\em OOPSLA}}.
\newblock


\bibitem[\protect\citeauthoryear{Bastani, Anand, and Aiken}{Bastani
  et~al\mbox{.}}{2015b}]%
        {bastani2015specification}
\bibfield{author}{\bibinfo{person}{Osbert Bastani}, \bibinfo{person}{Saswat
  Anand}, {and} \bibinfo{person}{Alex Aiken}.}
  \bibinfo{year}{2015}\natexlab{b}.
\newblock \showarticletitle{Specification inference using context-free language
  reachability}. In \bibinfo{booktitle}{{\em POPL}}.
\newblock


\bibitem[\protect\citeauthoryear{Bastani, Clapp, Anand, Sharma, and
  Aiken}{Bastani et~al\mbox{.}}{2017a}]%
        {bastani2017eventually}
\bibfield{author}{\bibinfo{person}{Osbert Bastani}, \bibinfo{person}{Lazaro
  Clapp}, \bibinfo{person}{Saswat Anand}, \bibinfo{person}{Rahul Sharma}, {and}
  \bibinfo{person}{Alex Aiken}.} \bibinfo{year}{2017}\natexlab{a}.
\newblock \showarticletitle{Eventually Sound Points-To Analysis with Missing
  Code}.
\newblock \bibinfo{journal}{{\em arXiv preprint arXiv:1711.03436\/}}
  (\bibinfo{year}{2017}).
\newblock


\bibitem[\protect\citeauthoryear{Bastani, Sharma, Aiken, and Liang}{Bastani
  et~al\mbox{.}}{2017b}]%
        {bastani2017synthesizing}
\bibfield{author}{\bibinfo{person}{Osbert Bastani}, \bibinfo{person}{Rahul
  Sharma}, \bibinfo{person}{Alex Aiken}, {and} \bibinfo{person}{Percy Liang}.}
  \bibinfo{year}{2017}\natexlab{b}.
\newblock \showarticletitle{Synthesizing program input grammars}. In
  \bibinfo{booktitle}{{\em PLDI}}.
\newblock


\bibitem[\protect\citeauthoryear{Beckman and Nori}{Beckman and Nori}{2011}]%
        {beckman2011probabilistic}
\bibfield{author}{\bibinfo{person}{Nels~E Beckman} {and}
  \bibinfo{person}{Aditya~V Nori}.} \bibinfo{year}{2011}\natexlab{}.
\newblock \showarticletitle{Probabilistic, modular and scalable inference of
  typestate specifications}. In \bibinfo{booktitle}{{\em PLDI}}.
\newblock


\bibitem[\protect\citeauthoryear{Clapp, Anand, and Aiken}{Clapp
  et~al\mbox{.}}{2015}]%
        {clapp2015modelgen}
\bibfield{author}{\bibinfo{person}{Lazaro Clapp}, \bibinfo{person}{Saswat
  Anand}, {and} \bibinfo{person}{Alex Aiken}.} \bibinfo{year}{2015}\natexlab{}.
\newblock \showarticletitle{Modelgen: mining explicit information flow
  specifications from concrete executions}. In \bibinfo{booktitle}{{\em
  ISSTA}}.
\newblock


\bibitem[\protect\citeauthoryear{Facebook}{Facebook}{2017}]%
        {facebook2017}
\bibfield{author}{\bibinfo{person}{Facebook}.} \bibinfo{year}{2017}\natexlab{}.
\newblock \bibinfo{title}{Adding models}.
\newblock   (\bibinfo{year}{2017}).
\newblock
\showURL{%
\url{http://fbinfer.com/docs/adding-models.html}}


\bibitem[\protect\citeauthoryear{F{\"a}hndrich, Foster, Su, and
  Aiken}{F{\"a}hndrich et~al\mbox{.}}{1998}]%
        {fahndrich1998partial}
\bibfield{author}{\bibinfo{person}{Manuel F{\"a}hndrich},
  \bibinfo{person}{Jeffrey~S Foster}, \bibinfo{person}{Zhendong Su}, {and}
  \bibinfo{person}{Alexander Aiken}.} \bibinfo{year}{1998}\natexlab{}.
\newblock \showarticletitle{Partial online cycle elimination in inclusion
  constraint graphs}. In \bibinfo{booktitle}{{\em PLDI}}.
\newblock


\bibitem[\protect\citeauthoryear{Feng, Anand, Dillig, and Aiken}{Feng
  et~al\mbox{.}}{2014}]%
        {feng2014apposcopy}
\bibfield{author}{\bibinfo{person}{Yu Feng}, \bibinfo{person}{Saswat Anand},
  \bibinfo{person}{Isil Dillig}, {and} \bibinfo{person}{Alex Aiken}.}
  \bibinfo{year}{2014}\natexlab{}.
\newblock \showarticletitle{Apposcopy: Semantics-based detection of android
  malware through static analysis}. In \bibinfo{booktitle}{{\em FSE}}.
\newblock


\bibitem[\protect\citeauthoryear{Fuchs, Chaudhuri, and Foster}{Fuchs
  et~al\mbox{.}}{2009}]%
        {fuchs2009scandroid}
\bibfield{author}{\bibinfo{person}{Adam~P Fuchs}, \bibinfo{person}{Avik
  Chaudhuri}, {and} \bibinfo{person}{Jeffrey~S Foster}.}
  \bibinfo{year}{2009}\natexlab{}.
\newblock \showarticletitle{Scandroid: Automated security certification of
  android}.
\newblock  (\bibinfo{year}{2009}).
\newblock


\bibitem[\protect\citeauthoryear{Heule, Schkufza, Sharma, and Aiken}{Heule
  et~al\mbox{.}}{2016}]%
        {heule2016stratified}
\bibfield{author}{\bibinfo{person}{Stefan Heule}, \bibinfo{person}{Eric
  Schkufza}, \bibinfo{person}{Rahul Sharma}, {and} \bibinfo{person}{Alex
  Aiken}.} \bibinfo{year}{2016}\natexlab{}.
\newblock \showarticletitle{Stratified synthesis: automatically learning the
  x86-64 instruction set}. In \bibinfo{booktitle}{{\em PLDI}}.
\newblock


\bibitem[\protect\citeauthoryear{Heule, Sridharan, and Chandra}{Heule
  et~al\mbox{.}}{2015}]%
        {heule2015mimic}
\bibfield{author}{\bibinfo{person}{Stefan Heule}, \bibinfo{person}{Manu
  Sridharan}, {and} \bibinfo{person}{Satish Chandra}.}
  \bibinfo{year}{2015}\natexlab{}.
\newblock \showarticletitle{Mimic: Computing models for opaque code}. In
  \bibinfo{booktitle}{{\em FSE}}.
\newblock


\bibitem[\protect\citeauthoryear{Jeon, Qiu, Fetter-Degges, Foster, and
  Solar-Lezama}{Jeon et~al\mbox{.}}{2016}]%
        {jeon2016synthesizing}
\bibfield{author}{\bibinfo{person}{Jinseong Jeon}, \bibinfo{person}{Xiaokang
  Qiu}, \bibinfo{person}{Jonathan Fetter-Degges}, \bibinfo{person}{Jeffrey~S
  Foster}, {and} \bibinfo{person}{Armando Solar-Lezama}.}
  \bibinfo{year}{2016}\natexlab{}.
\newblock \showarticletitle{Synthesizing framework models for symbolic
  execution}. In \bibinfo{booktitle}{{\em ICSE}}.
\newblock


\bibitem[\protect\citeauthoryear{Kocsis and Szepesv{\'a}ri}{Kocsis and
  Szepesv{\'a}ri}{2006}]%
        {kocsis2006bandit}
\bibfield{author}{\bibinfo{person}{Levente Kocsis} {and} \bibinfo{person}{Csaba
  Szepesv{\'a}ri}.} \bibinfo{year}{2006}\natexlab{}.
\newblock \showarticletitle{Bandit based monte-carlo planning}. In
  \bibinfo{booktitle}{{\em ECML}}.
\newblock


\bibitem[\protect\citeauthoryear{Kodumal and Aiken}{Kodumal and Aiken}{2004}]%
        {kodumal2004set}
\bibfield{author}{\bibinfo{person}{John Kodumal} {and} \bibinfo{person}{Alex
  Aiken}.} \bibinfo{year}{2004}\natexlab{}.
\newblock \showarticletitle{The set constraint/CFL reachability connection in
  practice}. In \bibinfo{booktitle}{{\em PLDI}}.
\newblock


\bibitem[\protect\citeauthoryear{Kodumal and Aiken}{Kodumal and Aiken}{2005}]%
        {kodumal2005banshee}
\bibfield{author}{\bibinfo{person}{John Kodumal} {and}
  \bibinfo{person}{Alexander Aiken}.} \bibinfo{year}{2005}\natexlab{}.
\newblock \showarticletitle{Banshee: A scalable constraint-based analysis
  toolkit}. In \bibinfo{booktitle}{{\em SAS}}.
\newblock


\bibitem[\protect\citeauthoryear{Kremenek, Twohey, Back, Ng, and
  Engler}{Kremenek et~al\mbox{.}}{2006}]%
        {kremenek2006uncertainty}
\bibfield{author}{\bibinfo{person}{Ted Kremenek}, \bibinfo{person}{Paul
  Twohey}, \bibinfo{person}{Godmar Back}, \bibinfo{person}{Andrew Ng}, {and}
  \bibinfo{person}{Dawson Engler}.} \bibinfo{year}{2006}\natexlab{}.
\newblock \showarticletitle{From uncertainty to belief: Inferring the
  specification within}. In \bibinfo{booktitle}{{\em OSDI}}.
\newblock


\bibitem[\protect\citeauthoryear{Liang and Naik}{Liang and Naik}{2011}]%
        {liang2011scaling}
\bibfield{author}{\bibinfo{person}{Percy Liang} {and} \bibinfo{person}{Mayur
  Naik}.} \bibinfo{year}{2011}\natexlab{}.
\newblock \showarticletitle{Scaling abstraction refinement via pruning}. In
  \bibinfo{booktitle}{{\em PLDI}}.
\newblock


\bibitem[\protect\citeauthoryear{Livshits, Nori, Rajamani, and
  Banerjee}{Livshits et~al\mbox{.}}{2009}]%
        {livshits2009merlin}
\bibfield{author}{\bibinfo{person}{Benjamin Livshits},
  \bibinfo{person}{Aditya~V Nori}, \bibinfo{person}{Sriram~K Rajamani}, {and}
  \bibinfo{person}{Anindya Banerjee}.} \bibinfo{year}{2009}\natexlab{}.
\newblock \showarticletitle{Merlin: specification inference for explicit
  information flow problems}. In \bibinfo{booktitle}{{\em PLDI}}.
\newblock


\bibitem[\protect\citeauthoryear{Melski and Reps}{Melski and Reps}{2000}]%
        {melski2000interconvertibility}
\bibfield{author}{\bibinfo{person}{David Melski} {and} \bibinfo{person}{Thomas
  Reps}.} \bibinfo{year}{2000}\natexlab{}.
\newblock \showarticletitle{Interconvertibility of a class of set constraints
  and context-free-language reachability}.
\newblock \bibinfo{journal}{{\em TCS\/}} (\bibinfo{year}{2000}).
\newblock


\bibitem[\protect\citeauthoryear{Milanova, Rountev, and Ryder}{Milanova
  et~al\mbox{.}}{2002}]%
        {milanova2002parameterized}
\bibfield{author}{\bibinfo{person}{Ana Milanova}, \bibinfo{person}{Atanas
  Rountev}, {and} \bibinfo{person}{Barbara~G Ryder}.}
  \bibinfo{year}{2002}\natexlab{}.
\newblock \showarticletitle{Parameterized object sensitivity for points-to and
  side-effect analyses for Java}. In \bibinfo{booktitle}{{\em ISSTA}}.
\newblock


\bibitem[\protect\citeauthoryear{Naik, Aiken, and Whaley}{Naik
  et~al\mbox{.}}{2006}]%
        {naik2006effective}
\bibfield{author}{\bibinfo{person}{Mayur Naik}, \bibinfo{person}{Alex Aiken},
  {and} \bibinfo{person}{John Whaley}.} \bibinfo{year}{2006}\natexlab{}.
\newblock \showarticletitle{Effective static race detection for Java}. In
  \bibinfo{booktitle}{{\em PLDI}}.
\newblock


\bibitem[\protect\citeauthoryear{Nimmer and Ernst}{Nimmer and Ernst}{2002}]%
        {nimmer2002automatic}
\bibfield{author}{\bibinfo{person}{Jeremy~W Nimmer} {and}
  \bibinfo{person}{Michael~D Ernst}.} \bibinfo{year}{2002}\natexlab{}.
\newblock \showarticletitle{Automatic generation of program specifications}. In
  \bibinfo{booktitle}{{\em ISSTA}}.
\newblock


\bibitem[\protect\citeauthoryear{Oncina and Garc{\'\i}a}{Oncina and
  Garc{\'\i}a}{1992}]%
        {oncina1992identifying}
\bibfield{author}{\bibinfo{person}{Jos{\'e} Oncina} {and}
  \bibinfo{person}{Pedro Garc{\'\i}a}.} \bibinfo{year}{1992}\natexlab{}.
\newblock \showarticletitle{Identifying regular languages in polynomial time}.
\newblock \bibinfo{journal}{{\em Advances in Structural and Syntactic Pattern
  Recognition\/}} (\bibinfo{year}{1992}).
\newblock


\bibitem[\protect\citeauthoryear{Ramanathan, Grama, and Jagannathan}{Ramanathan
  et~al\mbox{.}}{2007}]%
        {ramanathan2007static}
\bibfield{author}{\bibinfo{person}{Murali~Krishna Ramanathan},
  \bibinfo{person}{Ananth Grama}, {and} \bibinfo{person}{Suresh Jagannathan}.}
  \bibinfo{year}{2007}\natexlab{}.
\newblock \showarticletitle{Static specification inference using predicate
  mining}. In \bibinfo{booktitle}{{\em PLDI}}.
\newblock


\bibitem[\protect\citeauthoryear{Reps}{Reps}{1998}]%
        {reps1998program}
\bibfield{author}{\bibinfo{person}{Thomas Reps}.}
  \bibinfo{year}{1998}\natexlab{}.
\newblock \showarticletitle{Program analysis via graph reachability}.
\newblock \bibinfo{journal}{{\em Information and software technology\/}}
  (\bibinfo{year}{1998}).
\newblock


\bibitem[\protect\citeauthoryear{Sabelfeld and Myers}{Sabelfeld and
  Myers}{2003}]%
        {sabelfeld2003language}
\bibfield{author}{\bibinfo{person}{Andrei Sabelfeld} {and}
  \bibinfo{person}{Andrew~C Myers}.} \bibinfo{year}{2003}\natexlab{}.
\newblock \showarticletitle{Language-based information-flow security}.
\newblock \bibinfo{journal}{{\em IEEE Journal on selected areas in
  communications\/}} (\bibinfo{year}{2003}).
\newblock


\bibitem[\protect\citeauthoryear{Sharma and Aiken}{Sharma and Aiken}{2014}]%
        {sharma2014invariant}
\bibfield{author}{\bibinfo{person}{Rahul Sharma} {and} \bibinfo{person}{Alex
  Aiken}.} \bibinfo{year}{2014}\natexlab{}.
\newblock \showarticletitle{From invariant checking to invariant inference
  using randomized search}. In \bibinfo{booktitle}{{\em CAV}}.
\newblock


\bibitem[\protect\citeauthoryear{Sharma, Nori, and Aiken}{Sharma
  et~al\mbox{.}}{2012}]%
        {sharma2012interpolants}
\bibfield{author}{\bibinfo{person}{Rahul Sharma}, \bibinfo{person}{Aditya~V
  Nori}, {and} \bibinfo{person}{Alex Aiken}.} \bibinfo{year}{2012}\natexlab{}.
\newblock \showarticletitle{Interpolants as classifiers}. In
  \bibinfo{booktitle}{{\em CAV}}.
\newblock


\bibitem[\protect\citeauthoryear{Sharma, Schkufza, Churchill, and Aiken}{Sharma
  et~al\mbox{.}}{2013}]%
        {sharma2013data}
\bibfield{author}{\bibinfo{person}{Rahul Sharma}, \bibinfo{person}{Eric
  Schkufza}, \bibinfo{person}{Berkeley Churchill}, {and} \bibinfo{person}{Alex
  Aiken}.} \bibinfo{year}{2013}\natexlab{}.
\newblock \showarticletitle{Data-driven equivalence checking}. In
  \bibinfo{booktitle}{{\em OOPSLA}}.
\newblock


\bibitem[\protect\citeauthoryear{Shivers}{Shivers}{1991}]%
        {shivers1991control}
\bibfield{author}{\bibinfo{person}{Olin Shivers}.}
  \bibinfo{year}{1991}\natexlab{}.
\newblock {\em \bibinfo{title}{Control-flow analysis of higher-order
  languages}}.
\newblock \bibinfo{thesistype}{Ph.D. Dissertation}. \bibinfo{school}{Citeseer}.
\newblock


\bibitem[\protect\citeauthoryear{Shoham, Yahav, Fink, and Pistoia}{Shoham
  et~al\mbox{.}}{2007}]%
        {shoham2008static}
\bibfield{author}{\bibinfo{person}{Sharon Shoham}, \bibinfo{person}{Eran
  Yahav}, \bibinfo{person}{Stephen Fink}, {and} \bibinfo{person}{Marco
  Pistoia}.} \bibinfo{year}{2007}\natexlab{}.
\newblock \showarticletitle{Static specification mining using automata-based
  abstractions}. In \bibinfo{booktitle}{{\em ISSTA}}.
\newblock


\bibitem[\protect\citeauthoryear{Smaragdakis, Kastrinis, and
  Balatsouras}{Smaragdakis et~al\mbox{.}}{2014}]%
        {smaragdakis2014introspective}
\bibfield{author}{\bibinfo{person}{Yannis Smaragdakis}, \bibinfo{person}{George
  Kastrinis}, {and} \bibinfo{person}{George Balatsouras}.}
  \bibinfo{year}{2014}\natexlab{}.
\newblock \showarticletitle{Introspective analysis: context-sensitivity, across
  the board}. In \bibinfo{booktitle}{{\em PLDI}}.
\newblock


\bibitem[\protect\citeauthoryear{Sridharan and Bod{\'\i}k}{Sridharan and
  Bod{\'\i}k}{2006}]%
        {sridharan2006refinement}
\bibfield{author}{\bibinfo{person}{Manu Sridharan} {and}
  \bibinfo{person}{Rastislav Bod{\'\i}k}.} \bibinfo{year}{2006}\natexlab{}.
\newblock \showarticletitle{Refinement-based context-sensitive points-to
  analysis for Java}. In \bibinfo{booktitle}{{\em PLDI}}.
\newblock


\bibitem[\protect\citeauthoryear{Sridharan, Gopan, Shan, and
  Bod{\'\i}k}{Sridharan et~al\mbox{.}}{2005}]%
        {sridharan2005demand}
\bibfield{author}{\bibinfo{person}{Manu Sridharan}, \bibinfo{person}{Denis
  Gopan}, \bibinfo{person}{Lexin Shan}, {and} \bibinfo{person}{Rastislav
  Bod{\'\i}k}.} \bibinfo{year}{2005}\natexlab{}.
\newblock \showarticletitle{Demand-driven points-to analysis for Java}. In
  \bibinfo{booktitle}{{\em OOPSLA}}.
\newblock


\bibitem[\protect\citeauthoryear{Vall{\'e}e-Rai, Co, Gagnon, Hendren, Lam, and
  Sundaresan}{Vall{\'e}e-Rai et~al\mbox{.}}{1999}]%
        {vallee1999soot}
\bibfield{author}{\bibinfo{person}{Raja Vall{\'e}e-Rai}, \bibinfo{person}{Phong
  Co}, \bibinfo{person}{Etienne Gagnon}, \bibinfo{person}{Laurie Hendren},
  \bibinfo{person}{Patrick Lam}, {and} \bibinfo{person}{Vijay Sundaresan}.}
  \bibinfo{year}{1999}\natexlab{}.
\newblock \showarticletitle{Soot-a Java bytecode optimization framework}. In
  \bibinfo{booktitle}{{\em CASCON}}.
\newblock


\bibitem[\protect\citeauthoryear{Whaley and Lam}{Whaley and Lam}{2004}]%
        {whaley2004cloning}
\bibfield{author}{\bibinfo{person}{John Whaley} {and} \bibinfo{person}{Monica
  Lam}.} \bibinfo{year}{2004}\natexlab{}.
\newblock \showarticletitle{Cloning-based context-sensitive pointer alias
  analysis using binary decision diagrams}. In \bibinfo{booktitle}{{\em PLDI}}.
\newblock


\bibitem[\protect\citeauthoryear{Wilson and Lam}{Wilson and Lam}{1995}]%
        {wilson1995efficient}
\bibfield{author}{\bibinfo{person}{Robert~P Wilson} {and}
  \bibinfo{person}{Monica~S Lam}.} \bibinfo{year}{1995}\natexlab{}.
\newblock \showarticletitle{Efficient context-sensitive pointer analysis for C
  programs}. In \bibinfo{booktitle}{{\em PLDI}}.
\newblock


\bibitem[\protect\citeauthoryear{Yang, Evans, Bhardwaj, Bhat, and Das}{Yang
  et~al\mbox{.}}{2006}]%
        {yang2006perracotta}
\bibfield{author}{\bibinfo{person}{Jinlin Yang}, \bibinfo{person}{David Evans},
  \bibinfo{person}{Deepali Bhardwaj}, \bibinfo{person}{Thirumalesh Bhat}, {and}
  \bibinfo{person}{Manuvir Das}.} \bibinfo{year}{2006}\natexlab{}.
\newblock \showarticletitle{Perracotta: mining temporal API rules from
  imperfect traces}. In \bibinfo{booktitle}{{\em ICSE}}.
\newblock


\bibitem[\protect\citeauthoryear{Zhang, Mangal, Grigore, Naik, and Yang}{Zhang
  et~al\mbox{.}}{2014}]%
        {zhang2014abstraction}
\bibfield{author}{\bibinfo{person}{Xin Zhang}, \bibinfo{person}{Ravi Mangal},
  \bibinfo{person}{Radu Grigore}, \bibinfo{person}{Mayur Naik}, {and}
  \bibinfo{person}{Hongseok Yang}.} \bibinfo{year}{2014}\natexlab{}.
\newblock \showarticletitle{On abstraction refinement for program analyses in
  Datalog}. In \bibinfo{booktitle}{{\em PLDI}}.
\newblock


\bibitem[\protect\citeauthoryear{Zhu, Dillig, and Dillig}{Zhu
  et~al\mbox{.}}{2013}]%
        {zhu2013automated}
\bibfield{author}{\bibinfo{person}{Haiyan Zhu}, \bibinfo{person}{Thomas
  Dillig}, {and} \bibinfo{person}{Isil Dillig}.}
  \bibinfo{year}{2013}\natexlab{}.
\newblock \showarticletitle{Automated inference of library specifications for
  source-sink property verification}. In \bibinfo{booktitle}{{\em APLAS}}.
\newblock


\end{thebibliography}

\clearpage
\appendix
\section{Static Points-To Analysis with Regular Sets of Path Specifications}
\label{sec:pathanalysis}

In this section, we describe how to run our static points-to analysis in conjunction with a possibly infinite regular set $S$ of path specifications (assumed to be represented as an FSA, i.e., $S=\mathcal{L}(\hat{M})$). In particular, our static analysis converts $S$ to a set $\tilde{S}$ of \emph{code fragment specifications}, which are replacements for the library code that have the same points-to effects as encoded by $S$.

Given path specifications $S$, our static analysis constructs \emph{equivalent} code fragment specifications $\tilde{S}$, i.e., $\overline{G}(P,S)=\overline{G}(P,\tilde{S})$. In other words, $\tilde{S}$ has the same semantics as $S$ with respect to our static points-to analysis. One detail in our definition of equivalence is that $\overline{G}(P,\tilde{S})$ may contain additional vertices corresponding to variables and abstract objects in the code fragment specifications; we omit these extra vertices and their relations at the end of the static analysis.

\subsection{Converting a Single Path Specification}

For intuition, we begin by describing how to convert a single path specification
\begin{align*}
s=(z_1\dashrightarrow w_1\to...\to z_k\dashrightarrow w_k)
\end{align*}
into an equivalent set of code fragment specifications, where $A_i=\Alias$ for each $i$ and $z_1$ is a parameter. Let the code fragment specifications $\tilde{S}$ corresponding to $s$ be:
\begin{align*}
m_1&=\{w_1.f_1\gets z_1\} \\
m_2&=\{t_2\gets z_2.f_1,\sss w_2.f_2\gets t_2\} \\
&... \\
m_k&=\{w_k\gets z_k.f_{k-1}\},
\end{align*}
where $f_1,...f_{k-1}\in\mathcal{F}$ are fresh fields and $t_2,...,t_{k-1}$ are fresh variables. Then:
\begin{proposition}
\label{prop:codegensingle}
\rm
We have $\overline{G}(P,\tilde{S})=\overline{G}(P,\{s\})\cup\overline{G}'(P,\tilde{S})$, where $\overline{G}'(P,\tilde{S})$ consists of the edges in $\overline{G}(P,\tilde{S})$ that refer to vertices corresponding to variables and abstract objects in $\tilde{S}$.
\end{proposition}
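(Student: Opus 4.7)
The plan is to establish the equation by a two-sided inclusion after restricting to edges between original program vertices (i.e., those not introduced by $\tilde{S}$). For the $\supseteq$ direction, I would show that any edge the implication associated to $s$ would add can also be derived from $\tilde{S}$ via the grammar $C_{\text{pt}}$; for the $\subseteq$ direction, I would argue that the fresh fields $f_1,\ldots,f_{k-1}$ and fresh variables $t_2,\ldots,t_{k-1}$ are so tightly constrained that the only new inter-program-variable edge they can license is exactly the one $s$ would add.

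\textbf{Forward direction.} Suppose the antecedent of $s$ holds in $\widetilde G(P,\tilde{S})$, i.e.\ $w_i \xrightarrow{\Alias} z_{i+1}\in\widetilde G(P,\tilde{S})$ for all $i\in[k-1]$. The construction rules in Figure~\ref{fig:constructgraph} applied to $\tilde{S}$ produce the edges $z_1\xrightarrow{\Store[f_1]} w_1$, $z_i\xrightarrow{\Load[f_{i-1}]} t_i$, $t_i\xrightarrow{\Store[f_i]} w_i$ for $2\le i\le k-1$, and $z_k\xrightarrow{\Load[f_{k-1}]} w_k$. Starting from $z_1\xrightarrow{\Transfer} z_1$ (via $\Transfer\to\varepsilon$) and iteratively applying the production $\Transfer\to\Transfer\ \Store[f]\ \Alias\ \Load[f]$ along the chain, one obtains $z_1\xrightarrow{\Transfer} w_k$. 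That edge is precisely the conclusion of $s$, so $\widetilde G(P,\{s\})$ is contained in $\widetilde G(P,\tilde{S})$ once we note that $\{s\}$ adds no other relations between original program variables, and all edges in $\widetilde G(P,\varnothing)$ persist in $\widetilde G(P,\tilde{S})$.

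\textbf{Reverse direction.} I would fix an edge $u\xrightarrow{A} v$ in $\widetilde G(P,\tilde{S})$ with $u,v$ original program variables, and induct on its derivation in $C_{\text{pt}}$. At the leaves we have edges from the graph $G$ extracted from $P$ together with the store/load edges introduced by $\tilde{S}$. The key structural observation is that the fresh field $f_i$ occurs as an edge label only in one $\Store[f_i]$-edge (from $z_1$ or $t_i$ into $w_i$) and one $\Load[f_i]$-edge (from $z_{i+1}$ into $t_{i+1}$ or $w_k$), and the fresh variables $t_i$ appear only at the interior of those statements. Therefore any subpath labelled by a string containing $\Store[f_i]$ or $\Load[f_i]$ must match the template $z\,\Store[f_i]\,w_i\,\Alias\,z_{i+1}\,\Load[f_i]\,t$ enforced by the only production that consumes matching store/load labels. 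Unrolling this template shows that any derivation whose endpoints are both original program variables either avoids the fresh fields entirely (and hence is derivable already in $\widetilde G(P,\varnothing)\subseteq\widetilde G(P,\{s\})$) or threads the entire bridge $z_1\ldots w_k$, producing an edge already licensed by the implication of $s$ because each internal hop $w_i\to z_{i+1}$ must be an $\Alias$ edge.

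\textbf{Main obstacle.} The delicate step is the structural induction in the reverse direction. Andersen's grammar conflates quite a few productions, and the $\Alias$ rule is defined through $\overline{\New}\,\New$, which in principle lets abstract objects allocated inside $\tilde{S}$ contribute to alias edges among original variables. I would address this by arguing that the code fragments $m_1,\ldots,m_k$ allocate no new abstract objects, and that the fresh $t_i$ variables are never written into original program fields, so no $\FlowsTo$ edge can enter an original program variable except through the bridge described above. Formalising this ``no leakage'' lemma is the crux; once it is in place, the induction goes through routinely and the proposition follows.
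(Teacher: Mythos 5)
Your proposal is correct and follows essentially the same route as the paper's own (sketched) proof: the forward direction chains the $\Store[f_i]~\Alias~\Load[f_i]$ edges through the $\Transfer$ production exactly as the paper does, and your reverse direction rests on the same key observation the paper uses, namely that each fresh field $f_i$ labels exactly one store edge and one load edge, so the only derivations crossing the fragments must thread the entire bridge from $z_1$ to $w_k$. Your "no leakage" concern about allocations and the $t_i$ variables corresponds to the paper's remark that none of the intermediate $t_i$ participate in $\Assign$ edges, so there is no substantive difference in approach.
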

\begin{proof}
(sketch) First, we show that $\overline{G}(P,\{s\}\subseteq\overline{G}(P,\tilde{S})$. Suppose that the premise of $s$ holds, i.e., $z_i\xrightarrow{A_i}w_{i+1}\in\overline{G}$ for each $i$. Then, the static analysis computes $z_1\xrightarrow{\Transfer}w_k\in\overline{G}(P,\{s\})$; we need to show that $z_1\xrightarrow{\Transfer}w_k\in\overline{G}(P,\tilde{S})$ as well. Note that we have
\begin{align*}
&z_1\xrightarrow{\Store[f_1]}w_1\xrightarrow{\Alias}z_2\xrightarrow{\Load[f_1]}t_2\in\overline{G}(P,\tilde{S}) \\
&t_2\xrightarrow{\Store[f_2]}w_2\xrightarrow{\Alias}z_3\xrightarrow{\Load[f_2]}t_3\in\overline{G}(P,\tilde{S}) \\
&... \\
&t_{k-1}\xrightarrow{\Store[f_{k-1}]}w_{k-1}\xrightarrow{\Alias}z_k\xrightarrow{\Load[f_{k-1}]}w_k\in\overline{G}(P,\tilde{S}).
\end{align*}
By induction, the static analysis computes $z_1\xrightarrow{\Transfer}t_i\in\overline{G}(P,\tilde{S})$ for each $i\in[k-1]$. Thus, the static analysis computes $z_1\xrightarrow{\Transfer}w_k\in\overline{G}(P,\tilde{S})$, as claimed.

\begin{figure}
\centering
\small
  \[\text{(initial parameter)}~\dfrac{q_{\text{init}}\xrightarrow{z}q\xrightarrow{w}r\in\hat{M},\hspace{0.1in}z=p_m,\hspace{0.1in}w\in\{p_m,r_m\}}{w.f_r\gets z\in m}\]
  \[\text{(initial return)}~\dfrac{q_{\text{init}}\xrightarrow{z}q\xrightarrow{w}r\in\hat{M},\hspace{0.1in}z=r_m,\hspace{0.1in}w\in\{p_m,r_m\}}{t\gets X(),~z\gets t,~w.f_r\gets t\in m}\]
  \[\text{(final parameter)}~\dfrac{p\xrightarrow{z}q\xrightarrow{w}q_{\text{fin}}\in\hat{M},\hspace{0.1in}z=p_m,\hspace{0.1in}w=r_m}{w\gets z.f_p\in m}\]
  \[\text{(final return)}~\dfrac{p\xrightarrow{z}q\xrightarrow{w}q_{\text{fin}}\in\hat{M},\hspace{0.1in}z=r_m,\hspace{0.1in}w=r_m}{t\gets X(),~z.f_p\gets t,~w\gets t\in m}\]
  \[\text{(}A_i=\Alias\text{)}~\dfrac{p\xrightarrow{z}q\xrightarrow{w}r\in\hat{M},\hspace{0.1in}z=p_m,\hspace{0.1in}w=p_m}{t\gets z.f_p,~w.f_r\gets t\in m}\]
  \[\text{(}A_i=\Transfer\text{)}~\dfrac{p\xrightarrow{z}q\xrightarrow{w}r\in\hat{M},\hspace{0.1in}z=p_m,\hspace{0.1in}w=r_m}{w\text X(),~t\gets z.f_p,~w.f_r\gets t\in m}\]
  \[\text{(}A_i=\overline{\Transfer}\text{)}~\dfrac{p\xrightarrow{z}q\xrightarrow{w}r\in\hat{M},\hspace{0.1in}\{z,w\}\subseteq\{p_m,r_m\}}{z\gets X(),~t\gets w.f_r,~z.f_p\gets t\in m}\]
  \[\text{(initial final)}~\dfrac{q_{\text{init}}\xrightarrow{z}q\xrightarrow{w}q_{\text{fin}}\in\hat{M},\hspace{0.1in}\{z,w\}\subseteq\{p_m,r_m\}}{w\gets z\in m}\]
\caption{Rules for generating code fragment specifications from path specifications defined by a finite state automaton $\hat{M}=(Q,\mathcal{V}_{\text{path}},\delta,q_{\text{init}},Q_{\text{fin}})$, where for simplicity we assume $\hat{M}$ has a single accept state $q_{\text{fin}}$.}
\label{fig:codegen}
\end{figure}

\begin{figure*}
\footnotesize
\centering
\begin{tabular}{lll}
\hline
\\
\multicolumn{1}{c}{{\bf Candidate (Regular Expression)}} & \multicolumn{1}{c}{{\bf Candidate (Finite State Automaton)}} & \multicolumn{1}{c}{{\bf Code Fragments}} \\\\
\hline
\\
\hspace{0.025in}
\begin{minipage}{2.0in}
$\begin{array}{l}
\texttt{ob}\dashrightarrow\texttt{this}_{\texttt{set}}\to\texttt{this}_{\texttt{get}}\dashrightarrow r_{\texttt{get}}
\end{array}$
\end{minipage}
&
\begin{minipage}{2.8in}
$\begin{tikzcd}[column sep=3.5em,row sep=2em]
q_{\text{init}} \arrow[r,"{\texttt{ob}}"] 
& q_1 \arrow[r,"{\texttt{this}_{\texttt{set}}}"]
& q_{\texttt{f}} \arrow[r,"{\texttt{this}_{\texttt{get}}}"]
& q_2 \arrow[r,"{r_{\texttt{get}}}"]
& q_{\text{fin}}
\end{tikzcd}$
\end{minipage}
&
\begin{minipage}{1.6in}
\begin{verbatim}
void set(Object ob) { f = ob; }
Object get() { return f; }
\end{verbatim}
\end{minipage}
\hspace{0.025in}
\\\\
\hline
\\
\hspace{0.025in}
\begin{minipage}{2.0in}
$\begin{array}{ll}
\texttt{ob}\dashrightarrow\texttt{this}_{\texttt{set}}&\hspace{-0.1in}\blue{\big(}\to\texttt{this}_{\texttt{clone}}\dashrightarrow r_{\texttt{clone}}\blue{\big)^*} \\
&\hspace{-0.1in}\to\texttt{this}_{\texttt{get}}\dashrightarrow r_{\texttt{get}}
\end{array}$
\end{minipage}
&
\begin{minipage}{2.8in}
$\begin{tikzcd}[column sep=3.5em,row sep=2em]
q_{\text{init}} \arrow[r,"{\texttt{ob}}"] 
& q_1 \arrow[r,"{\texttt{this}_{\texttt{set}}}"]
& q_{\texttt{f}} \arrow[r,"{\texttt{this}_{\texttt{get}}}"] \arrow[bend right]{d}[xshift=-11ex, yshift=-2ex]{\texttt{this}_{\texttt{clone}}}
& q_2 \arrow[r,"{r_{\texttt{get}}}"]
& q_{\text{fin}} \\
&& q_3 \arrow[bend right]{u}[xshift = 8ex, yshift=-2ex]{r_{\texttt{clone}}}
\end{tikzcd}$
\end{minipage}
&
\begin{minipage}{1.6in}
\begin{verbatim}
void set(Object ob) { f = ob; }
Object get() { return f; }
Box clone() {
  Box b = new Box(); // ~o_clone
  b.f = f;
  return b; } }
\end{verbatim}
\end{minipage}
\hspace{0.025in}
\\\\
\hline
\\
\hspace{0.025in}
\begin{minipage}{2.0in}
$\begin{array}{ll}
\multicolumn{2}{l}{\texttt{ob}\dashrightarrow\texttt{this}_{\texttt{set}}\to\texttt{this}_{\texttt{get}}\dashrightarrow r_{\texttt{get}}} \vspace{0.1in} \\
\blue{+}~\texttt{ob}\dashrightarrow\texttt{this}_{\texttt{set}}&\hspace{-0.1in}\to\texttt{this}_{\texttt{clone}}\dashrightarrow r_{\texttt{clone}} \\
&\hspace{-0.1in}\to\texttt{this}_{\texttt{get}}\to r_{\texttt{get}} \vspace{0.1in} \\
\blue{+}~\texttt{ob}\dashrightarrow\texttt{this}_{\texttt{set}}&\hspace{-0.1in}\to\texttt{this}_{\texttt{clone}}\dashrightarrow r_{\texttt{clone}} \\
&\hspace{-0.1in}\to\texttt{this}_{\texttt{clone}}\dashrightarrow r_{\texttt{clone}} \\
&\hspace{-0.1in}\to\texttt{this}_{\texttt{get}}\dashrightarrow r_{\texttt{get}}
\end{array}$
\end{minipage}
&
\begin{minipage}{2.8in}
$\begin{tikzcd}[column sep=3.5em,row sep=1.6em]
q_{\text{init}} \arrow[r,"{\texttt{ob}}"] 
& q_1 \arrow[r,"{\texttt{this}_{\texttt{set}}}"]
& q_{\texttt{f}} \arrow[r,"{\texttt{this}_{\texttt{get}}}"] \arrow[d]{u}[xshift=-12ex]{\texttt{this}_{\texttt{clone}}}
& q_2 \arrow[r,"{r_{\texttt{get}}}"]
& q_{\text{fin}} \\
&& q_3 \arrow[d]{u}[xshift=-8ex]{r_{\texttt{clone}}}
&
&
\\
&& q_{\texttt{g}} \arrow[bend right]{uur}[xshift=6ex,yshift=-5ex]{\texttt{this}_{\texttt{get}}} \arrow[d]{u}[xshift=-12ex]{\texttt{this}_{\texttt{clone}}}
\\
&& q_4 \arrow[d]{u}[xshift=-8ex]{r_{\texttt{clone}}} \\
&& q_{\texttt{h}} \arrow[bend right=60]{uuuur}[xshift=7ex,yshift=-7ex]{\texttt{this}_{\texttt{get}}}
\end{tikzcd}$
\end{minipage}
&
\begin{minipage}{1.6in}
\begin{verbatim}
void set(Object ob) { f = ob; }
Object get() {
  return f;
  return g;
  return h; }
Box clone() {
  Box b = new Box(); // ~o_clone
  b.g = f;
  b.h = g;
  return b; } }
\end{verbatim}
\end{minipage}
\hspace{0.025in}
\\\\
\hline
\end{tabular}
\caption{
  Examples of candidate code fragment specifications (left column), and the equivalent path specifications as a regular expression (middle column) and as a finite state automaton (right column).
}
\label{fig:pathexamples}
\end{figure*}

Next, we show the converse, i.e., that $\overline{G}(P,\tilde{S})\subseteq\overline{G}(P,S)\cup\overline{G}'(P,\tilde{S})$. First, note that the only production with $\Store[f]$ is
\begin{align*}
\Transfer\to\Transfer~\Store[f]~\Alias~\Load[f].
\end{align*}
Since each $f_i$ is a fresh field, there is only one edge labeled $\Store[f_i]$ and only one edge labeled $\Load[f_i]$. Thus, this production can only be triggered if (i) $z_i\xrightarrow{\Alias}w_i\in\overline{G}(P,\tilde{S})$, and (ii) for some vertex $x$, $x\xrightarrow{\Transfer}t_i\in\overline{G}(P,\tilde{S})$. If triggered, the static analysis adds an edge $x\xrightarrow{\Transfer}t_{i+1}$ to $\overline{G}(P,\tilde{S})$. For $i=1$, the only vertices $x$ satisfying the second condition are $x=z_1$ and $x=t_1$. By induction, if $w_i\xrightarrow{\Alias}z_{i+1}\in\overline{G}(P,\tilde{S})$ for each $i$, we have
\begin{align*}
z_1&\xrightarrow{\Transfer}t_i\in\overline{G}(P,\tilde{S}) \\
t_j&\xrightarrow{\Transfer}t_i\in\overline{G}(P,\tilde{S})
\end{align*}
for each $j\le i$. None of the $t_i$ are part of an $\Assign$ edge except $t_1$ and $t_k$; for the latter, the production $\Transfer\to\Transfer~\Assign$ triggers and we get $z_1\xrightarrow{\Transfer}w_k\in\overline{G}(P,\tilde{S})$. This edge is the only one in $\overline{G}(P,\tilde{S})$ that does not refer to vertices extracted from the code fragments, so the claim follows.
\end{proof}

\subsection{Converting a Regular Set of Path Specifications}

Our construction generalizes straightforwardly to constructing code fragment specifications from $\hat{M}$. For each state $q\in Q$, we introduce a fresh field $f_q\in\mathcal{F}$. Intuitively, transitions into $q$ correspond to stores into $f_q$, and transitions coming out of $q$ correspond to loads into $f_q$. In particular, we include statements in $m$ according to the rules in Figure~\ref{fig:codegen}.

The following guarantee follows similarly to the proof of Proposition~\ref{prop:codegensingle}:
\begin{proposition}
\label{PROP:CODEGEN}
\rm
We have $\overline{G}(P,\tilde{S})=\overline{G}(P,S)\cup\overline{G}'(P,\tilde{S})$, where $\overline{G}'(P,\tilde{S})$ is defined as before.
\end{proposition}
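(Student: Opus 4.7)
The proof will mirror the single-specification argument (Proposition~\ref{prop:codegensingle}) but will carry the induction along accepting runs of $\hat{M}$, exploiting the fact that each state $q\in Q$ contributes its own fresh field $f_q$. The plan is to establish the two inclusions separately, with the hard direction being the reverse one, where freshness of fields must be used to show that derivations in $\overline{G}(P,\tilde{S})$ that do not reference code-fragment-only vertices correspond to strings in $\mathcal{L}(\hat{M})$.

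For the $\supseteq$ direction, I would pick an arbitrary $s = z_1\dashrightarrow w_1\to \cdots \to z_k\dashrightarrow w_k \in S$ whose premise holds in $P$, and pick an accepting run $q_{\text{init}}\xrightarrow{z_1}q_1\xrightarrow{w_1}q_2\xrightarrow{z_2}\cdots\xrightarrow{w_k}q_{\text{fin}}$ of $\hat{M}$ on $s$. Reading Figure~\ref{fig:codegen} along this run yields, in the code fragments $\tilde{S}$, a chain of \emph{store}/\emph{alias}/\emph{load} edges through the fields $f_{q_1}, f_{q_2},\ldots,f_{q_{k-1}}$ (with fresh allocations inserted exactly where the $\overline{\Transfer}$ or $\Transfer$ rules require them, and with the ``initial'' and ``final'' rules bridging the program interface variables $z_1$ and $w_k$). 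Then a straightforward induction on $i$, using the productions $\Transfer\to\Transfer~\Store[f]~\Alias~\Load[f]$ and $\Transfer\to\Transfer~\Assign$, shows that the static analysis on $\tilde{S}$ computes $z_1\xrightarrow{A}w_k\in\overline{G}(P,\tilde{S})$ exactly as in the single-path proof. The case analysis over the type of $A$ (i.e.\ whether $z_1$ is a parameter or a return value) is identical to the single-path case, and similarly for each premise edge $w_i\xrightarrow{A_i}z_{i+1}$.

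For the $\subseteq$ direction, suppose $e \in \overline{G}(P,\tilde{S})$ has both endpoints in the program's variables or objects (so $e$ is not in $\overline{G}'(P,\tilde{S})$). I would argue that the only way $e$ can be produced is by a derivation that, at each use of a $\Store[f_q]$ or $\Load[f_q]$ edge, is forced to pair it with the matching load or store at state $q$: since $f_q$ is fresh and appears in $\tilde{S}$ only in statements emitted from transitions incident to $q$, the production $\Transfer\to\Transfer~\Store[f_q]~\Alias~\Load[f_q]$ can fire only when these two edges come from a pair of transitions $p\to q$ and $q\to r$ of $\hat{M}$. Threading this observation through an induction on the length of the derivation shows that any $\Transfer$-edge it produces between ``chain endpoints'' corresponds to a walk in $\hat{M}$, and the initial/final rules force the walk to start at $q_{\text{init}}$ and end at $q_{\text{fin}}$ whenever both endpoints of $e$ are program-side; hence the walk spells a string $s \in \mathcal{L}(\hat{M}) = S$, and $e \in \overline{G}(P,\{s\}) \subseteq \overline{G}(P,S)$.

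The principal obstacle is the bookkeeping for the reverse direction. The rule table in Figure~\ref{fig:codegen} handles six cases (initial/final, parameter/return, and the three edge-type cases $\Alias$, $\Transfer$, $\overline{\Transfer}$), and each case introduces slightly different auxiliary allocations and temporaries. The argument must verify that (i) none of these auxiliary allocations can induce a program-to-program edge by themselves (so they really do live only in $\overline{G}'(P,\tilde{S})$), and (ii) the freshness of the per-state fields $f_q$ genuinely restricts composition: a $\Load[f_q]$ can only cancel a $\Store[f_q]$ emitted by a transition into the \emph{same} state $q$, so consecutive pieces of a derivation really do correspond to consecutive transitions in $\hat{M}$. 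Once this is verified, gluing the cases together recovers both inclusions and yields the equality claimed by the proposition.
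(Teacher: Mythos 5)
Your proposal is correct and follows essentially the same route as the paper, which proves this proposition by remarking that it ``follows similarly to the proof of Proposition~\ref{prop:codegensingle}'': the forward inclusion by chaining store/alias/load edges through the per-state fields along an accepting run, and the reverse inclusion by using freshness of the fields $f_q$ to force any derivation reaching a program-to-program edge to trace out an accepting walk of $\hat{M}$. Your elaboration of point (ii) --- that a $\Load[f_q]$ can only match a $\Store[f_q]$ arising from transitions incident to the same state $q$, so derivation steps correspond to consecutive transitions --- is exactly the generalization of the paper's observation that each fresh field in the single-path case has a unique store and load edge.
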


In Figure~\ref{fig:pathexamples}, we show examples of path specifications (first column), the corresponding FSA (middle column), and the generated code fragment specifications. For example, in the second line, the transitions
\begin{align*}
q_{\text{init}}\xrightarrow{\texttt{ob}}q_1\xrightarrow{\texttt{this}_{\texttt{set}}}q_2\xrightarrow{\texttt{this}_{\texttt{get}}}q_3\xrightarrow{r_{\texttt{get}}}q_{\text{fin}}
\end{align*}
generate the specifications for \texttt{set} (the first two transitions, with field $\texttt{f}=f_{q_2}$) and \texttt{get} (the last two transitions), and the self-loop
\begin{align*}
q_2\xrightarrow{\texttt{this}_{\texttt{clone}}}q_6\xrightarrow{r_{\texttt{clone}}}q_2
\end{align*}
generates the specification for \texttt{clone}.

\section{Unit Test Synthesis Algorithm}
\label{sec:testsynthesisappendix}

In this section, we describe our algorithm for synthesizing a unit test to check correctness of a candidate path specification. As a running example, Figure~\ref{fig:testsynthesisappendix} shows how our unit test synthesis algorithm synthesizes a unit test for the candidate path specification $s_{\text{box}}$ for the \texttt{Box} class. In this example, the synthesized unit test contains exactly the external edges in the candidate's premise:
\begin{align*}
\texttt{this}_{\texttt{set}}\xrightarrow{\Alias}\texttt{this}_{\texttt{clone}},\sss r_{\texttt{clone}}\xrightarrow{\Transfer}\texttt{this}_{\texttt{set}}.
\end{align*}
Upon executing this unit test, the candidate's conclusion
\begin{align*}
\texttt{in}\xrightarrow{\Transfer}\texttt{out}
\end{align*}
holds dynamically. Therefore, this unit test witnesses the correctness of the given candidate.

Our algorithm first constructs a \emph{skeleton} containing a call to each function in the specification. Then, it (i) fills in \emph{holes} with variable names, (ii) initializes variables, and (iii) orders (or \emph{schedules}) statements. The last step also adds a statement returning whether the candidate's conclusion holds.

There are certain constraints on the choices that ensure that the synthesized unit test is a valid witness. Even with these constraints, a number of additional choices remain. Each choice produces a valid unit test, but some of these unit tests may not pass even if the candidate specification is correct. We describe the choices made by our algorithm, which empirically finds almost all correct candidate specifications.

\subsection{Skeleton Construction}

To witness correctness of the candidate path specification, the synthesized unit test must exhibit \emph{exactly} the external edges in its premise. In particular, the unit test must include a call to each function in the candidate. Our algorithm constructs a \emph{skeleton} consisting of these calls, for example, the skeleton on the second step of Figure~\ref{fig:testsynthesisappendix}. A symbol \texttt{??}, called a \emph{hole}, is included for each parameter and return value of each function call, and must be filled in with a variable name.

\subsection{Filling Holes}

The external edges in the candidate specification impose constraints on the arguments that should be used in each function call. In particular, the synthesized unit test must exhibit every behavior encoded by the external edges in the candidate specification:
\begin{itemize}
\item {\bf Alias:} For an aliasing edge $p_{m_i}\xrightarrow{\Alias}p_{m_{i+1}}$, the algorithm has to ensure that the arguments $p_{m_i}$ (passed to $m_i$) and $p_{m_{i+1}}$ (passed to $m_{i+1}$) are aliased.
\item {\bf Transfer:} For a transfer edge $r_{m_i}\xrightarrow{\Transfer}p_{m_{i+1}}$, the algorithm has to use the return value of $m_i$ as the argument passed to $m_{i+1}$ (and similarly for backwards transfer edges $p_{m_i}\xrightarrow{\overline{\Transfer}}r_{m_{i+1}}$).
\end{itemize}
For example, the holes in the skeleton in Figure~\ref{fig:testsynthesisappendix} are filled so that the following premises are satisfied:
\begin{align*}
\texttt{this}_{\texttt{set}}\xrightarrow{\Alias}\texttt{this}_{\texttt{clone}},\sss r_{\texttt{clone}}\xrightarrow{\Transfer}\texttt{this}_{\texttt{get}}.
\end{align*}

One issue is that internal edges may be self-loops, in which case more than two parameters may need to be aliased. For example, consider the following candidate:
\begin{align*}
\texttt{ob}\dashrightarrow\texttt{this}_{\texttt{set}}&\rightarrow\texttt{this}_{\texttt{clone}}\dashrightarrow\texttt{this}_{\texttt{id}} \\
\stepcounter{equation}\tag{\theequation}\label{eq:altspec}&\rightarrow\texttt{this}_{\texttt{get}}\dashrightarrow r_{\texttt{get}}.
\end{align*}
For the unit test for this candidate, the three calls to \texttt{set}, \texttt{clone}, and \texttt{set} must all share the same receiver:
\begin{small}
\begin{verbatim}
box.set(in);
Box boxClone = box.clone();
Object out = box.get();
\end{verbatim}
\end{small}

Our algorithm partitions the holes into subsets that must be aliased---since aliasing is a transitive relation, every hole in a subset has to be aliased with every other hole in that subset. To do so, the algorithm constructs an undirected graph where the vertices are the holes, and an edge $(h,h')\in E$ connects two holes $h$ and $h'$ in the following cases:
\begin{itemize}
\item There is an external edge $w_{m_i}\rightarrow z_{m_{i+1}}$ in the candidate specification, where $h$ is the hole corresponding to $w_{m_i}$ and $h'$ is the hole corresponding to $z_{m_{i+1}}$.
\item There is an internal edge $p_{m_i}\dashrightarrow p_{m_i}$ in the candidate specification, where $h$ is the hole corresponding to the $p_{m_i}$ on the left-hand side and $h'$ is the hole corresponding to the $p_{m_i}$ on the right-hand side.
\end{itemize}
Then, our algorithm computes the connected components in this graph. For each connected component, the algorithm chooses a fresh variable name, and each hole in that connected component is filled with this variable name.

For example, for the candidate in Figure~\ref{fig:testsynthesisappendix}, our algorithm computes the following partitions:
\begin{align*}
\{\texttt{ob}\},~\{\texttt{this}_{\texttt{set}},\texttt{this}_{\texttt{clone}}\},~\{r_{\texttt{clone}},\texttt{this}_{\texttt{get}}\},~\{r_{\texttt{get}}\},
\end{align*}
and fills the corresponding holes with the variables names
\begin{align*}
\texttt{in},~\texttt{box},~\texttt{boxClone},~\texttt{out},
\end{align*}
respectively. Similarly, for (\ref{eq:altspec}), we compute partitions
\begin{align*}
\{\texttt{ob}\},~\{\texttt{this}_{\texttt{set}},\texttt{this}_{\texttt{clone}},\texttt{this}_{\texttt{get}}\},~\{r_{\texttt{clone}}\},~\{r_{\texttt{get}}\}.
\end{align*}
The variable names are the same as those chosen in Figure~\ref{fig:testsynthesisappendix}.

\begin{figure}
\small
\begin{minipage}{0.4\textwidth}
$\begin{array}{rl}
\texttt{ob}\dashrightarrow\texttt{this}_{\texttt{set}}\hspace{-0.1in}&\rightarrow\texttt{this}_{\texttt{clone}}\dashrightarrow r_{\texttt{clone}} \\
&\rightarrow\texttt{this}_{\texttt{get}}\dashrightarrow r_{\texttt{get}}
\end{array}$
\end{minipage}
\begin{minipage}{0.5\textwidth}
\centering
\begin{tabular}{ll}
\hline \vspace{-0.1in} \\
$\begin{array}{l}\text{skeleton}\end{array}$ &
$\begin{array}{l}
\textttb{??.set(??);}\\
\textttb{?? = ??.clone();}\\
\textttb{?? = ??.get(??);}\\
\end{array}$
\vspace{0.03in} \\
\hline \vspace{-0.1in} \\
$\begin{array}{l}\text{fill holes}\end{array}$ &
$\begin{array}{l}
\textttb{box}\texttt{.set(}\textttb{in}\texttt{);}\\
\textttb{Box boxClone}\texttt{ = }\textttb{box}\texttt{.clone();}\\
\textttb{Object out}\texttt{ = }\textttb{boxClone}\texttt{.get();}
\end{array}$
\vspace{0.03in} \\
\hline \vspace{-0.1in} \\
$\begin{array}{l}\text{initialization}\\\text{\& scheduling}\end{array}$ &
$\begin{array}{l}
\textttb{Object in = new Object();}\\
\textttb{Box box = new Box()}\\
\texttt{box.set(in);}\\
\texttt{Box boxClone = box.clone();}\\
\texttt{Object out = boxClone.get();}\\
\textttb{return in == out};\\
\end{array}$
\vspace{0.03in} \\
\hline
\end{tabular}
\end{minipage}
\caption{Steps in the unit test synthesis algorithm (right) for a candidate path specification for \texttt{Box} (left). Code added at each step is highlighted in blue. Scheduling is shown in the same line as initialization---it chooses the final order of the statements. This figure is a duplicate of Figure~\ref{fig:testsynthesis}, and is included here for clarity.}
\label{fig:testsynthesisappendix}
\end{figure}

\subsection{Variable Initialization}
\label{sec:initializationappendix}

We describe primitive variables and reference variables separately. For the case of initializing reference variables, we describe two different strategies:
\begin{itemize}
\item {\bf Null:} Whenever possible, initialize to \texttt{null}.
\item {\bf Instantiation:} Whenever possible, use constructor calls.
\end{itemize}
The first strategy ensures that the unit test does not exhibit additional transfer and alias edges beyond those in the candidate specification. The second strategy may produce a unit test that does not witness correctness, since it may include spurious edges not in the premise of the candidate. However, certain functions require that some of their arguments are not null; for example, the \texttt{put} function in the \texttt{Hashtable} class. We show empirically that the second variant identifies a number of candidates missed by the first, and that these additional specifications are in fact correct.

\paragraph{{\bf\em Primitive initialization.}}

We initialize all primitive variables with \texttt{0} (except characters, which are initialized as \texttt{'a'}). For example, consider the specifications for the \texttt{List} class
\begin{verbatim}
class List { // specifications
  void add(Object ob) { ... }
  Object get(int index) { ... } }
\end{verbatim}
and consider the candidate path specification
\begin{align*}
  s_{\text{list}}=\texttt{ob}\dashrightarrow\texttt{this}_{\texttt{add}}\rightarrow\texttt{this}_{\texttt{get}}\dashrightarrow r_{\texttt{get}},
\end{align*}
which says that if we add an object \texttt{ob} to a list, and then retrieve an object from the same list, then the retrieved object may equal \texttt{ob}. By initializing primitive values to \texttt{0}, the unit test we obtain to check $s_{\text{list}}$ is
\begin{verbatim}
boolean test() { // program
  Object in = new Object(); // o_in
  List list = new List(); // o_list
  list.add(in);
  Object out = list.get(0);
  return in == out; }
\end{verbatim}
which correctly returns true. In our experience, the only important choice of primitive value is the \texttt{index} parameter passed to functions used to retrieve data from collections, e.g., the \texttt{get} method in the \texttt{List} class. Choosing the \texttt{index = 0} retrieves the single object the unit test previously added to the collection. Testing more primitive values is possible but has largely been unnecessary.

\paragraph{{\bf\em Reference initialization using null.}}

Reference variables for which aliasing relations hold must be instantiated (unless they have already been initialized as the return value of a function call). Any other reference variable is initialized to \texttt{null}. For example, in Figure~\ref{fig:testsynthesisappendix}, the variables \texttt{box} and \texttt{out} must be instantiated, but \texttt{boxClone} has already been initialized as the return value of \texttt{clone}. In general, the unit test we synthesize calls the constructor with the fewest number of arguments; primitive arguments are initialized as before, and reference arguments are initialized using \texttt{null}.

\paragraph{{\bf\em Reference initialization using instantiation.}}

In this approach, we have to synthesize constructor calls when empty constructors are unavailable. For example, if the only constructor for the \texttt{Box} class was \texttt{Box(Object val)}, then we would have to initialize an object of type \texttt{Object} as well:
\begin{small}
\begin{verbatim}
Box box = new Box(new Object());
\end{verbatim}
\end{small}

We encode the problem of synthesizing a valid constructor call as a directed hypergraph reachability problem. A \emph{directed hypergraph} is a pair $G=(V,E)$, where $V$ is a set of vertices, and edges $e\in E$ have the form $e=(h,B)$, where $h\in V$ is the \emph{head} of the edge, and $B\subseteq V$ is its \emph{body}. For our purposes, $B$ is a list rather than a set, and may contain a single vertex multiple times.

We construct a hypergraph $G=(V,E)$ where vertices correspond to classes, and edges to constructors:
\begin{itemize}
\item {\bf Vertices:} A vertex $v\in V$ is a library class.
\item {\bf Edges:} An edge $e=(h,B)\in E$ is a constructor, where $h$ is the class of the constructed object and $B$ is the list of classes of the constructor parameters.
\end{itemize}
For convenience, we also include primitive types as vertices in $G$, along with an edge representing the ``empty constructor'', which returns the initialization value described above.

Now, a \emph{path} $T$ in the hypergraph $G=(V,E)$ is a finite tree with root $v_T\in V$ (called the \emph{root} of the path), such that for each vertex $v\in T$, $v$ and its (ordered) children $[v_1,...,v_k]$ are an edge $e_{T,v}=(v,[v_1,...,v_k])\in E$. Note that for each leaf $v$ of $T$, there must necessarily be an edge $(v,[])\in E$, since $v$ has no children. Also, we say a vertex $v\in V$ is \emph{reachable} is there exists a path with root $v$.

In our setting, a path in our hypergraph $G$ corresponds to a call to a constructor---for each vertex $\texttt{X}\in T$ with children $\texttt{X1},...,\texttt{Xk}$, we recursively define the constructor
\begin{align*}
C_T(\texttt{X})=\texttt{new X(}C_T(\texttt{X1})\texttt{, ..., }C_T(\texttt{Xk})\texttt{)}.
\end{align*}
Therefore, devising a constructor call to instantiate an object of type \texttt{X} amounts to computing a path in $G$ with root \texttt{X}. Paths to every reachable vertex can be efficiently computed using a standard dynamic programming algorithm. Furthermore, we can add a weight $w_e$ to each edge in $e\in E$. Then, the \emph{shortest} path (i.e., the path minimizing the total weight $\sum_{v\in T}e_{T,v}$) can similarly be efficiently computed. We choose all weights $w_e=1$ for each $e\in E$.

For example, suppose that the \texttt{Box} class has a single constructor \texttt{Box(Object val)}. Then, our algorithm constructs a hypergraph with two vertices and two edges:
\begin{align*}
V&=\{\texttt{Object},~\texttt{Box}\} \\
E&=\{(\texttt{Object},[]),~(\texttt{Box},[\texttt{Object}])\}.
\end{align*}
Then, the path corresponding to \texttt{Box} is the tree $T=\frac{\texttt{Box}}{\texttt{Object}}$, which corresponds to the constructor call
\begin{small}
\begin{verbatim}
new Box(new Object())
\end{verbatim}
\end{small}
used to instantiate variables of type \texttt{Box}.

As with initializing primitive variables, multiple choices of constructor calls could be used, but selecting a single constructor suffices has been sufficient so far.

\subsection{Statement Scheduling}
\label{sec:scheduling}

Note that the unit test now contains both function call statements as well as variable initialization statements added in the previous step. All the added variable initialization statements can be executed first, so it suffices to schedule the function call statements.

There are two kinds of constraints on scheduling function calls. First, consider edges of the form
\begin{align*}
r_{m_i}\xrightarrow{\Transfer}p_{m_{i+1}}
\end{align*}
in the premise of the candidate path specification; they impose \emph{hard constraints} on the schedule, since $m_i$ must be called before $m_{i+1}$ so its return value can be transfered to $p_{m_{i+1}}$ (edges of the form $p_{m_i}\xrightarrow{\overline{\Transfer}}r_{m_{i+1}}$ impose hard constraints as well). For example, in Figure~\ref{fig:testsynthesisappendix}, the edge $r_{\texttt{clone}}\xrightarrow{\Transfer}\texttt{this}_{\texttt{get}}$ imposes the hard constraint that the call to \texttt{clone} must be scheduled before the call to \texttt{get}. Then, any of the following orderings is permitted:
\begin{align*}
[\texttt{set},\texttt{clone},\texttt{get}],~[\texttt{clone},\texttt{set},\texttt{get}],~[\texttt{clone},\texttt{get},\texttt{set}].
\end{align*}

We use \emph{soft constraints} to choose among schedules satisfying the hard constraints. Empirically, we observe that the order of the functions in the specification is typically the same as the order in which they must be called for the conclusion to be exhibited dynamically. More precisely, function $m_i$ should be called before function $m_j$ whenever $i<j$. In our example, the soft constraint says that \texttt{set} should be scheduled before both \texttt{clone} and \texttt{get}.

Our algorithm iteratively constructs a schedule $[i_1,...,i_k]$ of the function calls $F=\{m_1,...,m_k\}$. At iteration $t$, it selects the $t$th function call $m_{i_t}$ from the remaining calls $F_t\subseteq F$. It does so greedily, by identifying the choices $G_t\subseteq F_t$ that satisfy the hard constraints, and then selecting $m_{i_t}\in G_t$ to be optimal according to the soft constraints. These conditions uniquely specify $m_{i_t}$, since our soft constraints are a total ordering.

Our algorithm keeps track of the remaining statements $F_t$ as a directed acyclic graph (DAG), which includes an edge $m_i\to m_j$ for each hard constraint that $m_i$ should be scheduled before $m_j$. Then, $G_t$ is the set of roots of $F_t$. Furthermore, our algorithm maintains $G_t$ as a priority queue, where the priority of $m_i$ is $i$ (the highest priority element in $G_t$ is the element with the smallest index $i$).

We initialize $F_1=F$; then, $G_1$ is the subset of vertices in $F_1$ without a parent. Updates are computed as follows:
\begin{enumerate}
\item The highest priority function call $m_{i_t}$ in $G_t$ is removed from both $G_t$ and from $F_t$.
\item For each child $m_i$ of $m_{i_t}$ in $F_t$, we determine if $m_i$ is now a root of $F_t$ (i.e., none of its parents are in $F_t$).
\item For every child $m_i$ that is now a root of $F_t$, we add $m_i$ to $G_t$ with priority $i$.
\end{enumerate}

In Figure~\ref{fig:testsynthesisappendix}, $F_1$ has three vertices \texttt{set} (priority $1$), \texttt{clone} (priority $2$), and \texttt{get} (priority $3$), and a single edge $\texttt{clone}\to\texttt{get}$, and $G_1$ includes \texttt{set} and \texttt{clone}. Therefore, the selected schedule is $[\texttt{set},\texttt{clone},\texttt{get}]$.

\section{Proof of Equivalence Theorem}
\label{sec:mainproof}

We prove Theorem~\ref{THM:EQUIV}, relegating the proof of technical lemmas to Appendix~\ref{sec:lemproof}. To simplify the proof, we assume the following:
\begin{assumption}
\label{assump:disjoint}
\rm
Let $\mathcal{F}_{\text{lib}}$ be fields accessed by the library and $\mathcal{F}_{\text{prog}}$ be fields accessed by the program, and let the \emph{shared fields} be $\mathcal{F}_{\text{share}}=\mathcal{F}_{\text{lib}}\cap\mathcal{F}_{\text{prog}}$. We assume $\mathcal{F}_{\text{share}}=\varnothing$.
\end{assumption}
We can remove this assumption by having the static analysis treat accesses to library fields in the program as calls to getter and setter library functions.

\subsection{Converting the Library Implementation to Path Specifications}
\label{sec:conversion}

First, we describe how to convert the library implementation into a set $S$ of transfer and proxy object specifications. A specification of the form
\begin{align*}
z_1\dashrightarrow w_1\to...\to z_k\dashrightarrow w_k.
\end{align*}
is included in $S$ if there exist paths
\begin{align*}
z_1\xdashrightarrow{\beta_1}w_1,\sss...,\sss z_k\xdashrightarrow{\beta_k}w_k
\end{align*}
such that $A\xRightarrow{*}\beta_1\tilde{\alpha}_1...\tilde{\alpha}_{k-1}\beta_k$ in $C_{\text{pt}}$, where
\begin{align*}
A=
\begin{cases}
\Transfer&\text{if }z_1=p_{m_1} \\
\Alias&\text{if }z_1=r_{m_1}
\end{cases}
\end{align*}
and
\begin{align*}
\tilde{\alpha}_i=
\begin{cases}
\Assign&\text{if }w_i=p_{m_i}\text{ and }z_{i+1}=r_{m_{i+1}} \\
\overline{\Assign}&\text{if }w_i=r_{m_i}\text{ and }z_{i+1}=p_{m_{i+1}} \\
\overline{\New}~\New&\text{if }w_i=p_{m_i}\text{ and }z_{i+1}=p_{m_{i+1}}.
\end{cases}
\end{align*}
Then, we prove that the conclusion of Theorem~\ref{THM:EQUIV} holds for $S$ constructed with this algorithm.

\subsection{Proof Overview}

Let $\overline{G}$ denote the points-to sets computed by running the static analysis with the library implementation available, and $\overline{G}(S)$ denote the points-to sets computed by running the static analysis with the path specifications $S$. We have to prove that $\overline{G}=\overline{G}(S)$; the direction $\overline{G}(S)\subseteq\overline{G}$ follows easily, since a path specification $s$ is included in $S$ exactly when the library implementation would imply the same logical formula as the semantics of $s$.

The challenging direction is to show that $S$ is sound, i.e.,
\begin{align*}
\overline{G}\subseteq\overline{G}(S).
\end{align*}
For simplicity, we focus on points-to edges $o\xrightarrow{\FlowsTo}x$; the alias and transfer relations follow similarly. Suppose that $o\xrightarrow{\FlowsTo}y\in\overline{G}(S)$; then, there must exist a path $o\xrightarrow{\New}x\xdashrightarrow{\alpha}y$, where $\Transfer\xRightarrow{*}\alpha$. This path passes into and out of library functions, leading to a decomposition
\begin{align}
\label{eqn:path}
x\xdashrightarrow{\alpha_0}z_1\xdashrightarrow{\beta_1}w_1\xdashrightarrow{\alpha_1}...\xdashrightarrow{\beta_k}w_k\xdashrightarrow{\alpha_k}y,
\end{align}
where $\alpha=\alpha_0\beta_1\alpha_1...\beta_k\alpha_k$. This decomposition suggests that the following path specification may be applied to derive $x\xrightarrow{\Transfer}y$:
\begin{align}
\label{eqn:spec}
z_1\dashrightarrow w_1\to...\to z_k\dashrightarrow w_k.
\end{align}

At a high level, our proof has two parts. First, we prove the case where the segments of $\alpha$ in the program do not contain field accesses, i.e., $\alpha\in(\Sigma_{\text{free}}\cup\Sigma_{\text{lib}})^*$, where
\begin{align*}
\Sigma_{\text{free}}&=\{\Assign,\overline{\Assign},\New,\overline{\New}\} \\
\Sigma_{\text{prog}}&=\{\Store[f],\Load[f],\overline{\Store[f]},\overline{\Load[f]}\mid f\in\mathcal{F}_{\text{prog}}\} \\
\Sigma_{\text{lib}}&=\{\Store[f],\Load[f],\overline{\Store[f]},\overline{\Load[f]}\mid f\in\mathcal{F}_{\text{lib}}\}.
\end{align*}
Second, we show how ``nesting'' of fields allows us to reduce the general case to the case $\alpha\in(\Sigma_{\text{free}}\cup\Sigma_{\text{lib}}\cup\Sigma_{\text{prog}})^*$. In particular, by Assumption~\ref{assump:disjoint}, the library field accesses and program field accesses do not match one another. As previously discussed, this assumption can be enforced by a purely syntactic program transformation where accesses to library fields in the program are converted into calls to getter and setter functions.

Consider a path of the form (\ref{eqn:path}) such that $\alpha\in(\Sigma_{\text{free}}\cup\Sigma_{\text{lib}})^*$. We need to show that in this case, we derive the edge $x\xrightarrow{\Transfer}y\in\overline{G}(S)$, where $S$ is constructed as in Section~\ref{sec:conversion}. Our proof of this claim relies on two results. The first result says that for such a path, the conclusion of (\ref{eqn:spec}) holds when each $w_i$ is connected to $z_{i+1}$ by $\alpha_i$:
\begin{proposition}
\label{prop:conclusion}
\rm
For any path of the form (\ref{eqn:path}) such that $\alpha\in(\Sigma_{\text{free}}\cup\Sigma_{\text{lib}})^*$ we have (i) the case $w_i=r_i$ and $z_{i+1}=r_{i+1}$ cannot happen, and (ii) $\Transfer\xRightarrow{*}\beta_1\alpha_1\beta_2...\alpha_{k-1}\beta_k$.
\end{proposition}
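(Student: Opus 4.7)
The plan is to prove each part in turn, with part (i) being a structural impossibility argument based on the grammar $C_{\text{pt}}$, and part (ii) obtained by surgery on the parse tree of $\Transfer \xRightarrow{*} \alpha$.

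\textbf{Part (i).} I would proceed by contradiction. Suppose $w_i = r_{m_i}$ and $z_{i+1} = r_{m_{i+1}}$. By the rules for constructing $G$ in Figure~\ref{fig:constructgraph}, the only edges emanating from $r_{m_i}$ into program code come from the (call return) rule, producing $r_{m_i} \xrightarrow{\Assign} y$; hence $\alpha_i$ begins with $\Assign$. Symmetrically, the only edges entering $r_{m_{i+1}}$ from program code are $\overline{\Assign}$ edges (backward of call return), so $\alpha_i$ ends with $\overline{\Assign}$. Since $\alpha \in (\Sigma_{\text{free}} \cup \Sigma_{\text{lib}})^*$ and program segments contain no library-field accesses (Assumption~\ref{assump:disjoint}), we have $\alpha_i \in \Sigma_{\text{free}}^*$; in particular $|\alpha_i| \ge 2$. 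Now consider any parse tree for $\Transfer \xRightarrow{*} \alpha$. The terminal $\Assign$ occurs only in $\Transfer \to \Transfer~\Assign$, so the starting $\Assign$ of $\alpha_i$ is the rightmost leaf of some $\Transfer$-subtree $\mathcal{T}_0$. The symbol immediately after $\mathcal{T}_0$'s yield must be whatever syntactically follows that $\Transfer$ in its parent production. Enumerating the three RHS positions of $\Transfer$ in $C_{\text{pt}}$ (on the right of $\Transfer \to \Transfer~\Assign$, of $\Transfer \to \Transfer~\Store[f]~\Alias~\Load[f]$, or at the end of $\Alias \to \overline{\Transfer}~\overline{\New}~\New~\Transfer$), the next symbol must be $\Store[f]$, $\Load[f]$, or nothing (end of string), contradicting $|\alpha_i| \ge 2$ with $\alpha_i \in \Sigma_{\text{free}}^*$.

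\textbf{Part (ii).} With part (i) in hand, each middle $\alpha_i$ (for $1 \le i \le k-1$) has the form $\Assign\,\gamma\,\Assign$, $\overline{\Assign}\,\gamma\,\overline{\Assign}$, or $\overline{\Assign}\,\gamma\,\Assign$ with $\gamma \in \Sigma_{\text{free}}^*$. The idea is to surgically remove $\alpha_0$ and $\alpha_k$ from the given parse tree for $\Transfer \xRightarrow{*} \alpha$ to obtain a $\Transfer$-derivation of the middle substring $\beta_1\alpha_1\ldots\beta_k$. Applying the same grammar-position analysis at the $z_1$ boundary, the last symbol of $\alpha_0$ is the rightmost leaf of a $\Transfer$- or $\overline{\Transfer}$-subtree $\mathcal{T}_0$ (depending on whether $z_1$ is a parameter or return value), and symmetrically the first symbol of $\alpha_k$ is the leftmost leaf of some subtree $\mathcal{T}_k$. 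In the easy case, $\mathcal{T}_0$ and $\mathcal{T}_k$ have yields exactly equal to $\alpha_0$ and $\alpha_k$, and can be replaced by the $\varepsilon$-productions $\Transfer \to \varepsilon$ or $\overline{\Transfer} \to \varepsilon$ to obtain the desired derivation.

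The main obstacle I anticipate is the case where an $\Alias$-subtree \emph{straddles} a boundary: for instance, its $\overline{\Transfer}$ child might consume a suffix of $\alpha_0$ while its $\Transfer$ child consumes a prefix of $\beta_1$, or vice versa. Deleting such a subtree wholesale would destroy part of the library segment's derivation, while deleting only $\alpha_0$'s slice leaves an unbalanced $\Store[f]/\Load[f]$ bracket. I would handle this by induction on the depth of straddling nodes, using the fact that any trailing $\Assign$-run at the right end of a $\Transfer$-subtree (respectively, leading $\overline{\Assign}$-run at the left end of an $\overline{\Transfer}$-subtree) may be truncated by re-choosing the number of applications of $\Transfer \to \Transfer~\Assign$ (respectively, $\overline{\Transfer} \to \overline{\Assign}~\overline{\Transfer}$) without disturbing the outer $\Store[f] \ldots \Load[f]$ bracket. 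Performed consistently at both boundaries, this yields a valid parse of $\beta_1\alpha_1\ldots\beta_k$ from $\Transfer$, completing the proof.
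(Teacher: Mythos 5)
Your overall strategy---direct surgery on a parse tree of $\Transfer\xRightarrow{*}\alpha$---is genuinely different from the paper's, which works algebraically with the syntactic congruence $[\cdot]$ on $\Sigma^*$ (Lemmas~\ref{lem:semigroup} and~\ref{lem:multtable}): there, every $\gamma\in\Sigma_{\text{free}}^*$ collapses to one of $[\phi],[\epsilon],[\Assign],[\overline{\Assign}],[\overline{\New}~\New]$, the forbidden boundary combination forces $[\alpha_i]=[\Assign]~[\cdot]~[\overline{\Assign}]=[\phi]$, which contradicts $\New~\alpha\in\mathcal{L}(C_{\text{pt}})$, and derivability of the middle substring comes out of Lemma~\ref{lem:p2}. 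Your route could be made to work, but as written it contains a concrete false step and leaves the core of part (ii) unproved.

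In part (i), your enumeration of what can follow the $\Assign$ opening $\alpha_i$ is wrong: if the root of $\mathcal{T}_0$ is the left child of another instance of $\Transfer\to\Transfer~\Assign$, the next symbol is another $\Assign$, which \emph{is} in $\Sigma_{\text{free}}$, so no contradiction with $|\alpha_i|\ge 2$ and $\alpha_i\in\Sigma_{\text{free}}^*$ follows ($\Assign~\Assign$ is a perfectly legal substring of words in $\mathcal{L}(C_{\text{pt}})$). The contradiction must instead come from the closing $\overline{\Assign}$: the only symbol of $\Sigma_{\text{free}}$ that can follow an $\Assign$ is $\Assign$, so an induction along $\alpha_i$ forces every symbol after the first to be $\Assign$, incompatible with the final $\overline{\Assign}$. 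That ``what can follow what across an entire $\Sigma_{\text{free}}^*$ block'' bookkeeping is precisely what the paper's multiplication table packages up. In part (ii), the straddling difficulty you flag in fact cannot arise, but your proposed truncation argument does not establish this: every $\Alias$ node in a $\Transfer$-parse is immediately bracketed by $\Store[f]\cdots\Load[f]$ (or $\overline{\Load[f]}\cdots\overline{\Store[f]}$) terminals, which under the hypothesis $\alpha\in(\Sigma_{\text{free}}\cup\Sigma_{\text{lib}})^*$ must lie in library segments; hence no position of the prefix $\alpha_0$ or the suffix $\alpha_k$ can lie inside any $\Alias$ subtree, so $\alpha_0$ and $\alpha_k$ are pure $\Assign$-runs hanging off the inner and outer ends of the left-recursive $\Transfer$ spine and can be deleted with no re-bracketing at all. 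You need either that observation or the paper's algebraic machinery to close part (ii); ``re-choosing the number of applications of $\Transfer\to\Transfer~\Assign$'' as stated is not a proof.
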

As a consequence of this result, we know that the path specification (\ref{eqn:spec}) is contained in $S$. The second result says that the premise of (\ref{eqn:spec}) holds for our case:
\begin{proposition}
\label{prop:premise}
\rm
For any path of the form (\ref{eqn:spec}) such that $\alpha\in(\Sigma_{\text{free}}\cup\Sigma_{\text{lib}})^*$, we have
\begin{alignat*}{2}
&A_i\xRightarrow{*}\alpha_i\hspace{0.2in}&&(\forall i\in[k-1]) \\
&A_i\xRightarrow{*}\alpha_i\hspace{0.2in}&&(\forall i\in\{0,k\}).
\end{alignat*}
\end{proposition}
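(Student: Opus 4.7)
The plan is to show that each program-side segment $\alpha_i$ consists only of free symbols under the hypothesis, and that such free-symbol segments—positioned between appropriate call-site variables—are precisely what the relevant nonterminals of $C_{\text{pt}}$ derive. First I would observe that every $\alpha_i$ is a path lying entirely outside the library, so a priori $\alpha_i\in(\Sigma_{\text{free}}\cup\Sigma_{\text{prog}})^*$; combined with the hypothesis $\alpha\in(\Sigma_{\text{free}}\cup\Sigma_{\text{lib}})^*$ and Assumption~\ref{assump:disjoint} (which gives $\mathcal{F}_{\text{lib}}\cap\mathcal{F}_{\text{prog}}=\varnothing$), the $\Sigma_{\text{prog}}$ symbols must be absent and $\Sigma_{\text{lib}}$ symbols cannot occur in a program path, so $\alpha_i\in\Sigma_{\text{free}}^*=\{\Assign,\overline{\Assign},\New,\overline{\New}\}^*$.

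Next I would case-split on whether $w_i$ and $z_{i+1}$ are parameters or return values, using Proposition~\ref{prop:conclusion}(i) to rule out the doubly-return case. In each remaining case, the definition of $A_i$ together with the call-parameter and call-return rules of Figure~\ref{fig:constructgraph} pin down the boundary symbols of $\alpha_i$: for example, when $A_i=\Alias$ (both endpoints are parameters), $\alpha_i$ must begin with $\overline{\Assign}$ out of $w_i=p_{m_i}$ (the backwards call-parameter rule) and end with $\Assign$ into $z_{i+1}=p_{m_{i+1}}$; the $\Transfer$ and $\overline{\Transfer}$ cases are analogous. The boundary segments $\alpha_0$ and $\alpha_k$ are handled by defining $A_0$ and $A_k$ from the shapes of $z_1$ and $w_k$ (so that $A_0$, the $A$ from the path specification, and $A_k$ compose into a derivation of $\Transfer$ overall), and then reapplying the same case analysis.

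The technical core—and the main obstacle—is a shape lemma: every string in $\Sigma_{\text{free}}^*$ that labels a program path between two call-site variables with the correct boundary directions is derivable from the intended nonterminal. I would prove this by induction on $|\alpha_i|$, using $\Transfer\to\varepsilon\mid\Transfer~\Assign$ (and its backward dual) to absorb arbitrary chains of assignments, and $\Alias\to\overline{\Transfer}~\overline{\New}~\New~\Transfer$ to handle aliasing that arises from a shared allocation. The delicate step is identifying, inside a free-symbol sequence, the balanced $\overline{\New}~\New$ pair at which the $\Alias$ production pivots; such a pivot exists because any $\New$ edge in $\alpha_i$ emanates from a unique allocation site, and the only way a $\Sigma_{\text{free}}^*$ path can cross between the ``forward'' and ``backward'' transfer worlds is through exactly one such $\overline{\New}~\New$ synchronization. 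Once this pivot is located, the induction hypothesis handles the two flanks and the derivation composes cleanly.
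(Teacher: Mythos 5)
Your reduction to $\alpha_i\in\Sigma_{\text{free}}^*$ and your case analysis of the boundary symbols via the call-parameter/call-return rules both match the paper and are fine. The gap is in your ``shape lemma.'' As stated --- every $\Sigma_{\text{free}}^*$ string labelling a program path between call-site variables with the correct boundary directions is derivable from the intended nonterminal --- it is false: the path $r_{m_i}\xrightarrow{\Assign}y\xrightarrow{\overline{\Assign}}x\xrightarrow{\Assign}p_{m_{i+1}}$ arises from legitimate program statements (a call return assigned to $y$, an assignment $y\gets x$ traversed backwards, and $x$ passed as the next argument) and has the right endpoints and boundary symbols for a $\Transfer$ segment, yet $\Assign~\overline{\Assign}~\Assign$ is not derivable from $\Transfer$ (restricted to $\Sigma_{\text{free}}$, the nonterminal $\Transfer$ derives only $\Assign^*$). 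What makes the proposition true is the \emph{global} hypothesis, implicit in the derivation of (\ref{eqn:path}), that the whole string $\alpha=\alpha_0\beta_1\alpha_1\cdots\beta_k\alpha_k$ satisfies $\Transfer\xRightarrow{*}\alpha$; your induction on $|\alpha_i|$ has no mechanism for importing that global constraint into the local segment. Relatedly, your justification for the $\Alias$ pivot (``any $\New$ edge emanates from a unique allocation site'') is a statement about the graph, not about the string: nothing in it rules out a free segment containing two $\overline{\New}~\New$ synchronizations at two \emph{different} allocation sites, which is exactly the configuration you must exclude.

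The paper closes this gap with algebraic machinery for which you would need some substitute: it defines contextual equivalence $[\alpha]$ of strings modulo membership in $\mathcal{L}(C_{\text{pt}})$, computes the full multiplication table of the subsemigroup generated by $\Sigma_{\text{free}}$ (Lemma~\ref{lem:multtable}), which shows every $\alpha_i\in\Sigma_{\text{free}}^*$ lies in one of $\{[\phi],[\epsilon],[\Assign],[\overline{\Assign}],[\overline{\New}~\New]\}$; the boundary symbols then force $[\alpha_i]\in\{[\tilde{\alpha}_i],[\phi]\}$ (Lemma~\ref{lem:assign}), the class $[\phi]$ is excluded because $\New~\alpha\in\mathcal{L}(C_{\text{pt}})$ and $[\phi]$ annihilates every context, and finally Lemma~\ref{lem:p2} converts $[\alpha_i]=[\tilde{\alpha}_i]$ into derivability $A_i\xRightarrow{*}\alpha_i$ by structural induction on the derivation, not on segment length. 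Your ``absorb chains of assignments'' induction proves the easy direction --- that strings of the restricted shapes are derivable --- but the content of the proposition is the converse classification of which shapes can occur, and that is what is missing.
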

Therefore, we can conclude that when running the static analysis using path specifications, we derive the conclusion of the path specification (\ref{eqn:spec}), i.e., $z_1\xrightarrow{\Transfer}w_k\in\overline{G}(S)$. In summary, we have the following result:
\begin{theorem}
\label{THM:EQUIVfree}
\rm
Theorem~\ref{THM:EQUIV} holds for any $\alpha\in(\Sigma_{\text{free}}\cup\Sigma_{\text{lib}})^*$.
\end{theorem}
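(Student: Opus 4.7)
The plan is to establish the two inclusions $\overline{G}(S) \subseteq \overline{G}$ and $\overline{G} \subseteq \overline{G}(S)$ for paths labeled in $(\Sigma_{\text{free}} \cup \Sigma_{\text{lib}})^*$, where $S$ is the path specification set constructed in Section~\ref{sec:conversion}. The first direction (precision) is essentially by construction: for any path specification $z_1\dashrightarrow w_1\to\cdots\to z_k\dashrightarrow w_k\in S$, the very witnesses $\beta_1,\ldots,\beta_k$ used in its construction give paths through the actual library implementation, so any edge the specification would add is already derivable from $C_{\text{pt}}$ when the library code is available. I would formalize this by a straightforward induction on the derivation of $\overline{G}(S)$, using the fact that the $\tilde{\alpha}_i$ nonterminals in the construction are exactly the labels that $C_{\text{pt}}$ uses to connect $w_i$ to $z_{i+1}$ across the library/program boundary.

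The soundness direction $\overline{G}\subseteq\overline{G}(S)$ is the substantive one, and I would prove it by induction on the depth of the context-free derivation producing each edge $e\in\overline{G}$. Focus first on transfer edges (the other cases reduce to this via the productions for $\Alias$ and $\FlowsTo$). Given $x\xrightarrow{\Transfer}y\in\overline{G}$, there is a path $x\xdashrightarrow{\alpha}y$ with $\Transfer\xRightarrow{*}\alpha$ and $\alpha\in(\Sigma_{\text{free}}\cup\Sigma_{\text{lib}})^*$ by hypothesis. Decompose this path at its crossings of the library interface as in equation~(\ref{eqn:path}), yielding library segments $z_i\xdashrightarrow{\beta_i}w_i$ and program segments $\alpha_i$. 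Proposition~\ref{prop:conclusion} then guarantees that the path specification in~(\ref{eqn:spec}) is a member of $S$, and Proposition~\ref{prop:premise} guarantees that each $\alpha_i$ derives from the appropriate nonterminal $A_i$. By the inductive hypothesis, each premise edge $w_i\xrightarrow{A_i}z_{i+1}$ already appears in $\overline{G}(S)$, so firing this specification in the analysis with $S$ produces the edge $z_1\xrightarrow{\Transfer}w_k$. Extending by the prefix $\alpha_0$ and suffix $\alpha_k$ (which again derive from $\Transfer$ by Proposition~\ref{prop:premise}) yields the desired edge $x\xrightarrow{\Transfer}y\in\overline{G}(S)$.

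The main obstacle I anticipate is managing the mutual recursion among $\Transfer$, $\overline{\Transfer}$, and $\Alias$ in $C_{\text{pt}}$, which means the induction must be carried out simultaneously over all three nonterminals rather than separately for each. In particular, I need to verify that when decomposing a derivation, the nonterminal $A$ assigned to the outer edge is consistent with whether $z_1$ is a parameter or a return value, and similarly that each internal $A_i$ matches the parameter/return-value status of $w_i$ and $z_{i+1}$. Proposition~\ref{prop:conclusion} part~(i) (ruling out the case $w_i=r_i$ and $z_{i+1}=r_{i+1}$) is critical here, since it ensures the case analysis used to define $A_i$ in Section~\ref{sec:path} is exhaustive for the paths that can actually arise. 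I would also need to verify one edge case carefully: the degenerate $k=0$ situation where $\alpha$ stays entirely in the program, for which the claim follows because $\alpha_0$ alone derives from $\Transfer$ and no specification needs to be applied.
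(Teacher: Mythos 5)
Your proposal matches the paper's argument: the same decomposition of the path at library-interface crossings as in~(\ref{eqn:path}), with Proposition~\ref{prop:conclusion} supplying membership of~(\ref{eqn:spec}) in $S$, Proposition~\ref{prop:premise} supplying the premise, and the chain $x\xrightarrow{\Transfer}z_1\xrightarrow{\Transfer}w_k\xrightarrow{\Transfer}y$ (plus the $\New$ prefix for $\FlowsTo$) finishing the job. The only deviation is your induction on derivation depth, which is harmless but superfluous in this restricted case: since $\alpha\in(\Sigma_{\text{free}}\cup\Sigma_{\text{lib}})^*$, each program segment $\alpha_i$ lies in $\Sigma_{\text{free}}^*$ and labels a path entirely within the client's graph, so the premise edges $w_i\xrightarrow{A_i}z_{i+1}$ are computed directly by the analysis without any recursive appeal to specifications.
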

\begin{proof}
Consider an edge $x\xrightarrow{\Transfer}y\in\overline{G}$ derived by the static analysis using the library implementation. We claim that this edge is derived by the static analysis when using path specifications, i.e., $x\xrightarrow{\Transfer}y\in\overline{G}(S)$. By Proposition~\ref{prop:conclusion}, we conclude that (\ref{eqn:spec}) is in $S$. Furthermore, by Proposition~\ref{prop:premise}, the premise of (\ref{eqn:spec}) holds, so the static analysis derives its conclusion, i.e., $z_1\xrightarrow{\Transfer}w_k\in\overline{G}(S)$. Therefore, we have
\begin{align*}
x\xrightarrow{\Transfer}z_1\xrightarrow{\Transfer}w_k\xrightarrow{\Transfer}y\in\overline{G}(S),
\end{align*}
so the static analysis derives $x\xrightarrow{\Transfer}y\in\overline{G}(S)$, as claimed.

Now, we know that any points-to edge $o\xrightarrow{\FlowsTo}y\in\overline{G}$ has the form $o\xrightarrow{\New}x\xrightarrow{\Transfer}y$. Since we have shown that $x\xrightarrow{\Transfer}y\in\overline{G}(S)$, the static analysis also derives $o\xrightarrow{\FlowsTo}y\in\overline{G}(S)$, so the result follows.
\end{proof}

In the remainder of the section, we introduce the technical machinery that enables us to reason about ``equivalence'' of the semantics of different sequences of statements. Then, we describe how we prove Propositions~\ref{prop:conclusion} \&~\ref{prop:premise}. Finally, we reduce Theorem~\ref{THM:EQUIV} to Theorem~\ref{THM:EQUIVfree}.

\subsection{Equivalent Semantics}

Proving Propositions~\ref{prop:conclusion} \&~\ref{prop:premise} requires reasoning about the \emph{equivalence} of the semantics of sequences of statements in $P$. For example, to prove Proposition~\ref{prop:conclusion}, we show that each $\alpha_i$ is ``equivalent'' to $\tilde{\alpha}_i$. Intuitively, for $\tilde{\alpha}_i=\Assign$, we show that the sequence of statements represented by $\alpha_i$ exhibits the same semantics as a single assignment. For example, $y\gets x,~z\gets y$ has the same points-to effects as $z\gets x$ (assuming $y$ is temporary). We leverage the correspondence established by formulating points-to analysis as context-free language reachability:
\begin{align*}
\text{sequence of statements}~=~\text{sequence $\alpha\in\Sigma^*$}.
\end{align*}
For example, the first sequence of statements above corresponds to $(\Assign~\Assign)$, and the second to $\Assign$.

Using this correspondence, we can reduce reasoning about sequences of statements with equivalent semantics to studying equivalence classes of strings $\alpha\in\Sigma^*$:
\begin{align*}
  \begin{array}{c}
    \text{equivalent sequences} \\
    \text{of statements}
  \end{array}
  ~=~
  \begin{array}{c}
    \text{equivalence classes} \\
         {[\alpha]}\subseteq\Sigma^*
  \end{array}.
\end{align*}
In particular, $\alpha,\beta\in\Sigma^*$ are \emph{equivalent} if
\begin{align}
\gamma\alpha\delta\in\mathcal{L}(C_{\text{pt}})\Leftrightarrow\gamma\beta\delta\in\mathcal{L}(C_{\text{pt}})\sss(\forall\gamma,\delta\in\Sigma^*).
\end{align}
In other words, $\alpha$ can be used interchangeably with $\beta$ in any string without affecting whether the string is contained in $\mathcal{L}(C_{\text{pt}})$. We use $[\alpha]=\{\beta\in\Sigma^*\mid\alpha\sim\beta\}$ to denote the equivalence class of $\alpha\in\Sigma^*$. Then, $[\alpha]=[\beta]$ if for any two paths
\begin{align*}
o\xdashrightarrow{\gamma}v\xdashrightarrow{\alpha}w\xdashrightarrow{\delta}x,\hspace{0.2in}o\xdashrightarrow{\gamma}v\xdashrightarrow{\beta}w\xdashrightarrow{\delta}x,
\end{align*}
the first results in $x\hookrightarrow o$ if and only if the second does. For example, $[\Assign~\Assign]=[\Assign]$.

Then, equivalence is compatible with sequencing:
\begin{lemma}
\label{lem:semigroup}
\rm
If $[\alpha]=[\alpha']$ and $[\beta]=[\beta']$, then $[\alpha\beta]=[\alpha'\beta']$.
\end{lemma}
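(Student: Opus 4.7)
The plan is to chase through the definition of $\sim$ twice, using the freedom to pick the left and right contexts $\gamma,\delta$ arbitrarily. Concretely, to show $[\alpha\beta]=[\alpha'\beta']$ I must verify, for every pair $\gamma,\delta\in\Sigma^*$, that
\[
  \gamma\alpha\beta\delta\in\mathcal{L}(C_{\text{pt}})\iff\gamma\alpha'\beta'\delta\in\mathcal{L}(C_{\text{pt}}).
\]
I will establish this biconditional by inserting an intermediate string $\gamma\alpha'\beta\delta$ and arguing each half separately.

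For the first half, I would apply the hypothesis $[\alpha]=[\alpha']$ with the specific choice of contexts $(\gamma,\beta\delta)$: the definition instantiated at these contexts yields $\gamma\alpha(\beta\delta)\in\mathcal{L}(C_{\text{pt}})\iff\gamma\alpha'(\beta\delta)\in\mathcal{L}(C_{\text{pt}})$. For the second half, I would apply the hypothesis $[\beta]=[\beta']$ with contexts $(\gamma\alpha',\delta)$: this gives $(\gamma\alpha')\beta\delta\in\mathcal{L}(C_{\text{pt}})\iff(\gamma\alpha')\beta'\delta\in\mathcal{L}(C_{\text{pt}})$. Chaining the two biconditionals and using associativity of concatenation in $\Sigma^*$ gives precisely the required equivalence. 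Since $\gamma$ and $\delta$ were arbitrary, this means $\alpha\beta\sim\alpha'\beta'$, which is exactly $[\alpha\beta]=[\alpha'\beta']$.

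There is essentially no obstacle here: the lemma is the standard fact that the two-sided syntactic congruence defined by any language $L\subseteq\Sigma^*$ is a monoid congruence on the free monoid $\Sigma^*$, and the proof above is just the textbook verification specialized to $L=\mathcal{L}(C_{\text{pt}})$. The only thing to be careful about is that the definition of $\sim$ given in the excerpt quantifies over \emph{both} a left context $\gamma$ and a right context $\delta$ simultaneously (a two-sided congruence), which is what makes the two-step swap legal; a one-sided definition would not suffice for this lemma, but it would also not be the right notion for the subsequent use in the proof of Propositions~\ref{prop:conclusion} and~\ref{prop:premise}.
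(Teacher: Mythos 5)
Your proof is correct and is exactly the paper's argument: the paper's one-line proof chains $\gamma\alpha\beta\delta\Leftrightarrow\gamma\alpha'\beta\delta\Leftrightarrow\gamma\alpha'\beta'\delta$, which is precisely your two instantiations of the congruence with contexts $(\gamma,\beta\delta)$ and $(\gamma\alpha',\delta)$. Your write-up just makes the choice of contexts explicit.
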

\begin{proof}
By definition, $\gamma\alpha\beta\delta\Leftrightarrow\gamma\alpha'\beta\delta\Leftrightarrow\gamma\alpha'\beta'\delta$.
\end{proof}

In particular, Lemma~\ref{lem:semigroup} shows that sequencing is well-defined for equivalence classes:
\begin{align}
\label{eqn:quotient}
[\alpha]~[\beta]=[\alpha\beta],
\end{align}
since different choices $\alpha'\in[\alpha]$ and $\beta'\in[\beta]$ yield the same equivalence class, i.e., $[\alpha\beta]=[\alpha'\beta']$. Abstractly, $\Sigma^*$ is a semigroup, with sequencing as the semigroup operation; then, Lemma~\ref{lem:semigroup} shows the equivalence relation is compatible with the semigroup operation, so the quotient $\Sigma/\sim$ is a semigroup with semigroup operation (\ref{eqn:quotient}).

For convenience, we let $\phi$ denote an element of the equivalence class of strings such that
\begin{align}
\text{for all }\gamma,\delta\in\Sigma^*,\sss\gamma\phi\delta\not\in\mathcal{L}(C_{\text{pt}}).
\end{align}
In other words, $[\phi]$ describes sequences of statements that can never be completed to a valid flows-to path.

\subsection{Proofs of Propositions~\ref{prop:conclusion} \&~\ref{prop:premise}}

Now, we describe how to prove that under the conditions of Proposition~\ref{prop:conclusion}, $[\alpha_i]=[\tilde{\alpha}_i]$, which suffices to prove the proposition. We focus on the case $\tilde{\alpha}_i=\Assign$; the other cases are similar. We need the following technical lemma (we give a proof in Appendix~\ref{sec:assignproof}):
\begin{lemma}
\label{lem:assign}
\rm
For any $\alpha\in\Sigma_{\text{free}}^*$, we have
\begin{align*}
[\Assign]~[\alpha]~[\Assign]\in\{[\Assign],~[\phi]\}.
\end{align*}
\end{lemma}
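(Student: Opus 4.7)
The plan is to prove Lemma~\ref{lem:assign} by a case split on the alphabet content of $\alpha$. I claim that $[\Assign~\alpha~\Assign] = [\Assign]$ whenever $\alpha \in \{\Assign\}^*$, while $[\Assign~\alpha~\Assign] = [\phi]$ whenever $\alpha$ contains at least one symbol from $\{\overline{\Assign}, \New, \overline{\New}\}$. This dichotomy immediately yields the lemma, since $\Sigma_{\text{free}} = \{\Assign, \overline{\Assign}, \New, \overline{\New}\}$.

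For the case $\alpha = \Assign^n$, an easy induction on $n$ using $\Transfer \to \varepsilon$ and $\Transfer \to \Transfer~\Assign$ shows $\Transfer \xRightarrow{*} \Assign^k$ for every $k \ge 0$. To promote this to $[\Assign^{n+2}] = [\Assign]$, I would argue that any parse tree of $\gamma~\Assign~\delta$ in $\mathcal{L}(C_{\text{pt}})$ contains a subtree rooted at a $\Transfer$ nonterminal that produces the single $\Assign$ via an application of $\Transfer \to \Transfer~\Assign$; I can lengthen or shorten the run of $\Assign$'s produced by this subtree by iterating that same production, which supplies both directions of the equivalence. By Lemma~\ref{lem:semigroup}, this suffices to conclude $[\Assign~\Assign^n~\Assign] = [\Assign]$.

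The main work is the second case. Fix $\alpha$ containing a distinguished symbol $\sigma^\star \in \{\overline{\Assign}, \New, \overline{\New}\}$, suppose for contradiction that $w = \gamma~\Assign~\alpha~\Assign~\delta \in \mathcal{L}(C_{\text{pt}})$ for some $\gamma, \delta$, and fix a parse tree. I rely on three structural facts about $C_{\text{pt}}$. First, every string in $\mathcal{L}(C_{\text{pt}})$ begins with $\New$, because the sole production for $\FlowsTo$ is $\FlowsTo \to \New~\Transfer$; in particular $\gamma$ is nonempty. Second, every occurrence of $\Store[f]$ in $w$ is matched with a later occurrence of $\Load[f]$ with the intervening substring derived from $\Alias$, and every $\overline{\Load[f]}$ with a later $\overline{\Store[f]}$; these brackets form a laminar nesting structure. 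Third, every $\Assign$ is the last terminal of some $\Transfer$ subtree (via $\Transfer \to \Transfer~\Assign$), every $\overline{\Assign}$ is the first terminal of some $\overline{\Transfer}$ subtree (via $\overline{\Transfer} \to \overline{\Assign}~\overline{\Transfer}$), and every $\overline{\New}/\New$ pair apart from the initial $\New$ is introduced jointly by an application of $\Alias \to \overline{\Transfer}~\overline{\New}~\New~\Transfer$. Since $\alpha \in \Sigma_{\text{free}}^*$ contains no $\Store$ or $\Load$ symbol, the bracket containing $\sigma^\star$ must straddle $\alpha$ with its $\Store[f]$ in $\gamma$ and its $\Load[f]$ in $\delta$. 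A bracket-depth bookkeeping argument then forces both flanking $\Assign$'s to lie inside this same bracket at the same nesting level as $\sigma^\star$; they must therefore be produced inside the $\overline{\Transfer}$ or $\Transfer$ of the immediately enclosing $\Alias$. Instantiating the three subcases $\sigma^\star \in \{\overline{\Assign}, \New, \overline{\New}\}$ and applying the adjacency constraints imposed by $\overline{\Transfer} \to \overline{\Assign}~\overline{\Transfer}$ and $\Alias \to \overline{\Transfer}~\overline{\New}~\New~\Transfer$ produces the contradiction in each subcase.

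The main obstacle is formalizing the bracket-depth argument and executing the adjacency contradictions cleanly for all three offending symbols. The subtle point is that $\Assign$ can be produced either at the top level of a $\Transfer$ or deep inside an $\Alias$ via its tail $\New~\Transfer$, so the two flanking $\Assign$'s could a priori originate from different rungs of the bracket nesting. I would package the required bookkeeping as a small auxiliary lemma that characterizes, for any parse tree of $C_{\text{pt}}$, the matching partners of each $\Store/\Load$ and $\overline{\Load}/\overline{\Store}$ symbol, together with the invariant that no $\Assign$ or $\overline{\Assign}$ can straddle such a matching boundary without being strictly nested inside it. With that lemma in hand, each of the three subcases closes by a short local adjacency argument.
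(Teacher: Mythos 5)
Your proposal is correct, but it is organized quite differently from the paper's proof. The paper first establishes Lemma~\ref{lem:multtable}, a complete multiplication table for the equivalence classes generated by $[\Assign]$, $[\overline{\Assign}]$, and $[\overline{\New}~\New]$ in the quotient semigroup, deduces that $[\alpha]$ for $\alpha\in\Sigma_{\text{free}}^*$ always lands in the finite set $\{[\phi],[\epsilon],[\Assign],[\overline{\Assign}],[\overline{\New}~\New]\}$, and then reads off Lemma~\ref{lem:assign} by multiplying on the left by $[\Assign]$; each table entry is proved by the same derivation-surgery you use for your first case (the only production mentioning $\Assign$ is $\Transfer\to\Transfer~\Assign$, so a run of $\Assign$'s can be lengthened or shortened). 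You instead prove the target statement directly via a dichotomy on whether $\alpha\in\{\Assign\}^*$. Your positive case coincides with the paper's argument; your negative case is where you diverge, and here you are working much harder than necessary. The claim that $[\Assign~\alpha~\Assign]=[\phi]$ when $\alpha$ contains some $\sigma^\star\in\{\overline{\Assign},\New,\overline{\New}\}$ needs no bracket-depth bookkeeping or straddling analysis at all: take the \emph{leftmost} such $\sigma^\star$; it is immediately preceded by an $\Assign$ in $\Assign~\alpha~\Assign$, and inspection of the productions shows the symbol immediately preceding any occurrence of $\overline{\Assign}$ (introduced only by $\overline{\Transfer}\to\overline{\Assign}~\overline{\Transfer}$), $\overline{\New}$, or a non-initial $\New$ (both introduced only by $\Alias\to\overline{\Transfer}~\overline{\New}~\New~\Transfer$) must lie in $\{\Store[f],\Load[f],\overline{\Store[f]},\overline{\Load[f]},\overline{\Assign},\overline{\New}\}$ and can never be $\Assign$; hence no completion $\gamma~\Assign~\alpha~\Assign~\delta$ is in $\mathcal{L}(C_{\text{pt}})$. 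This local adjacency observation closes all three subcases at once and discharges the auxiliary lemma you were planning. The trade-off between the two routes is reusability: the paper's multiplication table is invoked again later (e.g., in the proof of Lemma~\ref{lem:p2}), whereas your dichotomy, while shorter once simplified, yields only this one statement.
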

With this lemma, since $\tilde{\alpha}_i=\Assign$, $w_{m_i}=r_{m_i}$ and $z_{m_{i+1}}=p_{m_i}$, so the path $w_{m_i}\xdashrightarrow{\alpha_i}z_{m_{i+1}}$ has form
\begin{align*}
w_{m_i}=r_{m_i}\xrightarrow{\Assign}y_i\xdashrightarrow{\alpha_i'}x_{i+1}\xrightarrow{\Assign}p_{m_{i+1}}=z_{m_{i+1}},
\end{align*}
where $\alpha_i=\Assign~\alpha_i'~\Assign$. By Lemma~\ref{lem:assign},
\begin{align*}
[\alpha_i]=[\Assign]~[\alpha_i']~[\Assign]\in\{[\Assign],~[\phi]\}.
\end{align*}
Since $(\New~\alpha)\in\mathcal{L}(C_{\text{pt}})$, we cannot have $[\alpha_i]=[\phi]$, so
\begin{align*}
[\alpha_i]=[\Assign]=[\tilde{\alpha}_i],
\end{align*}
as claimed. We have also proven the claim in Proposition~\ref{prop:premise} that $A_i\xRightarrow{*}\alpha_i$ (with $A_i=\Transfer$) also follows. The other claims in Propositions~\ref{prop:conclusion} \&~\ref{prop:premise} follow similarly.~$\square$

\subsection{Reduction of Theorem~\ref{THM:EQUIV} to Theorem~\ref{THM:EQUIVfree}}

To handle field accesses, we use the fact that pairs of terminals $(\Store[f],\Load[f])$ and $(\overline{\Load[f]},\overline{\Store[f]})$ in strings $\alpha\in\mathcal{L}(C_{\text{pt}})$ are matching. Therefore, can identify an inner-most nested pair $(\sigma,\tau)$ such that the string $\beta$ between $\sigma$ and $\tau$ contains no field accesses, i.e., $\beta\in\Sigma_{\text{free}}$. Furthermore, by Assumption~\ref{assump:disjoint}, library field accesses and program field accesses do not match one another. In particular, the set of matching program field accesses is
\begin{align*}
\Delta_{\text{prog}}=\bigcup_{f\in\mathcal{F}_{\text{prog}}}\{(\Store[f],\Load[f]),~(\overline{\Load[f]},\overline{\Store[f]})\}.
\end{align*}
\begin{lemma}
\label{lem:min}
\rm
For any $\alpha\in\mathcal{L}(C_{\text{pt}})$, either $\alpha\in(\Sigma_{\text{free}}\cup\Sigma_{\text{lib}})^*$, or there exists a pair of terminals $(\sigma,\tau)\in\Delta_{\text{prog}}$ such that $\alpha=\gamma\sigma\beta\tau\delta$, where $\gamma,\delta\in\Sigma^*$ and $\beta\in(\Sigma_{\text{free}}\cup\Sigma_{\text{lib}})^*$.
\end{lemma}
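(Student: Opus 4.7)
The plan is to first establish a structural bracket-nesting property of strings in $\mathcal{L}(C_{\text{pt}})$: in any $\alpha\in\mathcal{L}(C_{\text{pt}})$, the field-access terminals (those in $\Sigma_{\text{prog}}\cup\Sigma_{\text{lib}}$) occur as properly nested matched pairs, where each pair is either from $\Delta_{\text{prog}}$ or from $\Delta_{\text{lib}}=\bigcup_{f\in\mathcal{F}_{\text{lib}}}\{(\Store[f],\Load[f]),(\overline{\Load[f]},\overline{\Store[f]})\}$. This is proved by induction on the derivation tree of $\alpha$, using the observation that every production of $C_{\text{pt}}$ that introduces a field-access terminal introduces it together with its matching partner surrounding an $\Alias$-subderivation: namely $\Transfer\to\Transfer~\Store[f]~\Alias~\Load[f]$ and $\overline{\Transfer}\to\overline{\Load[f]}~\Alias~\overline{\Store[f]}~\overline{\Transfer}$. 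Hence every field-access terminal in $\alpha$ can be paired with a unique partner, and the resulting collection of pairs is properly nested.

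Next, assume $\alpha\notin(\Sigma_{\text{free}}\cup\Sigma_{\text{lib}})^*$, so $\alpha$ contains at least one terminal from $\Sigma_{\text{prog}}$. By the nesting property, this terminal belongs to a unique program pair $(\sigma,\tau)\in\Delta_{\text{prog}}$. Among all program pairs occurring in $\alpha$, I would pick one whose nesting depth in the program-pair hierarchy is maximal, and write $\alpha=\gamma\sigma\beta\tau\delta$ by splitting at the positions of $\sigma$ and $\tau$. By the choice of $(\sigma,\tau)$ as innermost among program pairs, no program pair lies strictly inside it, so $\beta$ contains no terminal from $\Sigma_{\text{prog}}$.

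To finish, I invoke Assumption~\ref{assump:disjoint}: since $\mathcal{F}_{\text{share}}=\varnothing$, the alphabets $\Sigma_{\text{prog}}$ and $\Sigma_{\text{lib}}$ are disjoint, so every field-access terminal in $\beta$ is either in $\Sigma_{\text{prog}}$ (ruled out by the previous paragraph) or in $\Sigma_{\text{lib}}$. All other terminals in $\beta$ lie in $\Sigma_{\text{free}}$ by definition of $\Sigma$. Therefore $\beta\in(\Sigma_{\text{free}}\cup\Sigma_{\text{lib}})^*$, as required.

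\textbf{Main obstacle.} The only nontrivial step is verifying the bracket-nesting property across all productions of $C_{\text{pt}}$; in particular, since $\Alias$ unfolds via $\overline{\Transfer}~\overline{\New}~\New~\Transfer$, one must check that brackets introduced by the outer $\Transfer$ or $\overline{\Transfer}$ production cleanly enclose whatever brackets the inner derivation contributes, so that pairing is well-defined. This is a routine induction on the derivation tree, enumerating the finitely many productions; once it is in place, the remainder of the argument simply reads off the definition of "innermost program pair" and applies Assumption~\ref{assump:disjoint}.
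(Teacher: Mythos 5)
Your proof is correct and takes essentially the same route as the paper's: both rest on the observation that, once the $\Sigma_{\text{free}}$ symbols are ignored, $C_{\text{pt}}$ is a parenthesis-matching grammar whose field-access terminals form properly nested matched pairs, and both then isolate an innermost $\Delta_{\text{prog}}$ pair whose interior contains no program field accesses --- the paper reaches that innermost pair by an induction on the length of $\alpha$ that recursively descends into a matched pair, whereas you establish the nesting property once by induction on the derivation tree and then select a deepest program pair directly, which is a purely presentational difference. The appeal to Assumption~\ref{assump:disjoint} to keep $\Sigma_{\text{prog}}$ and $\Sigma_{\text{lib}}$ apart is used identically in both arguments.
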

The next step is to characterize $[\sigma\beta\tau]$:
\begin{lemma}
\label{lem:match}
\rm
For any $(\sigma,\tau)\in\Delta_{\text{prog}}$ and $\beta\in(\Sigma_{\text{free}}\cup\Sigma_{\text{lib}})^*$,
\begin{align*}
[\sigma]~[\beta]~[\tau]\in\{[\Assign],~[\phi]\}.
\end{align*}
\end{lemma}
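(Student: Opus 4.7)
The plan is to mirror the argument used for Lemma~\ref{lem:assign} and obtain a dichotomy depending on whether $\Alias \xRightarrow{*} \beta$. I would split on the two shapes in $\Delta_{\text{prog}}$. Treat the forward pair $(\sigma,\tau)=(\Store[f],\Load[f])$ for $f\in\mathcal{F}_{\text{prog}}$ first; the backward pair $(\overline{\Load[f]},\overline{\Store[f]})$ is handled by the symmetric production $\overline{\Transfer}\to\overline{\Load[f]}~\Alias~\overline{\Store[f]}~\overline{\Transfer}$, yielding the corresponding assign-like class in place of $[\Assign]$. The crucial structural fact is that the only $C_{\text{pt}}$ production mentioning $\Store[f]$ is $\Transfer\to\Transfer~\Store[f]~\Alias~\Load[f]$, so in any parse tree, $\Store[f]$ is introduced simultaneously with a matching $\Load[f]$ and an intervening string derivable from $\Alias$. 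By Assumption~\ref{assump:disjoint}, $\mathcal{F}_{\text{prog}}\cap\mathcal{F}_{\text{lib}}=\varnothing$; since $\beta\in(\Sigma_{\text{free}}\cup\Sigma_{\text{lib}})^*$, neither $\Store[f]$ nor $\Load[f]$ appears anywhere in $\beta$.

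Next, for arbitrary $\gamma,\delta\in\Sigma^*$ such that $\gamma\sigma\beta\tau\delta\in\mathcal{L}(C_{\text{pt}})$, I would apply a Dyck-style pairing argument to the parse tree: since the left/right brackets $\Store[f],\Load[f]$ are introduced only as a properly nested pair by a single production, the occurrence $\sigma$ must pair with the occurrence $\tau$. If instead $\sigma$ paired with some $\Load[f]$ inside $\delta$, then $\tau$ would be a closing bracket without a preceding mate (there is no $\Store[f]$ in $\beta$ by the above, and any candidate mate further right in $\delta$ would force a crossing pairing which no parse tree realizes). This forces the intermediate string $\beta$ to be derivable from $\Alias$, establishing the necessary condition for any parse to exist at all.

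Putting it together yields the dichotomy. If $\Alias\xRightarrow{*}\beta$, then any parse of $\gamma\Assign\delta$ (which applies $\Transfer\to\Transfer~\Assign$ at some node) can be converted into a parse of $\gamma\sigma\beta\tau\delta$ by substituting in the production $\Transfer\to\Transfer~\Store[f]~\Alias~\Load[f]$ at that node, and the previous paragraph gives the converse; hence $[\sigma\beta\tau]=[\Assign]$. If $\Alias\not\xRightarrow{*}\beta$, then the bracket-matching argument rules out any parse of $\gamma\sigma\beta\tau\delta$, so $[\sigma\beta\tau]=[\phi]$. The main obstacle is justifying the ``brackets cannot cross'' claim for this specific grammar; I would discharge it by a short induction on the parse tree, noting that $\Store[f]$ and $\Load[f]$ are always emitted symmetrically around an $\Alias$-derived middle, so any pairing recovered from the parse tree is automatically nested. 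The backward case proceeds identically using $\overline{\Transfer}\to\overline{\Load[f]}~\Alias~\overline{\Store[f]}~\overline{\Transfer}$ as the unique introducing production, and collapses $\sigma\beta\tau$ to the backward-assign class.
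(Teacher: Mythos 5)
Your argument is essentially the paper's own: both proofs rest on the facts that $\Store[f]$ (resp.\ $\overline{\Load[f]}$) is introduced only by the single production $\Transfer\to\Transfer~\Store[f]~\Alias~\Load[f]$ (resp.\ its barred analogue), that Assumption~\ref{assump:disjoint} keeps program-field brackets out of $\beta$ so that $\sigma$ must pair with $\tau$ and the intervening $\beta$ must be derivable from $\Alias$, and on swapping that production with $\Transfer\to\Transfer~\Assign$ to pass between derivations of $\gamma\sigma\beta\tau\delta$ and of $\gamma~\Assign~\delta$ --- your explicit Dyck-style nesting argument merely spells out what the paper compresses into ``there must exist a derivation of the form\dots''. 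One substantive point in your favor: you correctly note that the backward pair $(\overline{\Load[f]},\overline{\Store[f]})$ collapses to $[\overline{\Assign}]$ rather than $[\Assign]$ (consistent with the paper's own Lemma~\ref{lem:p1}, and indeed $[\overline{\Assign}]\neq[\Assign]$ by Lemma~\ref{lem:multtable}), so the statement should really read $[\sigma]~[\beta]~[\tau]\in\{[\Assign],[\overline{\Assign}],[\phi]\}$; the paper hides this under ``the case \dots is similar,'' and the downstream reduction should substitute $\overline{\Assign}$ in that case, a harmless but real correction.
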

Finally, $\beta$ must be an aliasing relation:
\begin{lemma}
\label{lem:field}
\rm
For any $\beta\in\Sigma^*$,
\begin{align*}
[\Store[f]]~[\beta]~[\Load[f]]=[\Assign]&\Rightarrow[\beta]=[\overline{\New}~\New] \\
[\overline{\Load[f]}]~[\beta]~[\overline{\Store[f]}]=[\Assign]&\Rightarrow[\beta]=[\overline{\New}~\New].
\end{align*}
\end{lemma}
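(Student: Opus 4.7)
The plan is to prove only the first implication; the second follows by the symmetric argument that interchanges each terminal with its bar counterpart and swaps the roles of $\Transfer$ and $\overline{\Transfer}$, using $\overline{\Transfer} \to \overline{\Load[f]}\,\Alias\,\overline{\Store[f]}\,\overline{\Transfer}$ in place of $\Transfer \to \Transfer\,\Store[f]\,\Alias\,\Load[f]$.

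First, I would instantiate the hypothesis $[\Store[f]\,\beta\,\Load[f]] = [\Assign]$ at the context $\gamma = \New$, $\delta = \varepsilon$. Since $\New\,\Assign \in \mathcal{L}(C_{\text{pt}})$ via $\FlowsTo \to \New\,\Transfer \to \New\,\Transfer\,\Assign \to \New\,\Assign$, the hypothesis yields $\New\,\Store[f]\,\beta\,\Load[f] \in \mathcal{L}(C_{\text{pt}})$. Because $\Transfer \to \Transfer\,\Store[f]\,\Alias\,\Load[f]$ is the only production containing the terminal $\Store[f]$, the parse of this string is forced into a shape in which $\Alias \xRightarrow{*} \beta$. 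Applying the unique $\Alias$-production $\Alias \to \overline{\Transfer}\,\overline{\New}\,\New\,\Transfer$, I can factor $\beta = \beta_1\,\overline{\New}\,\New\,\beta_2$ with $\overline{\Transfer} \xRightarrow{*} \beta_1$ and $\Transfer \xRightarrow{*} \beta_2$.

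Second, I would show $[\beta] = [\overline{\New}\,\New]$, i.e., for all $\gamma, \delta \in \Sigma^*$, $\gamma\,\beta\,\delta \in \mathcal{L}(C_{\text{pt}})$ iff $\gamma\,\overline{\New}\,\New\,\delta \in \mathcal{L}(C_{\text{pt}})$. The $(\Leftarrow)$ direction is easy: a parse of $\gamma\,\overline{\New}\,\New\,\delta$ contains an $\Alias$-subtree covering $\mu_L\,\overline{\New}\,\New\,\mu_R$ with $\mu_L \in L(\overline{\Transfer})$ and $\mu_R \in L(\Transfer)$, and since both languages are closed under concatenation (easy induction on their productions), the subtree may be enlarged to cover $\mu_L\,\beta_1\,\overline{\New}\,\New\,\beta_2\,\mu_R$ while leaving the outer parse untouched. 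For $(\Rightarrow)$, I would proceed by mutual induction on derivation depth, proving simultaneously $[\beta_1\,\overline{\New}\,\New] = [\overline{\New}\,\New]$ for $\overline{\Transfer} \xRightarrow{*} \beta_1$, $[\overline{\New}\,\New\,\beta_2] = [\overline{\New}\,\New]$ for $\Transfer \xRightarrow{*} \beta_2$, and $[\alpha] = [\overline{\New}\,\New]$ for $\Alias \xRightarrow{*} \alpha$. The inductive steps reduce (after a bit of algebra using compatibility and the IH) to the two key equalities $[\overline{\Assign}\,\overline{\New}\,\New] = [\overline{\New}\,\New]$ and $[\overline{\Load[g]}\,\overline{\New}\,\New\,\overline{\Store[g]}\,\overline{\New}\,\New] = [\overline{\New}\,\New]$, together with their mirrors on the right. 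These are proved by inspecting any parse of the left-hand string: since the outer production around an $\Alias$-subtree forces its $\omega_L$ to end in $\Store[\cdot]$ or $\overline{\Load[\cdot]}$, and the inserted $\overline{\Assign}$ or $\overline{\Store[g]}$ is neither, the $\overline{\Transfer}$-portion $\mu_L$ of the subtree must absorb the insertion; I then peel off the corresponding trailing $R$-factor of $\mu_L$ (where $R = \overline{\Assign} \cup \overline{\Load[g]}\,L(\Alias)\,\overline{\Store[g]}$) to obtain $\mu_L' \in L(\overline{\Transfer})$ from which a parse of the shorter string is rebuilt.

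The main obstacle will be the bracket case of the $(\Rightarrow)$ step: I must rule out that the parse matches the inserted $\overline{\Store[g]}$ with some $\overline{\Load[g]}$ other than the inserted one (for instance, one nested inside $\alpha$). This boils down to a cancellation lemma—if $zy \in L(\overline{\Transfer})$ and $y \in L(\overline{\Transfer})$ then $z \in L(\overline{\Transfer})$—which holds here because each $R$-factor begins with a distinctive terminal ($\overline{\Assign}$ or $\overline{\Load[g]}$) and every $L(\overline{\Transfer})$-derivation keeps its $\overline{\Load[g]}/\overline{\Store[g]}$ pairs internally balanced, so the rightmost $\overline{\Store[g]}$ in any $L(\overline{\Transfer})$-prefix must match its closest unmatched $\overline{\Load[g]}$, yielding a unique factorization that aligns with any $L(\overline{\Transfer})$-suffix.
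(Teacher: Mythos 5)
Your proof is correct and shares its backbone with the paper's: both arguments first use the context $(\New,\varepsilon)$ together with the fact that $\Transfer\to\Transfer~\Store[f]~\Alias~\Load[f]$ is the only production containing $\Store[f]$ to conclude $\Alias\xRightarrow{*}\beta$, and then argue that every string derivable from $\Alias$ is equivalent to $\overline{\New}~\New$. The difference is in how the second half is discharged. The paper simply cites Lemma~\ref{lem:p2}, whose proof is a structural induction resting on the multiplication table of Lemma~\ref{lem:multtable} and on Lemma~\ref{lem:p1}; you instead inline a mutual induction that peels off one $\overline{\Transfer}$- or $\Transfer$-factor at a time. Your version is arguably the more reliable one: your absorption identity $[\overline{\Assign}~\overline{\New}~\New]=[\overline{\New}~\New]$ is what the induction actually needs, whereas the entry $[\overline{\Assign}]~[\overline{\New}~\New]=[\overline{\Assign}]$ in Lemma~\ref{lem:multtable} cannot be right (in the context $\gamma=\New~\Store[f]$, $\delta=\Load[f]$, the string $\gamma~\overline{\Assign}~\overline{\New}~\New~\delta$ is in $\mathcal{L}(C_{\text{pt}})$ while $\gamma~\overline{\Assign}~\delta$ is not); you also make explicit the unique-factorization/cancellation property of $\mathcal{L}(\overline{\Transfer})$ that the paper's ``the remaining cases follow similarly'' leaves implicit. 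One caveat you share with the paper: the inference from $\New~\Store[f]~\beta~\Load[f]\in\mathcal{L}(C_{\text{pt}})$ to $\Alias\xRightarrow{*}\beta$ is not valid for arbitrary $\beta\in\Sigma^*$, because the $\Load[f]$ introduced by the forced production could be matched inside $\beta$ rather than at the end---e.g., $\beta=\overline{\New}~\New~\Load[f]~\Store[f]~\overline{\New}~\New$ satisfies the hypothesis of the lemma but not its conclusion. This is harmless where the lemma is invoked, since there $\beta\in(\Sigma_{\text{free}}\cup\Sigma_{\text{lib}})^*$ with $f\in\mathcal{F}_{\text{prog}}$, so $\beta$ contains no occurrence of $\Store[f]$ or $\Load[f]$; but it means both your proof and the paper's really establish the lemma only under that additional hypothesis.
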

Now, if $\alpha\in\Sigma_{\text{free}}^*$, we are done. Otherwise, putting the three lemmas together, we perform the following procedure:
\begin{enumerate}
\item By Lemma~\ref{lem:min}, we can write $\alpha=\gamma\sigma\beta\tau\delta$, where $(\sigma,\tau)\in\Delta_{\text{prog}}$ and $\beta\in(\Sigma_{\text{free}}\cup\Sigma_{\text{lib}})^*$, such that
\begin{align*}
y\xdashrightarrow{\gamma}v\xrightarrow{\sigma}w\xdashrightarrow{\beta}t\xrightarrow{\tau}u\xdashrightarrow{\delta}x.
\end{align*}
\item By Lemma~\ref{lem:match}, $[\sigma]~[\beta]~[\tau]=[\Assign]$.
\item By Lemma~\ref{lem:field}, $[\beta]=[\overline{\New}~\New]$.
\item By Theorem~\ref{THM:EQUIV}, we have $w\xrightarrow{\Alias}t\in\overline{G}(\tilde{S})$; therefore, $v\xdashrightarrow{\Transfer}u\in\overline{G}(\tilde{S})$ as well.
\item Recursively apply the procedure to $\alpha'=\gamma~\Assign~\delta$.
\end{enumerate}
This procedure must terminate, since $\alpha$ has finitely many pairs of store and load statements. Theorem~\ref{THM:EQUIV} follows.~$\square$

\section{Proof of Technical Lemmas}
\label{sec:lemproof}

We prove the technical lemmas used in Appendix~\ref{sec:mainproof}.

\subsection{Proof of Lemma~\ref{lem:assign}}
\label{sec:assignproof}

We first show the following lemma, which completely characterizes the subgroupoid of elements $\Sigma_{\text{free}}^*\subseteq\Sigma^*$:
\begin{lemma}
\label{lem:multtable}
\rm
We have
\begin{align*}
[\Assign]~[\Assign]&=[\Assign] \\
[\Assign]~[\overline{\Assign}]&=[\phi] \\
[\overline{\Assign}]~[\Assign]&=[\phi] \\
[\overline{\Assign}]~[\overline{\Assign}]&=[\overline{\Assign}] \\
[\Assign]~[\overline{\New}~\New]&=[\phi] \\
[\overline{\New}~\New]~[\Assign]&=[\overline{\New}~\New] \\
[\overline{\Assign}]~[\overline{\New}~\New]&=[\overline{\Assign}] \\
[\overline{\New}~\New]~[\overline{\Assign}]&=[\phi] \\
[\overline{\New}~\New]~[\overline{\New}~\New]&=[\phi].
\end{align*}
\end{lemma}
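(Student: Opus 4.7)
The plan is to verify each of the nine equations by a direct structural analysis of how free symbols can appear in derivations from $\FlowsTo$ in the grammar $C_{\text{pt}}$. The unifying tool is a normal-form observation: when restricted to $\Sigma_{\text{free}}$, the nonterminals derive only simple patterns, namely $\Transfer \Rightarrow^* \Assign^k$, $\overline{\Transfer} \Rightarrow^* \overline{\Assign}^k$, $\Alias \Rightarrow^* \overline{\Assign}^j \overline{\New}\,\New\,\Assign^k$, and $\FlowsTo \Rightarrow^* \New\,\Assign^k$. Consequently, any free-symbol substring $\alpha$ appearing in a member of $\mathcal{L}(C_{\text{pt}})$ must occupy a contiguous range whose enveloping subtree is one of these four nonterminals, glued into the full derivation via rules that surround it with matching non-free terminals ($\Store[f]/\Load[f]$, $\overline{\Load[f]}/\overline{\Store[f]}$, or the $\New/\overline{\New}$ pair inside $\Alias$).

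For the "productive" equations — those whose right-hand side is $[\Assign]$, $[\overline{\Assign}]$, or $[\overline{\New}\,\New]$ — I would establish mutual substitutability in both directions. Any derivation of $\gamma\alpha\delta$ with $\alpha$ on the left-hand side can be converted into a derivation of $\gamma\alpha'\delta$ with $\alpha'$ on the right-hand side (and conversely) by padding or pruning the self-recursive rule $\Transfer \to \Transfer\,\Assign$ (respectively $\overline{\Transfer} \to \overline{\Assign}\,\overline{\Transfer}$), and by exploiting the rigid shape of $\Alias \to \overline{\Transfer}\,\overline{\New}\,\New\,\Transfer$ to absorb or emit an adjacent $\Assign$ or $\overline{\Assign}$. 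For instance, $[\overline{\New}\,\New][\Assign] = [\overline{\New}\,\New]$ follows because a trailing $\Assign$ next to the $\New$ can always be pulled into the $\Transfer$ subtree of the enclosing $\Alias$, and conversely adding one is a legal use of $\Transfer \to \Transfer\,\Assign$.

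For the six equations whose right-hand side is $[\phi]$, I would show that the concatenation in question can never appear as a substring of any $\mathcal{L}(C_{\text{pt}})$ element. This is the crux of the lemma. The strategy is to tabulate, for each free terminal $\sigma \in \{\Assign, \overline{\Assign}, \New, \overline{\New}\}$, a first/follow-style inventory of the terminals that can legally sit immediately before or after $\sigma$ on the frontier of some $\FlowsTo$-derivation — computed by a case split on the four productions of $C_{\text{pt}}$. From this inventory, each forbidden adjacency is read off directly: for example, $\Assign$ is only emitted by $\Transfer \to \Transfer\,\Assign$ and so the terminal immediately to its right must begin the symbol that follows a $\Transfer$ in some production, which is $\varepsilon$, $\Assign$, $\Load[f]$, or $\overline{\Store[f]}$ — never $\overline{\Assign}$ or $\overline{\New}$. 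This rules out $[\Assign][\overline{\Assign}]$ and $[\Assign][\overline{\New}\,\New]$ in one stroke; the remaining dead cases follow symmetrically. The case $[\overline{\New}\,\New][\overline{\New}\,\New] = [\phi]$ requires one extra step: the inner $\New\,\overline{\New}$ adjacency is impossible because $\New$ only appears inside $\Alias \to \overline{\Transfer}\,\overline{\New}\,\New\,\Transfer$ or $\FlowsTo \to \New\,\Transfer$, and in both cases what follows $\New$ must be derivable from $\Transfer$, which cannot begin with $\overline{\New}$.

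The main obstacle will be the combinatorial bookkeeping in the six $[\phi]$ cases: a complete argument requires ruling out not just the obvious derivations but every derivation. I will handle this by doing the first/follow computation once, carefully, and then discharging all six impossibilities as table lookups. Once that inventory is in hand, the "productive" equations are essentially unfoldings of the $\Transfer$ and $\overline{\Transfer}$ self-recursions, and the nine equations fall out with only minor additional case analysis.
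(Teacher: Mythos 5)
Your overall strategy is sound and is essentially a systematic elaboration of what the paper itself does: the paper proves only $[\Assign]\,[\Assign]=[\Assign]$, by observing that $\Assign$ occurs in exactly one production ($\Transfer\to\Transfer~\Assign$) and performing derivation surgery, and dismisses the remaining eight equations with ``the others follow similarly.'' Your first/follow tabulation correctly discharges all six $[\phi]$ entries (the forbidden adjacencies $\Assign~\overline{\Assign}$, $\overline{\Assign}~\Assign$, $\Assign~\overline{\New}$, $\New~\overline{\Assign}$, and $\New~\overline{\New}$ are exactly the right things to check, and two strings that never occur as substrings of any word of $\mathcal{L}(C_{\text{pt}})$ are trivially congruent to $\phi$), and padding/pruning of the self-recursions handles $[\Assign]\,[\Assign]$, $[\overline{\Assign}]\,[\overline{\Assign}]$, and $[\overline{\New}~\New]\,[\Assign]$.

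There is, however, a genuine problem with the seventh equation, $[\overline{\Assign}]\,[\overline{\New}~\New]=[\overline{\Assign}]$, which your plan papers over. Your own technique --- absorbing the adjacent $\overline{\Assign}$ into the rigid production $\Alias\to\overline{\Transfer}~\overline{\New}~\New~\Transfer$ --- yields $[\overline{\Assign}~\overline{\New}~\New]=[\overline{\New}~\New]$, not $[\overline{\Assign}]$, and these two classes are genuinely distinct: taking $\gamma=\New~\Store[f]$ and $\delta=\Load[f]$, the string $\gamma~\overline{\Assign}~\overline{\New}~\New~\delta$ is in $\mathcal{L}(C_{\text{pt}})$ while $\gamma~\overline{\Assign}~\delta$ is not, since every string derived from $\Alias$ contains $\overline{\New}~\New$. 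The ``mutual substitutability in both directions'' you promise therefore cannot be established for this entry: pruning the $\overline{\New}~\New$ would force deleting the entire enclosing $\Alias$ subtree, which is a mandatory constituent of both productions in which $\Alias$ occurs. The equation as printed appears to be an error in the statement --- by the bar-and-reverse symmetry of the grammar it should mirror line six and read $[\overline{\Assign}]\,[\overline{\New}~\New]=[\overline{\New}~\New]$, which is also all that the closure argument in the proof of Lemma~\ref{lem:assign} actually requires. Your writeup should derive that corrected value explicitly and flag the discrepancy, rather than claim the stated identity follows from absorption; as written, the proof of that one entry would either fail or silently prove a different statement.
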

\begin{proof}
We show the first relation; the others follow similarly. First, we show that if $\gamma~\Assign~\delta\in\mathcal{L}(C_{\text{pt}})$, then $\gamma~\Assign~\Assign\delta\in\mathcal{L}(C_{\text{pt}})$. There must exist a derivation
\begin{align*}
\FlowsTo
\Rightarrow...
&\Rightarrow\Transfer~u_{\delta} \\
&\Rightarrow\Transfer~\Assign~u_{\delta} \\
&\Rightarrow... \\
&\Rightarrow\gamma~\Assign~\delta
\end{align*}
since the only production in $C_{\text{pt}}$ containing the terminal symbol $\Assign$ is $\Transfer\to\Transfer~\Assign$. Therefore, the following derivation also exists:
\begin{align*}
\FlowsTo
\Rightarrow...
&\Rightarrow\Transfer~\delta \\
&\Rightarrow\Transfer~\Assign~u_{\delta} \\
&\Rightarrow\Transfer~\Assign~\Assign~u_{\delta} \\
&\Rightarrow... \\
&\Rightarrow\gamma~\Assign~\Assign~\delta,
\end{align*}
i.e., $\gamma~\Assign~\Assign~\delta\in\mathcal{L}(C_{\text{pt}})$. By a similar argument, it follows that if $\gamma~\Assign~\Assign~\delta\in\mathcal{L}(C_{\text{pt}})$, then $\gamma~\Assign~\delta\in\mathcal{L}(C_{\text{pt}})$, so $[\Assign]~[\Assign]=[\Assign]$.
\end{proof}
\vspace{5pt}

\noindent It follows directly that if $\alpha\in\Sigma_{\text{free}}^*$, then
\begin{align*}
[\alpha]\in\{[\phi],~[\epsilon],~[\Assign],~[\overline{\Assign}],~[\overline{\New}~\New]\}.
\end{align*}
%
%
%
%
%
In particular, for $\alpha'\in\Sigma_{\text{free}}^*$, $[\Assign]~[\alpha']\in\{[\Assign],~[\phi]\}$, so the lemma follows by taking $\alpha'=\alpha~\Assign$.~$\square$

\subsection{Proof of Lemma~\ref{lem:min}}

If we replace the terminal symbols $\sigma\in\Sigma_{\text{free}}$ with $\epsilon$ in $C_{\text{pt}}$, then $C_{\text{pt}}$ is a parentheses matching grammar where each ``open parentheses'' $\Store[f]$ (resp., $\overline{\Load[f]}$) must be matched with a corresponding ``closed parentheses'' $\Load[f]$ (resp., $\overline{\Store[f]}$). Also, by Assumption~\ref{assump:disjoint}, $\Sigma_{\text{lib}}\cap\Sigma_{\text{prog}}=\varnothing$.

Now, we prove by induction on the length of $\alpha$. The base case $\alpha=\epsilon$ is clear. If $\alpha\in\Sigma^*$ does not contain a pair of matched parentheses $(\Store[f],\Load[f])\in\Sigma_{\text{prog}}^2$, then $\alpha\in(\Sigma_{\text{free}}\cup\Sigma_{\text{lib}})^*$, so we are done. Otherwise, for any such pair of matched parentheses, we can express $\alpha=\gamma\sigma\alpha'\tau\delta$. By induction, the lemma holds for $\alpha'$, so we can write $\alpha=\gamma'\sigma'\beta'\tau'\delta'$ as in the lemma. Therefore, we have
\begin{align*}
\alpha=(\gamma\sigma\gamma')\sigma'\beta'(\tau'\delta'\tau\delta),
\end{align*}
so the claim follows.~$\square$

\subsection{Proof of Lemma~\ref{lem:match}}

We show the case $(\sigma,\tau)=(\Store[f],\Load[f])$, where $f\in\mathcal{F}_{\text{prog}}$; the case $(\sigma,\tau)=(\overline{\Load[f]},\overline{\Store[f]})$ is similar. First, suppose that $\gamma\sigma\beta\tau\delta\in\mathcal{L}(C_{\text{pt}})$. Then, there must exist a derivation of form
\begin{align*}
\FlowsTo
\Rightarrow...
&\Rightarrow u_{\gamma}~\overline{\Transfer}~u_{\delta} \\
&\Rightarrow u_{\gamma}~\Transfer~\sigma~\Alias~\tau~u_{\delta} \\
&\Rightarrow... \\
&\Rightarrow\gamma\sigma\beta\tau\delta,
\end{align*}
so the following derivation exists:
\begin{align*}
\FlowsTo
\Rightarrow...
\Rightarrow u_{\gamma}~\overline{\Transfer}~u_{\delta} \\
&\Rightarrow u_{\gamma}~\Transfer~\Assign~u_{\delta} \\
&\Rightarrow... \\
&\Rightarrow\gamma~\Assign~\delta.
\end{align*}
The converse follows similarly, so the claim follows.~$\square$

\subsection{Proof of Lemma~\ref{lem:field}}

We show two preliminary lemmas.
\begin{lemma}
\label{lem:p1}
\rm
We have
\begin{align*}
[\Store[f]]~[\overline{\New}~\New]~[\Load[f]]&=[\Assign] \\
[\overline{\Load[f]}]~[\overline{\New}~\New]~[\Store[f]]&=[\overline{\Assign}].
\end{align*}
\end{lemma}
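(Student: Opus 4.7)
The plan is to prove the two equations separately, but by symmetric arguments. I will focus on the first: showing $\gamma\, \Store[f]\,\overline{\New}\,\New\,\Load[f]\, \delta \in \mathcal{L}(C_{\text{pt}})$ if and only if $\gamma\, \Assign\, \delta \in \mathcal{L}(C_{\text{pt}})$ for all $\gamma,\delta \in \Sigma^*$.

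The starting observation is that in $C_{\text{pt}}$, the terminal $\Store[f]$ occurs in exactly one production, namely $\Transfer \to \Transfer~\Store[f]~\Alias~\Load[f]$, and similarly $\Assign$ occurs in exactly one production, $\Transfer \to \Transfer~\Assign$. Both of these produce a $\Transfer$. In addition, the production $\Alias \to \overline{\Transfer}~\overline{\New}~\New~\Transfer$ together with the $\varepsilon$-productions for $\Transfer$ and $\overline{\Transfer}$ gives $\Alias \xRightarrow{*} \overline{\New}\,\New$. So there is a canonical correspondence between derivations that use $\Transfer \to \Transfer~\Assign$ and those that use $\Transfer \to \Transfer~\Store[f]~\Alias~\Load[f]$ with $\Alias$ deriving $\overline{\New}\,\New$.

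For the forward direction, I would fix any derivation witnessing $\gamma\, \Store[f]\,\overline{\New}\,\New\,\Load[f]\, \delta \in \mathcal{L}(C_{\text{pt}})$. Because $\Store[f]$ and $\Load[f]$ each appear in only one production, and the matching in that production forces them to bracket an $\Alias$ subderivation, the highlighted occurrence of $\Store[f]$ must be produced by an application of $\Transfer \to \Transfer~\Store[f]~\Alias~\Load[f]$ in which the interior $\Alias$ subderivation yields precisely $\overline{\New}\,\New$. Replacing that single step by $\Transfer \to \Transfer~\Assign$ yields a valid derivation of $\gamma\, \Assign\, \delta$. For the converse, given a derivation of $\gamma\, \Assign\, \delta$, I would locate the application of $\Transfer \to \Transfer~\Assign$ producing the relevant $\Assign$ and substitute $\Transfer \to \Transfer~\Store[f]~\Alias~\Load[f]$, expanding the interior $\Alias$ via $\Alias \to \overline{\Transfer}~\overline{\New}~\New~\Transfer$ with both $\overline{\Transfer}$ and $\Transfer$ taken to $\varepsilon$. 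The second equation is handled identically, using that $\overline{\Load[f]}$ and $\overline{\Store[f]}$ occur only in $\overline{\Transfer} \to \overline{\Load[f]}~\Alias~\overline{\Store[f]}~\overline{\Transfer}$ and that $\overline{\Assign}$ occurs only in $\overline{\Transfer} \to \overline{\Assign}~\overline{\Transfer}$.

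The main obstacle is the rigor of the uniqueness claim in the forward direction: I must be sure that in \emph{any} parse witnessing the left-hand string, the distinguished occurrence of $\Store[f]$ really is matched with the distinguished occurrence of $\Load[f]$ across the $\overline{\New}\,\New$ (rather than being matched with some $\Load[f]$ hidden inside $\gamma$ or $\delta$). This is where the parenthesis-matching structure of $C_{\text{pt}}$ (already used implicitly in Lemma~\ref{lem:min}) does the work: since the production that emits $\Store[f]$ simultaneously emits the matching $\Load[f]$ with an $\Alias$ between them, and since $\overline{\New}\,\New$ cannot itself be generated from any subderivation containing unmatched $\Store[f]$ or $\Load[f]$, the bracketing is forced. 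Once this matching is established, the substitution of productions is purely mechanical, and the semigroup equalities $[\Store[f]]\,[\overline{\New}\,\New]\,[\Load[f]] = [\Assign]$ and $[\overline{\Load[f]}]\,[\overline{\New}\,\New]\,[\overline{\Store[f]}] = [\overline{\Assign}]$ follow.
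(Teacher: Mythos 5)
Your proof is correct and follows essentially the same route as the paper's: both arguments observe that $\Store[f]$ and $\Assign$ each occur in exactly one production of $C_{\text{pt}}$, so any derivation of $\gamma~\Store[f]~\overline{\New}~\New~\Load[f]~\delta$ must apply $\Transfer\to\Transfer~\Store[f]~\Alias~\Load[f]$ with the interior $\Alias$ deriving $\overline{\New}~\New$, and that step can be swapped with $\Transfer\to\Transfer~\Assign$ (and vice versa), with the symmetric argument for the barred case. You are somewhat more explicit than the paper about why the bracketing forces the two distinguished terminals to be matched, which is a welcome addition rather than a divergence.
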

\begin{proof}
Suppose that $\gamma~\Store[f]~\overline{\New}~\New~\Load[f]~\delta\in\mathcal{L}(C_{\text{pt}})$. Then, we must have derivation
\begin{align*}
\FlowsTo
\Rightarrow...
&\Rightarrow u_{\gamma}~\Transfer~u_{\delta} \\
&\Rightarrow u_{\gamma}~\Store[f]~\Alias~\Load[f]~u_{\delta} \\
&\Rightarrow... \\
&\Rightarrow\gamma~\Store[f]~\alpha~\Load[f]~\delta,
\end{align*}
so we also have derivation
\begin{align*}
\FlowsTo
\Rightarrow...
&\Rightarrow u_{\gamma}~\Transfer~u_{\delta} \\
&\Rightarrow u_{\gamma}~\Assign~u_{\delta} \\
&\Rightarrow... \\
&\Rightarrow\gamma~\Assign~\Load[f]~\delta.
\end{align*}
Thus, $\gamma~\Assign~\delta\in\mathcal{L}(C_{\text{pt}})$. The converse follows similarly, as does the second claim.
\end{proof}

\begin{lemma}
\label{lem:p2}
\rm
For any $\beta\in\Sigma^*\setminus\{\epsilon\}$, we have
\begin{align*}
[\beta]=[\Assign]&\Leftrightarrow\beta\in\mathcal{L}(C_{\text{pt}},\Transfer) \\
[\beta]=[\overline{\Assign}]&\Leftrightarrow\beta\in\mathcal{L}(C_{\text{pt}},\overline{\Transfer}) \\
[\beta]=[\overline{\New}~\New]&\Leftrightarrow\beta\in\mathcal{L}(C_{\text{pt}},\Alias).
\end{align*}
\end{lemma}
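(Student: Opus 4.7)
The plan is to prove all three biconditionals simultaneously by induction on $|\beta|$, handling the forward direction with a witness-context argument and the reverse direction with a grammar-recursion argument that invokes the previously established Lemmas~\ref{lem:multtable} and~\ref{lem:p1}.

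For the forward direction ($\Rightarrow$), I will exhibit, for each canonical element $\beta_A \in \{\Assign, \overline{\Assign}, \overline{\New}~\New\}$, a context $(\gamma, \delta)$ such that any $\FlowsTo$-derivation of $\gamma \beta \delta$ is forced to produce $\beta$ from the intended nonterminal. For $[\beta]=[\Assign]$, take $\gamma=\New$ and $\delta=\varepsilon$; since $\FlowsTo\to\New~\Transfer$ is the unique production for $\FlowsTo$ and the leading $\New$ must come from this step, any derivation forces $\Transfer\Rightarrow^*\beta$. For $[\beta]=[\overline{\Assign}]$, take $\gamma=\New~\Store[f]$ and $\delta=\overline{\New}~\New~\Load[f]$; the paired $\Store[f]/\Load[f]$ force a $\Transfer\to\Transfer~\Store[f]~\Alias~\Load[f]$ step, and the internal $\overline{\New}~\New$ forces $\Alias\to\overline{\Transfer}~\overline{\New}~\New~\Transfer$, pinning $\beta$ as the yield of the $\overline{\Transfer}$ child. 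For $[\beta]=[\overline{\New}~\New]$, take $\gamma=\New~\Store[f]$ and $\delta=\Load[f]$, again using the $\Store[f]/\Load[f]$ pairing to force $\beta$ into the $\Alias$ slot.

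For the reverse direction ($\Leftarrow$), I will run a simultaneous strong induction on $|\beta|$ over the three claims. For the $\Transfer$-case, I decompose $\beta$ according to the production at the root of its $\Transfer$-derivation: either $\beta=\beta'\,\Assign$ with $\Transfer\Rightarrow^*\beta'$, or $\beta=\beta'\,\Store[f]~\alpha~\Load[f]$ with $\Transfer\Rightarrow^*\beta'$ and $\Alias\Rightarrow^*\alpha$. When $\beta'=\varepsilon$ the claim is either trivial ($\beta=\Assign$) or follows from the induction hypothesis on the strictly shorter $\alpha$ (giving $[\alpha]=[\overline{\New}~\New]$) combined with Lemma~\ref{lem:p1} ($[\Store[f]]\,[\overline{\New}~\New]\,[\Load[f]]=[\Assign]$); when $\beta'$ is non-empty, the induction hypothesis gives $[\beta']=[\Assign]$ and combining via Lemma~\ref{lem:multtable} yields $[\beta]=[\Assign]$. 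The cases for $\overline{\Transfer}$ and $\Alias$ are handled analogously: the $\overline{\Transfer}$-case uses the right-recursive productions $\overline{\Transfer}\to\overline{\Assign}~\overline{\Transfer}$ and $\overline{\Transfer}\to\overline{\Load[f]}~\Alias~\overline{\Store[f]}~\overline{\Transfer}$ together with Lemma~\ref{lem:p1}'s second identity, while the $\Alias$-case decomposes $\beta=\beta_1~\overline{\New}~\New~\beta_2$ via the unique production $\Alias\to\overline{\Transfer}~\overline{\New}~\New~\Transfer$ and combines the three factors using Lemma~\ref{lem:multtable}.

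The main obstacle is orchestrating the simultaneous induction, since the three claims are mutually recursive through the grammar: a $\Transfer$-derivation nests an $\Alias$-derivation, while an $\Alias$-derivation nests both a $\overline{\Transfer}$- and a $\Transfer$-derivation. The induction must therefore be on $|\beta|$ rather than on derivation height, so that every nested subderivation applies to a strictly shorter string and the hypothesis is available. A secondary subtlety is the edge case where one of the nested subderivations yields $\varepsilon$, to which Lemma~\ref{lem:p2} itself does not apply (its hypothesis excludes $\varepsilon$); these base cases are precisely what Lemma~\ref{lem:p1} is tailored to dispatch, which is why that lemma sits immediately before this one.
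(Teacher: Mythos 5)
Your proposal is correct and takes essentially the same route as the paper's proof: the forward direction by substituting $\beta$ for the canonical representative inside a witness context and arguing the derivation is forced, and the backward direction by induction over the grammar derivation, discharging each production via Lemma~\ref{lem:multtable} and Lemma~\ref{lem:p1}. The only differences are cosmetic---you induct on $|\beta|$ where the paper uses structural induction on the derivation (which also handles the mutual recursion, so the switch is not strictly necessary), and you spell out the $\varepsilon$-subderivation edge cases and the witness contexts for the $\overline{\Transfer}$ and $\Alias$ cases that the paper dismisses with ``the other cases follow similarly.''
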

\begin{proof}
We first show the forward implication. If $[\beta]=[\Assign]$, then $\New~\Assign\in\mathcal{L}(C_{\text{pt}})$, so $\New~\beta\in\mathcal{L}(C_{\text{pt}})$. Therefore, there must exist a derivation
\begin{align*}
\FlowsTo\Rightarrow\New~\Transfer~\Rightarrow...~\Rightarrow\New~\beta,
\end{align*}
so $\beta\in\mathcal{L}(C_{\text{pt}},\Transfer)$. The other two cases follow similarly. Now, we show the backward implication. Suppose that $\beta\in\mathcal{L}(C_{\text{pt}},\Transfer)$. We prove by structural induction on the derivation of $\beta$ from $\Transfer$. Since $\beta\not=\epsilon$, $\beta$ cannot have been produced by $\Transfer\Rightarrow\epsilon$. If $\beta$ is produced by $\Transfer\rightarrow\Transfer~\Assign$, then $\beta=\beta'\Assign$, where $\beta'\in\mathcal{L}(C_{\text{pt}},\Transfer)$. By induction, $[\beta']=[\Assign]$, so
\begin{align*}
[\beta]=[\beta']~[\Assign]=[\Assign]~[\Assign]=[\Assign],
\end{align*}
where the last step follows from Lemma~\ref{lem:multtable}. Next, if $\beta$ is produced using the production
\begin{align*}
\Transfer\rightarrow\Transfer~\Store[f]~\Alias~\Load[f],
\end{align*}
then $\beta=\beta'~\Store[f]~\beta''~\Load[f]$, where $\beta'\in\mathcal{L}(C_{\text{pt}},\Transfer)$ and $\beta''\in\mathcal{L}(C_{\text{pt}},\Alias)$. By induction, $[\beta']=[\Assign]$ and $[\beta'']=[\overline{\New}~\New]$, so
\begin{align*}
\beta
&=[\beta']~[\Store[f]]~[\beta'']~[\Load[f]] \\
&=[\Assign]~[\Store[f]]~[\overline{\New}~\New]~[\Load[f]] \\
&=[\Assign],
\end{align*}
where the last step follows from Lemma~\ref{lem:p1} and Lemma~\ref{lem:multtable}. The remaining cases follow similarly.
\end{proof}
\vspace{5pt}

\noindent Now, suppose that $[\Store[f]]~[\beta]~[\Load[f]]=[\Assign]$. Since
\begin{align*}
\New~\Store[f]~\overline{\New}~\New~\Load[f]\in\mathcal{L}(C_{\text{pt}}),
\end{align*}
we have
\begin{align*}
\New~\Store[f]~\beta~\Load[f]\in\mathcal{L}(C_{\text{pt}}),
\end{align*}
so the following derivation must exist:
\begin{align*}
\FlowsTo&\Rightarrow\New~\Store[f]~\Alias~\Load[f] \\
&\Rightarrow... \\
&\Rightarrow\New~\Store[f]~\beta~\Load[f],
\end{align*}
i.e., $\beta\in\mathcal{L}(C_{\text{pt}},\Alias)$. Finally, by Lemma~\ref{lem:p2}, we have $[\beta]=[\overline{\New}~\New]$. The second case follows similarly.~$\square$

\section{Proof of Correctness of Unit Test Synthesis}
\label{sec:witnessproof}

In this section, we sketch a proof of Theorem~\ref{THM:TESTSYNTHESIS}, which says that the unit tests synthesized by our algorithm are potential witnesses.

\subsection{General Condition}

First, we establish a general condition for $P$ to be a potential witness:
\begin{proposition}
\label{prop:witness}
\rm
Let $s$ be a path specification with premise $(e_1\in\overline{G})\wedge...\wedge(e_k\in\overline{G})$. A program $P$ is a potential witness of $s$ if the set of edges $\{e_1,...,e_k\}$ in the premise of $s$ exactly equals
\begin{align*}
  \left\{
  w\xrightarrow{A}z\in\overline{G}(P,\varnothing)\bigm\vert
  \begin{array}{l}
    w,z\in\mathcal{V}_{\text{lib}}\text{ and }\\
    A\in\{\Transfer,\overline{\Transfer},\Alias\}
  \end{array}
  \right\}.
\end{align*}
\end{proposition}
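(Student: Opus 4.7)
I would reduce the claim to the three sufficient conditions for being a potential witness stated in Section~\ref{sec:testsynthesis}: writing $e_0 := (z_1 \xrightarrow{A} w_k)$ for the conclusion edge of $s$, these are (i) $e_0 \notin \overline{G}(P, \varnothing)$; (ii) the premise $\{e_1, \ldots, e_k\} \subseteq \overline{G}(P, \{s\})$; and (iii) for every spec set $S$, if $e_0 \in \overline{G}(P, S)$ then $S \cup \{s\} \equiv S$. Once these hold, $P$ is a potential witness: if $P$ returns true then $e_0 \in \overline{G}_*(P) \subseteq \overline{G}(P, S')$ for every sound $S'$, so (iii) gives $S' \cup \{s\} \equiv S'$, making $\{s\}$ precise.

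Condition (ii) is immediate from the hypothesis and monotonicity, since each $e_i$ lies in $\overline{G}(P, \varnothing) \subseteq \overline{G}(P, \{s\})$. Condition (i) follows from a small case analysis: if $e_0 \in \overline{G}(P, \varnothing)$, the equality in the hypothesis forces $e_0 = e_i$ for some $i$, i.e., $z_1 = w_i$, $w_k = z_{i+1}$, and $A = A_i$. But by the definitions of $A$ and $A_i$ in Section~\ref{sec:path}, $A = \Transfer$ requires $z_1$ to be a parameter while $A_i = \Transfer$ requires $w_i$ to be a return value, and the $\Alias$ case is symmetric; so $z_1 = w_i$ is impossible and $e_0 \notin \overline{G}(P, \varnothing)$.

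The heart of the proof is condition (iii). Fix $S$ with $e_0 \in \overline{G}(P, S)$. The plan is to exploit that every rule in $S$ has premise and conclusion confined to library-interface edges over $\mathcal{V}_{\text{lib}}$. Combined with the hypothesis---which pins down the library-interface $\Transfer$/$\overline{\Transfer}$/$\Alias$ edges of $\overline{G}(P, \varnothing)$ to be \emph{exactly} $\{e_1, \ldots, e_k\}$---this should let me factor the derivation of $e_0$ in $\overline{G}(P, S)$ into an abstract proof tree whose leaves are the seeds $\{e_1, \ldots, e_k\}$ and whose internal nodes are applications of rules from $S$ operating on library-interface edges. The same proof tree then transports to any program $Q$ satisfying $\{e_1, \ldots, e_k\} \subseteq \overline{G}(Q, S)$, yielding $e_0 \in \overline{G}(Q, S)$. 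Since any would-be application of $s$ in $Q$ requires exactly these seeds, $s$ is redundant with $S$ on all programs, i.e., $S \cup \{s\} \equiv S$.

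The main obstacle is justifying this normalization: CFL productions in $\overline{G}(P, S)$ may route library-interface derivations through client-code $\Assign$ edges at library call sites, so a naive derivation of $e_0$ need not be program-independent. I would argue that any such excursion through client-specific structure can be absorbed into a purely library-interface derivation, leveraging the restricted shape of productions involving $\Assign$/$\New$ in $C_{\text{pt}}$ together with Assumption~\ref{assump:disjoint} (which prevents library and program field accesses from interleaving). The tools developed in Appendix~\ref{sec:mainproof}---equivalence classes of strings over $\Sigma_{\text{pt}}$ and the multiplication table in Lemma~\ref{lem:multtable}---look well-suited for carrying out this absorption; once it is established, the transport argument is routine and the three conditions together yield the claim.
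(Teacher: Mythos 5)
Your proposal is correct and takes essentially the same route as the paper's proof: both hinge on the observation that, because the library-interface $\Transfer$/$\overline{\Transfer}$/$\Alias$ edges of $\overline{G}(P,\varnothing)$ are \emph{exactly} the premise edges, any derivation of the conclusion in $\overline{G}(P,S)$ factors through those edges alone, so by monotonicity the conclusion is already in $\overline{G}(P',S)$ for any $P'$ in which $s$ would fire, giving $S\cup\{s\}\equiv S$. The only differences are presentational---the paper compresses the normalization step you flag into the one-line assertion that the graph for $P$ is contained in that for $P'$ and the analysis is monotone, while you route the argument through the three conditions of Section~\ref{sec:testsynthesis} and explicitly invoke the Appendix~\ref{sec:mainproof} machinery to justify the transport; your checks of conditions (i) and (ii) are correct but not load-bearing detail that the paper omits.
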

\begin{proof}
Let $P$ be a potential witness for $s$, and suppose that the conclusion of $s$ is $(e\in\overline{G})$. Let $S$ be a set of path specifications that computes $e$ for $P$, i.e., $e\in\overline{G}(P,S)$. We need to show that for any such $S$, $S\cup\{s\}$ is equivalent to $S$. Clearly, $S\cup\{s\}$ has higher or equal recall than $S$, so it suffices to show that it also has higher or equal precision than $S$. Consider an arbitrary program $P'$. Then, if $s$ is used during the computation $\overline{G}(P,S\cup\{s\})$, then at that point, the premise of $s$ holds for $\overline{G}$, i.e., $e_1,...,e_k\in\overline{G}$. Since the graph for $P$ is contained in the graph for $P'$, and our static analysis is monotone, we have $e\in\overline{G}(P,S)\subseteq\overline{G}(P',S)$, i.e., $e$ is computed without $s$. Thus, $\overline{G}(P',S\cup\{s\})=\overline{G}(P',S)$, so $S\cup\{s\}$ equivalent to $S$ as claimed.
\end{proof}

\subsection{Proof Sketch of Theorem~\ref{THM:TESTSYNTHESIS}}

Let $s=z_1\to w_1\dashrightarrow...\to z_k\dashrightarrow w_k$. Since the function calls are treated as no-ops by the static analysis (according to the definition of a potential witness), they do not add any edges to the extracted graph $G$ except for assignments to and from parameters and return values. The only other edges in the graph $G$ extracted from $P$ are those corresponding to the allocation statements added to $P$ in the initialization step.

First, we show that the edges in the premise of $s$ are contained in $\overline{G}(P,\varnothing)$. For an edge $w_i\to z_{i+1}$, there are three possibilities---either $A_i=\Transfer$, $A_i=\overline{\Transfer}$, or $A_i=\Alias$:
\begin{itemize}
\item {\bf Case $A_i=\Transfer$:} Then, $w_i$ is a return value and $z_{i+1}$ is a parameter. Then, the unit test synthesis algorithm assigns the return value of $m_i$ to the argument of $m_{i+1}$, i.e., the edges
\begin{align*}
w_i\xrightarrow{\Assign}x\xrightarrow{\Assign}z_{i+1}\in G,
\end{align*}
where $G$ is the graph extracted from $P$. Therefore, we have $(w_i\xrightarrow{\Transfer}z_{i+1})\in\overline{G}(P,\varnothing)$.
\item {\bf Case $A_i=\overline{\Transfer}$:} This case is analogous to the case $A=\Transfer$.
\item {\bf Case $A_i=\Alias$:} Then, $w_i$ and $z_{i+1}$ are both parameters. Then, $w_i$ and $z_{i+1}$ are both parameters. Then, the unit test synthesis algorithm allocates a new object and passes it as a parameter to each $m_i$ and $m_{i+1}$, i.e., the edges
\begin{align*}
o\xrightarrow{\New}x\xrightarrow{\Assign}w_i\in G\text{ and }o\xrightarrow{\New}x\xrightarrow{\Assign}z_{i+1}\in G.
\end{align*}
Therefore, we have $(w_i\xrightarrow{\Alias}z_{i+1})\in\overline{G}(P,\varnothing)$.
\end{itemize}

Second, consider all edges $w\xrightarrow{A_i}z$, where $w,z\in\mathcal{V}_{\text{lib}}$ and $A_i\in\{\Transfer,\overline{\Transfer},\Alias\}$, that are contained in the premise of $s$. By inspection, of the edges in $G$ as described above, the only additional edges in $\overline{G}(P,\varnothing)$ of this form are:
\begin{itemize}
\item The self-loops $z_i\xrightarrow{\Transfer}z_i$ and $w_i\xrightarrow{\Transfer}w_i$ (since there is a production $\Transfer\to\epsilon$ in the points-to grammar $C_{\text{pt}}$).
\item The backward edges $z_{i+1}\xrightarrow{\overline{A_i}}w_i$, where we have $A_i\in\{\Transfer,\overline{\Transfer}\}$).
\end{itemize}
If these edges were added to the premise of $s$ for $P$, then by Proposition~\ref{prop:witness}, we could conclude that $P$ is a potential witness of $s$. However, these edges are in $\overline{G}(P,S)$ for any program $P$ and any specifications $S$, so we can add them to the premise of $s$ without affecting its semantics. It follows that if $P$ is a witness for $s'$, and $s'$ is equivalent to $s$, then $P$ is a witness for $s$ as well. Therefore, $P$ is a witness for $s$ as claimed.~$\square$

\end{document}